\newtheorem{theorem}{Theorem}
\newtheorem{definition}{Definition}
\newtheorem{proposition}{Proposition}
\newtheorem{lemma}{Lemma}
\newtheorem{corollary}{Corollary}
\newtheorem{example}{Example}
\newcommand{\nf}{\mathrm{nf}}
\newcommand{\newj}{e }
\begin{document}
\title[Positive Characteristic Sets for Relational Pattern Languages]{Positive Characteristic Sets for Relational Pattern Languages}

\author*[1]{\fnm{S.\ Mahmoud} \sur{Mousawi}}\email{mousawi.s.m@gmail.com}

\author[1,2]{\fnm{Sandra} \sur{Zilles}}\email{sandra.zilles@uregina.ca}

\affil*[1]{\orgdiv{Department of Computer Science}, \orgname{University of Regina}, \orgaddress{\country{Canada}}}

\affil[2]{\orgdiv{Alberta Machine Intelligence Institute}}

%
%\titlerunning{Positive Characteristic Sets for Relational Pattern Languages}
% If the paper title is too long for the running head, you can set
% an abbreviated paper title here
%
%\author{S.\ Mahmoud Mousawi \and Sandra Zilles\orcidID{0000-0001-7834-8574}}
%
%\authorrunning{Mousawi and Zilles}
% First names are abbreviated in the running head.
% If there are more than two authors, 'et al.' is used.
%
%\institute{University of Regina, Canada\\
%\email{mousawi.s.m@gmail.com, sandra.zilles@uregina.ca}}
%         % typeset the header of the contribution
%
\abstract{In the context of learning formal languages, data about an unknown target language $L$ is given in terms of a set of \emph{(word,label)}\/ pairs, where a binary label indicates whether or not the given word belongs to $L$. A (polynomial-size) characteristic set for $L$, with respect to a reference class $\mathcal{L}$ of languages, is a set of such pairs that satisfies certain conditions allowing a learning algorithm to (efficiently) identify $L$ within~$\mathcal{L}$. In this paper, we introduce the notion of \emph{positive characteristic set}, referring to characteristic sets of only positive examples. These are of importance in the context of learning from positive examples only. We study this notion for classes of relational pattern languages, which are of relevance to various applications in string processing.}
\keywords{Learning formal languages, Relational pattern languages, Equivalence problem, Characteristic sets, Telltales.}

\maketitle     
\section{Introduction}

Many applications in machine learning and database systems, such as, e.g., the analysis of protein data~\cite{ArikawaMSKMS93}, the design of algorithms for program synthesis~\cite{Nix85}, or problems in pattern matching~\cite{CliffordHPS09}, deal with the problem of finding and describing patterns in sets of strings. From a formal language point of view, one way to address this problem is with the study for so-called pattern languages \cite{Angluin80patterns,Shinohara82}.

A pattern is a finite string of variables and terminal symbols. For instance, if the alphabet of terminal symbols is $\Sigma=\{a,b\}$, and $x_1$, $x_2$, \dots, denote variables, then $p=ax_1 ba x_1x_2$ is a pattern. The language generated by this pattern consists of all words that are obtained when replacing variables with finite words over~$\Sigma$, where multiple occurrences of the same variable are replaced by the same word. Angluin~\cite{Angluin80patterns} further required that no variable be replaced by the empty string, resulting in the notion of \emph{non-erasing}\/ pattern language, while Shinohara's \emph{erasing}\/ pattern languages allow erasing variables. For the pattern $p$ above, the non-erasing language is $\{aw_1 ba w_1w_2\mid w_1,w_2\in\Sigma^+\}$, while the erasing language is $\{aw_1 ba w_1w_2\mid w_1,w_2\in\Sigma^*\}$. The latter obviously contains the former.

The repetition of the variable $x_1$ in $p$ sets the words replaced for the two occurrences of $x_1$ in \emph{equality}\/ relation. Angluin already noted that relations other than equality could be studied as well--- a thought that was later on explored by Geilke and Zilles \cite{GeilkeZ11}. For example, the \emph{reversal}\/ relation requires that the words replaced for the two occurrences of $x_1$ are the reverse of one another, which can be useful for modeling protein folding or behaviour of genes~\cite{GeilkeZ11}. To the best of our knowledge, the first work to focus on specific relations aside from equality is that of Holte et al. \cite{HolteMZ22}, which studies the relations \emph{reversal}\/ as well as \emph{equal length}, in the non-erasing setting. As the name suggests, \emph{equal length}\/ stipulates that the words replaced for the two occurrences of $x_1$ in $p$ are of the same length. 

Holte et al.\ focused on the question whether the equivalence of two patterns can be decided efficiently. While deciding equivalence is quite simple for non-erasing pattern languages with the equality relation~\cite{Angluin80patterns}, the situation is more complex for reversal and equal length relations. Holte et al.'s proposed decision procedure, which tests only a small number of words for membership in the two pattern languages under consideration, has applications beyond deciding equivalence. In particular, the sets of tested words serve as so-called teaching sets in the context of machine teaching~\cite{ZhuSZR18} as well as characteristic sets in the context of grammatical inference from positive and negative data~\cite{Higuera97}.

We continue this line of research in two ways: firstly, we study reversal and equal length relations for \emph{erasing}\/ pattern languages; secondly, we focus on the concept of characteristic set rather than effectively deciding the equivalence problem. From that perspective, there is a direct connection of our work to the theory of learning relational pattern languages. Our contributions are as follows:

(a) We introduce the notion of \emph{positive characteristic sets}. Classical characteristic sets contain positive and negative examples, i.e., words over $\Sigma$ together with a label indicating whether or not the word belongs to a target language to be learned. While characteristic sets are crucial for studying the (efficient) learnability of formal languages from positive and negative data~\cite{Higuera97}, they are not suited to the study of learning from positive examples only. Positive characteristic sets are an adjustment to account for learning from only positive examples. 

(b) Perhaps not surprisingly, we show that the existence of positive characteristic sets is equivalent to the existence of telltale sets, which are known to characterize classes of languages learnable from positive data in Gold's \cite{Gold67} model of learning in the limit~\cite{Angluin80telltales}. Moreover, \emph{small}\/ positive characteristic sets yield \emph{small}\/ telltales. We thus make connections between grammatical inference from characteristic sets and learning in the limit using telltales.

(c) Holte et al.\ conjectured that the class of non-erasing pattern languages under the reversal relation possesses a family of small sets which, in our terminology, would be linear-size positive characteristic sets. We prove that, in the erasing case under the reversal relation, \emph{no}\/ family of positive characteristic sets exists (let alone a family of small such sets) when $|\Sigma|=2$. 

(d) For equal length, Holte et al.\ showed that, what we call a family of linear-size positive characteristic sets, exists for the class of non-erasing pattern languages, assuming $|\Sigma|\ge 3$. We first adapt their proof idea to show that this statement remains true in the erasing case. However, the same proof approach cannot be translated to binary alphabets. Our main result states that even if $|\Sigma|=2$, a non-trivial subclass of the class of erasing pattern languages under the equal length relation has linear-size positive characteristic sets (and thus also linear-size telltales). The proof techniques required for obtaining this result are vastly different from those used for the case $|\Sigma|\ge 3$. The subclass for which our result holds consists of all erasing pattern languages (under the equal length relation) for which the underlying patterns have all their terminal symbols occurring in blocks of at least three consecutive symbols (except maximal terminal blocks of the shape $a^nb$ or $ab^n$ where $\Sigma=\{a,b\}$) and satisfy an additional structural constraint on the  relations between variables. 

An earlier conference version of this paper claimed a stronger result, without limiting the class of patterns, see \cite[Theorem 2]{MousawiZ24}. However, the underlying proof was flawed; we here present a major change in the original argument, while establishing a weaker claim than the one made in \cite{MousawiZ24}.

\section{Notation and Preliminary Results}

Throughout this paper, for any finite set $M$ and any $n\in\mathbb{N}$, we use $M^{\ge n}$ to denote the set of all finite strings of length at least $n$ over $M$. Moreover, following standard notation, $M^+:=M^{\ge 1}$ denotes the set of all finite non-empty strings over $M$, while $M^*:=M^{\ge 0}$ refers to the set of all finite strings over $M$, including the empty string.

We use $\Sigma$ to denote an arbitrary finite alphabet. A language over $\Sigma$ is simply a subset of $\Sigma^*$, and every string in $\Sigma^*$ is called a word. We denote the empty string/word by $\varepsilon$. The reverse operation on a word $w=w_1\ldots w_l\in\Sigma^*$, with $l\in\mathbb{N}$ and $w_i\in\Sigma$, is defined by $w^{rev} := w_l \ldots w_1$. For any set $Z$ and any string $s\in Z^*$, we use $|s|$ to refer to the length of $s$, $s[1]$ to refer to the first symbol in $s$, and $s[-1]$ to refer to the last symbol in $s$. For $1\le i,j\le|s|$, we use $s[i:j]$ to denote the substring of length $j-i+1$ of $s$, starting at the $i$th and ending at the $j$th position of $s$. 

Let $L$ be a language over $\Sigma$. A set $S\subseteq\Sigma^*\times\{0,1\}$ is consistent with $L$ iff ($(s,1)\in S$ implies $s\in L$) and ($(s,0)\in S$ implies $s\notin L$). When learning a language $L$ from positive and negative examples (without noise), a learning algorithm is presented with a set $S\subseteq\Sigma^*\times\{0,1\}$ that is consistent with $L$; from this set it is supposed to identify $L$. We refer the reader to \cite{Higuera97} for details of learning formal languages; the present paper will focus on structural properties of classes of formal languages that allow for efficient learning of formal languages from \emph{only positive}\/ examples; in particular, we consider sets $S\subseteq\Sigma^*\times\{1\}$. 

\subsection{Positive Characteristic Sets}

While learning languages from positive examples is well-studied \cite{Angluin80telltales,Gold67,LangeZZ08}, we build a bridge between  learnability from positive examples (via so-called \emph{telltales}\/ \cite{Angluin80telltales}) and (efficient) learnability from positive and negative examples (via (polynomial-size) \emph{characteristic sets}\/ \cite{Higuera97}).

\medskip

\begin{definition}[adapted from \cite{Higuera97}]
Let $\mathcal{L}=(L_i)_{i\in\mathbb{N}}$ be a family of languages over $\Sigma$. A family $(C_i)_{i\in\mathbb{N}}$ of finite subsets of $\Sigma^*\times\{0,1\}$ is called a family of characteristic sets for $\mathcal{L}$ iff: 
\begin{enumerate}
    \item $C_i$ is consistent with $L_i$ for all $i\in\mathbb{N}$.
    \item If $L_i\ne L_j$, and $C_i$ is consistent with $L_j$, then $C_j$ is not consistent with $L_i$.
\end{enumerate}
If, in addition, $C_i\subseteq\Sigma^*\times\{1\}$ for all $i\in\mathbb{N}$, then $(C_i)_{i\in\mathbb{N}}$ is called a family of positive characteristic sets for $\mathcal{L}$. 
\end{definition}

\medskip

In the context of machine teaching, characteristic sets have also been called \emph{non-clashing teaching sets}\/ \cite{FallatKSSZ23}. 

Every family $\mathcal{L}=(L_i)_{i\in\mathbb{N}}$ possesses a family of characteristic sets; it suffices to include in $C_i$, for each $j<i$ with $L_j\ne L_i$, one example from the symmetric difference of $L_j$ and $L_i$, labelled according to $L_i$. In the literature, the \emph{size}\/ and \emph{effective computability}\/ of characteristic sets are therefore the main aspects of interest, and relate to notions of efficient learning of languages~\cite{Higuera97}. However, the mere \emph{existence}\/ of a family of positive characteristic sets is a non-trivial property. We will show below that it is equivalent to the existence of so-called telltales.

\medskip

\begin{definition}[\cite{Angluin80telltales}]
Let $\mathcal{L}=(L_i)_{i\in\mathbb{N}}$ be a family of languages over $\Sigma$. A family $(T_i)_{i\in\mathbb{N}}$ of finite subsets of $\Sigma^*$ is called a family of telltales for $\mathcal{L}$ iff: 
\begin{enumerate}
    \item $T_i\subseteq L_i$ for all $i\in\mathbb{N}$.
    \item If $i,j\in\mathbb{N}$, $L_i\ne L_j$, and $T_i\subseteq L_j$, then $L_j\not\subset L_i$.
\end{enumerate}
\end{definition}

For families  $\mathcal{L}=(L_i)_{i\in\mathbb{N}}$ with uniformly decidable membership, the existence of a family of telltales is equivalent to the learnability of $\mathcal{L}$ from positive examples only, in Gold's \cite{Gold67} model of learning in the limit~\cite{Angluin80telltales}. We connect telltales to characteristic sets by observing that the existence of a telltale family is equivalent to the existence of a family of positive characteristic sets.

\medskip

\begin{proposition}\label{prop:telltales}
Let $\mathcal{L}=(L_i)_{i\in\mathbb{N}}$ be a family of languages over $\Sigma$. Then the following two statements are equivalent.
\begin{enumerate}
    \item $\mathcal{L}$ possesses a family of positive characteristic sets.
    \item $\mathcal{L}$ possesses a family of telltales.
\end{enumerate}
\end{proposition}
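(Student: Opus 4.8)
The plan is to prove the equivalence by establishing both directions via explicit constructions that translate between the two kinds of families. The two notions are structurally similar—both are built from finite subsets associated to each language—but positive characteristic sets carry binary labels and satisfy a two-sided non-clashing condition, while telltales are unlabelled and satisfy a one-sided ``no proper subset'' condition. The key observation driving both directions is that in a positive characteristic set the negatively-labelled examples turn out to be redundant for the defining property, so morally a positive characteristic set and a telltale should carry the same information.

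First I would prove that (1) implies (2). Suppose $(C_i)_{i\in\mathbb{N}}$ is a family of positive characteristic sets. Since each $C_i\subseteq\Sigma^*\times\{1\}$, I would define $T_i:=\{s\mid (s,1)\in C_i\}$, the set of words appearing in $C_i$. Consistency of $C_i$ with $L_i$ (condition 1 of the characteristic-set definition) immediately gives $T_i\subseteq L_i$, which is condition 1 for telltales. For condition 2, suppose $L_i\ne L_j$ and $T_i\subseteq L_j$; then $C_i$ is consistent with $L_j$ (every positive example of $C_i$ lies in $L_j$, and there are no negative examples). By condition 2 of the characteristic-set definition, $C_j$ is not consistent with $L_i$; since $C_j$ likewise contains only positive examples, this means some word $s$ with $(s,1)\in C_j$ satisfies $s\notin L_i$, i.e.\ $T_j\not\subseteq L_i$. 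Hence $L_j\not\subseteq L_i$, which in particular rules out $L_j\subset L_i$, giving telltale condition 2.

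For the converse, (2) implies (1), I would start from a family of telltales $(T_i)_{i\in\mathbb{N}}$ and define $C_i:=\{(s,1)\mid s\in T_i\}$. Condition 1 of the telltale definition, $T_i\subseteq L_i$, gives consistency of $C_i$ with $L_i$. For the non-clashing condition, suppose $L_i\ne L_j$ and $C_i$ is consistent with $L_j$; since $C_i$ has only positive examples this means $T_i\subseteq L_j$, so by telltale condition 2 we get $L_j\not\subset L_i$. I then need to conclude that $C_j$ is not consistent with $L_i$. This is the step requiring the most care, since $L_j\not\subset L_i$ is weaker than $T_j\not\subseteq L_i$. I would argue by symmetry: if $C_j$ \emph{were} consistent with $L_i$, then $T_j\subseteq L_i$, and telltale condition 2 (with the roles of $i$ and $j$ swapped) would yield $L_i\not\subset L_j$. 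Combining $T_i\subseteq L_j$ with $L_j\not\subset L_i$, and $T_j\subseteq L_i$ with $L_i\not\subset L_j$, I would derive a contradiction with $L_i\ne L_j$: the inclusions $T_i\subseteq L_j$ and $T_j\subseteq L_i$ together with both non-strict-subset conclusions force $L_i=L_j$, contradicting the hypothesis. Working out exactly why these four facts collapse to $L_i=L_j$ is the main obstacle, and I expect it hinges on carefully using that neither $L_i\subset L_j$ nor $L_j\subset L_i$ holds while each contains the other's telltale.

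The overall structure is therefore two short symmetric arguments; the forward direction is essentially a direct unfolding of definitions, and the only genuine subtlety is reconciling the one-sided telltale condition with the two-sided non-clashing condition in the reverse direction. I would be careful to state explicitly at each step that the absence of negative examples is what makes ``consistency of $C_i$ with $L_j$'' coincide with the set-inclusion ``$T_i\subseteq L_j$'', since that equivalence is what lets the labelled and unlabelled conditions talk to each other.
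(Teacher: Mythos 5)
Your forward direction, from positive characteristic sets to telltales, is correct and matches the paper's argument in substance (you argue directly where the paper argues by contradiction, but the content is identical): since the $C_i$ carry only positive labels, consistency of $C_i$ with $L_j$ is literally the inclusion $T_i\subseteq L_j$, and the non-clashing condition then forces $T_j\not\subseteq L_i$, hence $L_j\not\subseteq L_i$, which is stronger than the required $L_j\not\subset L_i$.

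The reverse direction, however, has a genuine gap, and it sits exactly where you flagged ``the main obstacle.'' Taking $C_i:=T_i\times\{1\}$ does not work, and the four facts you accumulate --- $T_i\subseteq L_j$, $T_j\subseteq L_i$, $L_j\not\subset L_i$, $L_i\not\subset L_j$ --- do \emph{not} collapse to $L_i=L_j$. Under the hypothesis $L_i\ne L_j$, the two ``not a proper subset'' conclusions just say that $L_i$ and $L_j$ are incomparable, and incomparable languages can perfectly well each contain the other's telltale. Concretely, take $\Sigma=\{a,b,c\}$, $L_1=\{a,b\}$, $L_2=\{a,c\}$, with $T_1=T_2=\{a\}$; this is a valid telltale family (neither language is a proper subset of the other, so condition 2 holds vacuously in the relevant direction), yet $C_1=C_2=\{(a,1)\}$ is consistent with both languages, violating the non-clashing condition. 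The telltale condition is only a guard against proper-subset relationships; it says nothing about separating incomparable languages, whereas a characteristic-set family must separate \emph{every} pair of distinct languages. The paper repairs this by enlarging $C_i$ beyond the telltale: for each $j<i$ with $L_i\not\subseteq L_j$, it adds a positively labelled witness $s_j\in L_i\setminus L_j$. With these extra separators, the only way $C_i\cup C_j$ can be consistent with both languages is if $L_i\subseteq L_j$ (say, for $j<i$), and then $T_j\subseteq L_i\subseteq L_j$ together with the telltale property forces $L_i=L_j$. Your construction needs this augmentation; without it the claim is false as stated.
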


\begin{proof}
To show that Statement 1 implies Statement 2, let $(C_i)_{i\in\mathbb{N}}$ be a family of positive characteristic sets for $\mathcal{L}$. Define $C'_i=\{s\mid (s,1)\in C_i\}$ for all $i\in\mathbb{N}$. We argue that $(C'_i)_{i\in\mathbb{N}}$ is a family of telltales for $\mathcal{L}$.
To see this, suppose there are $i,j\in\mathbb{N}$ such that $L_i\ne L_j$, $C'_i\subseteq L_j$, and $L_j\subset L_i$. Then $C'_j\subset L_i$, so that $C_j$ is consistent with $L_i$. Further, $C'_i\subseteq L_j$ implies that $C_i$ is consistent with $L_j$. This is a contradiction to $(C_i)_{i\in\mathbb{N}}$ being a family of positive characteristic sets for $\mathcal{L}$. Hence no such $i,j\in\mathbb{N}$ exist, i.e., $(C'_i)_{i\in\mathbb{N}}$ is a family of telltales for $\mathcal{L}$.
    
To verify that Statement 2 implies Statement 1, let $(T_i)_{i\in\mathbb{N}}$ be a family of telltales for $\mathcal{L}$. The following (possibly non-effective) procedure describes the construction of a set $C_i$ for any $i\in\mathbb{N}$, starting with $C_i=\emptyset$:
\begin{itemize}
    \item For each $s\in T_i$, add $(s,1)$ to $C_i$.
    \item For each $j<i$, if $L_i\not\subseteq L_j$, pick any $s_j\in L_i\setminus L_j$ and add $(s_j,1)$ to $C_i$.
\end{itemize}
Obviously, $C_i$ is finite. Now suppose there are $i,j$ with $j<i$ and $L_i\ne L_j$, such that $C_i\cup C_j$ is consistent with $L_i$ and $L_j$. If $L_i\not\subseteq L_j$, then, by construction, $C_i$ contains $(s_j,1)$ for some $s_j\notin L_j$, so that $C_i$ is not consistent with $L_i$---a contradiction. So $L_i\subseteq L_j$. As $C_j$ is consistent with $L_i$, we have $T_j\subseteq L_i\subseteq L_j$. By the telltale property, this yields $L_i=L_j$---again a contradiction.
\end{proof}

While the existence of a family of characteristic sets is trivially fulfilled, not every family $\mathcal{L}=(L_i)_{i\in\mathbb{N}}$ has a family of positive characteristic sets. For example, it was shown that the family containing $\Sigma^*$ as well as each finite subset of $\Sigma^*$ does not possess a family of telltales~\cite{Angluin80telltales}. Therefore, by Proposition~\ref{prop:telltales}, it does not possess a family of positive characteristic sets. 

In our study below, the size of positive characteristic sets (if any exist) will be of importance. The term 
\emph{family of polynomial-size positive characteristic sets/telltales}\/ refers to a family of positive characteristic sets/telltales in which the size of the set $C_i$/$T_i$ assigned to language $L_i$ is bounded by a polynomial in the size of the underlying representation of $L_i$. The proof of Proposition~\ref{prop:telltales} now has an additional consequence of interest in the context of efficient learning:

\medskip

\begin{corollary}\label{cor:polysize}
    Let $\mathcal{L}=(L_i)_{i\in\mathbb{N}}$ be a family of languages over $\Sigma$. If $\mathcal{L}$ possesses a family of polynomial-size positive characteristic sets then $\mathcal{L}$ possesses a family of polynomial-size telltales.
\end{corollary}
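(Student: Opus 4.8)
The plan is to observe that the construction used in the first half of the proof of Proposition~\ref{prop:telltales} (the passage from positive characteristic sets to telltales) is \emph{size-preserving}, so that a polynomial size bound on the characteristic sets transfers immediately to the resulting telltales. No new combinatorial work is needed; the argument is essentially a bookkeeping observation layered on top of the proposition already proved.

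Concretely, I would start from the hypothesis: let $(C_i)_{i\in\mathbb{N}}$ be a family of positive characteristic sets for $\mathcal{L}$ whose sizes are bounded by a polynomial $p$ evaluated at the size of the underlying representation of $L_i$. Following exactly the construction in the proof of Proposition~\ref{prop:telltales}, I define $C'_i=\{s\mid (s,1)\in C_i\}$ for each $i\in\mathbb{N}$. That proof already establishes that $(C'_i)_{i\in\mathbb{N}}$ is a family of telltales for $\mathcal{L}$, so the only thing left to check is the quantitative claim that each $C'_i$ is of polynomial size.

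The key step is the size comparison between $C_i$ and $C'_i$. Since $(C_i)_{i\in\mathbb{N}}$ is a family of \emph{positive} characteristic sets, we have $C_i\subseteq\Sigma^*\times\{1\}$, so the projection $(s,1)\mapsto s$ is a bijection from $C_i$ onto $C'_i$. Consequently $|C'_i|=|C_i|$, and moreover every word occurring in $C'_i$ occurs verbatim in $C_i$. Under either natural reading of ``size''---the cardinality of the set, or the total length of the strings it contains---the size of $C'_i$ is therefore no larger than the size of $C_i$, and hence bounded by $p$ applied to the representation of $L_i$. This exhibits $(C'_i)_{i\in\mathbb{N}}$ as a family of polynomial-size telltales.

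I do not expect a substantial obstacle here. The only subtlety worth flagging explicitly is that the measure of ``size'' used for telltales must coincide with, or be dominated by, the one used for positive characteristic sets; this holds precisely because deleting the constant label $1$ neither introduces new words nor lengthens existing ones, so the projection cannot increase whichever size measure is in force.
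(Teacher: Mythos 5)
Your proposal is correct and matches the paper's intended argument exactly: the paper derives the corollary as an immediate consequence of the first half of the proof of Proposition~\ref{prop:telltales}, where $C'_i=\{s\mid (s,1)\in C_i\}$ is shown to be a telltale family, and the size bound carries over because dropping the label $1$ cannot increase the size of the set. Your explicit remark that the projection is a bijection (and hence size-preserving under either size measure) is the same bookkeeping observation the paper leaves implicit.
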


\medskip

This corollary motivates us to focus our study on positive characteristic sets rather than on telltales---bounds on the size of the obtained positive characteristic sets then immediately carry over to telltales.

%This paper is mainly concerned with the structure of families of (positive or not necessarily positive) characteristic sets, as well as the size of the respective characteristic sets, with a focus on relational pattern languages. The insights thus gained are of relevance to computational learning theory, as they have implications on the construction of algorithms that learn such languages from labelled examples, or from positive examples only. At the same time, this study provides insights of relevance to the formal language theory community, through the constructions of algorithms solving core decision problems such as the equivalence or inclusion of the languages represented by two different relational patterns.

\subsection{Relational Pattern Languages}

Let $X=\{x_1,x_2,\ldots\}$ be a countably infinite set of variables, disjoint from $\Sigma$. %The index of a variable $x\in X$ is the unique integer $i$ such that $x=x_i$. 
Slightly deviating from Angluin's notation, a finite non-empty string $p$ over $\Sigma\cup X$ is called a pattern, if no element from $X$ occurs multiple times in $p$; the set of variables occurring in $p$ is then denoted by $\mathit{Var}(p)$. For instance, for $\Sigma=\{a,b\}$, the string $p=x_1ax_2abx_3x_4b$ is a pattern with $\mathit{Var}(p)=\{x_1,\ldots,x_4\}$. A pattern $p$ generates words over $\Sigma$ if we substitute each variable with a word, i.e., we apply a substitution $\varphi$ to $p$. A substitution is a mapping $(\Sigma\cup X)^+\rightarrow\Sigma^*$ that acts as a morphism with respect to concatenation and is the identity when restricted to $\Sigma^+$. The set of words generated from a pattern $p$ by applying any substitution is a regular language, unless one constrains the substitutions. In the above example, $p$ generates the regular language $\{w_1aw_2abw_3w_4b\mid w_i\in\Sigma^*\mbox{ for }1\le i\le 4\}$.

If $p$ is any pattern, then a substring $\vec{x}$ of $p$ is called a variable block of $p$ if $\vec{x}\in X^+$ and there are $\omega_1,\omega_2\in\Sigma^+$ such that $p$ starts with $\vec{x} \omega_1$ or ends with $\omega_2\vec{x}$ or contains the substring $\omega_1\vec{x} \omega_2$. Similarly, a substring $\omega$ of $p$ is called a terminal block of $p$ if $\omega\in\Sigma^+$ and there are $\vec{x}_1,\vec{x}_2\in X^+$ such that $p$ starts with $\omega \vec{x}_1$ or ends with $\vec{x}_2\omega$ or contains the substring $\vec{x}_1 \omega \vec{x}_2$. In the above example, the variable blocks of $p$ are $x_1$, $x_2$, and $x_3x_4$, while the terminal blocks of $p$ are $a$, $ab$, and $b$.

One way of constraining substitutions is to force them to obey relations between variables. Angluin \cite{Angluin80patterns} allowed any subsets of variables to be in \emph{equality}\/ relation, requiring that all variables in equality relation must be substituted by identical strings. Furthermore, she constrained substitutions to be \emph{non-erasing}, i.e., they map every variable to a non-empty string over $\Sigma$. For example, the pattern $p=x_1ax_2abx_3x_4b$, under the constraint that $x_3,x_4$ are in equality relation and only non-erasing substitutions are allowed, generates the (non-regular!) language $\{w_1aw_2abw_3w_3b\mid w_i\in\Sigma^+\mbox{ for }1\le i\le 3\}$. Later, Shinohara removed the constraint of non-erasing substitutions, resulting in so-called erasing pattern languages \cite{Shinohara82}. Aside from equality, two relations that were studied in the context of non-erasing pattern languages are \emph{reversal}\/ and \emph{equal length} \cite{HolteMZ22}. We denote by $eq$, $rev$, and $len$, resp., the relations on $\Sigma^*$ that correspond to equality, reversal, and equal length, resp. If $w, v\in\Sigma^*$, then (i) $(w,v)\in eq$ iff $w=v$; (ii) $(w,v)\in rev$ iff $w=v^{rev}$; (iii) $(w,v)\in len$ iff $|w|=|v|$. For instance, $(ab,bb)\in len\setminus(rev\cup eq)$, $(aab,baa)\in (rev\cup len)\setminus eq$, $(ab,ab)\in (eq\cup len)\setminus rev$, and $(aa,aa)\in eq\cap rev\cap len$. Clearly, $rev\cup eq\subseteq len$.

In general, a relational pattern is a pair $(p,R)$, where $p$ is a pattern and $R$ is a binary relation over $X$. If $r$ is a fixed binary relation over $\Sigma^*$ and $\varphi$ is any substitution, we say that $\varphi$ is valid for $R$ (or $\varphi$ is an $R$-substitution), if $(x,y)\in R$ implies $(\varphi(x),\varphi(y))\in r$. For example, if $r=len$, $\varphi(x)=bb$, $\varphi(y)=ab$, and $\varphi(z)=aaa$, then $\varphi$ would be valid for $R=\{(x,y)\}$, but not for $R'=\{(x,z)\}$ or for $R''=\{(x,y),(x,z)\}$. 

If $(p,R)$ is a relational pattern, then a \emph{group}\/ in $(p,R)$ is any maximal set of variables in $p$ that are pairwise related via the transitive closure of the symmetric closure of $R$ over $\mathit{Var}(p)$. For example, when $p= ax_1x_2bax_3abx_4x_5x_6$ and $R=\{(x_1,x_3),(x_1,x_4),(x_2,x_6)\}$, then the groups are $\{x_1,x_3,x_4\}$, $\{x_2,x_6\}$, $\{x_5\}$. The group of a variable $x$ is the group to which it belongs; we denote this group by $[x]$. Sometimes, we write $[x]_R$ instead of $[x]$ to avoid confusion about the underlying relation $R$. A variable is \emph{free}\/ if its group is of size 1, like $x_5$ in our example.

\section{Reversal Relation}

In this section, we will fix $r=rev$. Note that a substitution that is valid for $\{(x,y),(y,z),(x,z)\}$, where $x,y,z$ are three mutually distinct variables, must replace $x,y,z$ all with the same string, and this string must be a palindrome. For convenience of the formal treatment of the $rev$ relation, we avoid this special situation and consider only \emph{reversal-friendly}\/ relational patterns; a relational pattern $(p,R)$ is reversal-friendly, if the transitive closure of the symmetric closure of $R$ does not contain any subset of the form $\{(y_1,y_2),(y_2,y_3),\ldots,(y_{n},y_1)\}$ for any odd $n\ge 3$.

Note that, for a reversal-friendly pattern $(p,R)$, any group $[x]$ is partitioned into two subsets $[x]_{rev}$ and $[x]_=$. The former consists of the variables in $[x]$ for which any $R$-substitution $\varphi$ is forced to substitute $\varphi(x)^{rev}$, while the latter contains variables in $[x]$ which $\varphi$ replaces with $\varphi(x)$ itself.  Clearly, $x\in[x]_=$.

If $(p,R)$ is a reversal-friendly relational pattern, then $\varepsilon L_{rev}(p,R):=\{\varphi(p)\mid \varphi\mbox{ is an \textit{R}-substitution}\}$. The class of (erasing) reversal pattern languages is now given by 
$\varepsilon\mathcal{L}_{rev}=\{\varepsilon L_{rev}(p,R)\mid (p,R)\mbox{ is a reversal-friendly relational pattern}\}$.

In the literature \cite{HolteMZ22}, so far only the non-erasing version of these languages was discussed: $L_{rev}(p,R):=\{\varphi(p)\mid \varphi\mbox{ is a non-erasing \textit{R}-substitution}\}$. The class of non-erasing reversal pattern languages is then defined as $\mathcal{L}_{rev}=\{L_{rev}(p,R)\mid (p,R)\mbox{ is a reversal-friendly relational pattern}\}$. From \cite{HolteMZ22}, it is currently still open whether or not \emph{polynomially-sized}\/ positive characteristic sets exist for $\mathcal{L}_{rev}$. Using basic results on telltales, one can easily show that, for any alphabet size, the class $\mathcal{L}_{rev}$ has a family of positive characteristic sets.

\medskip

\begin{proposition}
    Let $\Sigma$ be any countable alphabet. Then $\mathcal{L}_{rev}$ has a family of positive characteristic sets.
\end{proposition}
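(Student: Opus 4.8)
The plan is to use Proposition~\ref{prop:telltales} to reduce the claim to the existence of a family of telltales for $\mathcal{L}_{rev}$, and then to establish the latter via a \emph{finite thickness} argument. First I would fix an enumeration $(p_i,R_i)_{i\in\mathbb{N}}$ of all reversal-friendly relational patterns (there are only countably many, since $\Sigma$, $X$, and the finite relations over $\mathit{Var}(p_i)$ are all countable), so that $\mathcal{L}_{rev}=(L_{rev}(p_i,R_i))_{i\in\mathbb{N}}$ becomes a family indexed by $\mathbb{N}$ in the sense of the definitions above. By Proposition~\ref{prop:telltales}, it then suffices to exhibit a family of telltales.

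The key structural observation is that $\mathcal{L}_{rev}$ has finite thickness: every non-empty word $w\in\Sigma^+$ belongs to only finitely many languages of the family. Indeed, if $w\in L_{rev}(p,R)$, then $w=\varphi(p)$ for some non-erasing $R$-substitution $\varphi$. Since $\varphi$ is non-erasing, every symbol of $p$ contributes at least one symbol to $\varphi(p)$, so $|p|\le|w|$; moreover, every terminal symbol occurring in $p$ occurs literally in $w$, hence the terminals of $p$ are drawn from the finite set of symbols appearing in $w$. Consequently, up to renaming of variables (which does not change the generated language), there are only finitely many patterns $p$ with $|p|\le|w|$ whose terminals occur in $w$, and for each such $p$ only finitely many relations $R$ over the finite set $\mathit{Var}(p)$. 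Thus only finitely many relational patterns up to renaming, and hence only finitely many languages in $\mathcal{L}_{rev}$, contain $w$.

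From finite thickness the existence of telltales follows by a short pigeonhole argument, which I would spell out for completeness. Fix $i$ and enumerate $L_{rev}(p_i,R_i)=\{u_0,u_1,\ldots\}$ (this set is non-empty, as mapping each variable to a single letter yields a word). If no finite subset of $L_{rev}(p_i,R_i)$ were a telltale, then for every $n$ there would be a language $L_n$ of the family with $\{u_0,\ldots,u_{n-1}\}\subseteq L_n\subsetneq L_{rev}(p_i,R_i)$. All these $L_n$ contain $u_0$, so by finite thickness they range over a finite set of languages; some language $L^\ast$ therefore occurs as $L_n$ for infinitely many $n$, whence $L^\ast\supseteq\bigcup_n\{u_0,\ldots,u_{n-1}\}=L_{rev}(p_i,R_i)$, contradicting $L^\ast\subsetneq L_{rev}(p_i,R_i)$. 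Hence a finite telltale $T_i$ exists for every $i$, and $(T_i)_{i\in\mathbb{N}}$ is a family of telltales for $\mathcal{L}_{rev}$. Applying Proposition~\ref{prop:telltales} then yields the desired family of positive characteristic sets.

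I expect the only genuine subtlety to be the length bound $|p|\le|w|$, which rests essentially on the \emph{non-erasing} restriction: it is precisely this bound that makes the set of candidate generating patterns finite. (In the erasing setting this bound fails, and indeed, as shown later in the paper, erasing reversal pattern languages over a binary alphabet possess no family of positive characteristic sets at all.) A secondary point to get right is that finiteness must be counted \emph{up to variable renaming} and in terms of \emph{languages} rather than patterns, since many distinct relational patterns may generate the same language; this does not affect the pigeonhole step, which only requires that finitely many languages pass through each word.
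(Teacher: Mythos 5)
Your proposal is correct and follows essentially the same route as the paper: establish finite thickness of $\mathcal{L}_{rev}$ via the length bound $|p|\le|w|$ guaranteed by non-erasing substitutions, conclude that a family of telltales exists, and then invoke Proposition~\ref{prop:telltales}. The only difference is that you spell out the pigeonhole argument deriving telltales from finite thickness, whereas the paper simply cites this as a known result of Angluin.
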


\begin{proof}
    Let $w$ be any word in $L_{rev}(p,R)$. Since non-erasing substitutions generate words at least as long as the underlying pattern, we have $|p'|\le |w|$ for all $p'$ for which there is some $R'$ with $w\in L_{rev}(p',R')$. Thus, there are only finitely many languages $L\in \mathcal{L}_{rev}$ with $w\in L$. It is known that any family for which membership is uniformly decidable (like $\mathcal{L}_{rev}$), and which has the property that any word $w$ belongs to at most finitely many members of the family, possesses a family of telltales \cite{Angluin80telltales}. By Proposition~\ref{prop:telltales}, thus $\mathcal{L}_{rev}$ has a family of positive characteristic sets.
\end{proof}

When allowing erasing substitutions, we obtain a contrasting result. In particular, we will show that, when $|\Sigma|=2$, the class $\varepsilon\mathcal{L}_{rev}$ does not have a family of positive characteristic sets.

For convenience of notation, for a relational pattern $(p,R)$, we will write $y=x^{rev}$ and use $x^{rev}$ interchangeably with $y$, when $(x,y)\in R$, where $x,y$ are variables in $p$. Thus, we can rewrite a relational pattern $(p,R)$ in alternate form $\overline{(p,R)}$. For example, we can write $\overline{(p,R)}= x_1 x^{rev}_1 x_2 x^{rev}_2 x_3 x^{rev}_3$ instead of $(p,R)= (x_1 y_1 x_2 y_2 x_3 y_3,\{(x_1,y_1),(x_2,y_2),(x_3,y_3)\})$. The reverse operation on a string of variables is defined as 
    \[(\nu_1 \nu_2 \ldots \nu_{l-1} \nu_l)^{rev} := (\nu_l)^{rev} (\nu_{l-1})^{rev} \ldots (\nu_2)^{rev} (\nu_1)^{rev}\,,\]
where $\nu_i \in \{x_i, x^{rev}_i\}$ for all $1 \leq i \leq l$. Note that $(x^{rev}_i)^{rev}:=x_i$ and $(x_i)^{rev}:=x^{rev}_i$. Now let $\bar{X}=X\cup\{x^{rev}\mid x\in X\}$. 

Let $R$ be any binary relation over $X$. A reversal-obedient morphism for $R$ is a mapping $\phi:(\Sigma\cup\bar{X})^*\rightarrow (\Sigma\cup\bar{X})^*$ that acts as a morphism with respect to concatenation, leaves elements of $\Sigma$ unchanged, and satisfies the following property for all $p\in (\Sigma\cup\bar{X})^*$: if $x,y\in \bar{X}$ occur in $p$ and $(x,y)\in R$ then $\phi(x)=\phi(y)^{rev}$. For example, if $R=\{(x,y)\}$ and $a\in\Sigma$, such a mapping $\phi$ could be defined by $\phi(\sigma)=\sigma$ for all $\sigma\in\Sigma$, $\phi(x)=x_1x_2$, $\phi(y)=x^{rev}_2x^{rev}_1$, and $\phi(z)=ax_3$. This mapping $\phi$ would yield $\phi(axyaz)=\phi(axx^{rev}az)=ax_1x_2x^{rev}_2x^{rev}_1aax_3$.

The following lemma is helpful for establishing that positive characteristic sets do not in general exist for $\varepsilon\mathcal{L}_{rev}$. An analogous version for the equality relation was proven in \cite{JSSY95}; our proof for the reversal relation is very similar.

\medskip

\begin{lemma}\label{lem-cf-inclusion}
    Let $|\Sigma| \geq 2$. Suppose $(p, R) , (p', R')$ are two arbitrary relational patterns where $p,p'\in X^+$.  Then $\varepsilon L_{rev}(p, R) \subseteq \varepsilon L_{rev}(p', R')$ if and only if there exists a reversal-obedient morphism $\phi$ for $R'$ such that $\phi\overline{(p', R')} = \overline{(p, R)}$.% \textcolor{red}{and [we need a condition on how $R$ and $R'$ relate]} \textcolor{green}{SM: If we look at the the way we denote $(p,R)$ in the paragraph before the red paragraph above, I think this would be enough. Am I right?}.
    
\end{lemma}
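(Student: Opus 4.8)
The plan is to prove the two directions of the biconditional separately, mirroring the classical Jiang--Salomaa--Salomaa--Yu characterization of inclusion for (erasing) equality pattern languages, but tracking the reversal bookkeeping through the alternate-form notation $\overline{(p,R)}$ and the involution $(x_i^{rev})^{rev}=x_i$.

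First I would handle the easy direction: assume a reversal-obedient morphism $\phi$ for $R'$ with $\phi\overline{(p',R')}=\overline{(p,R)}$. Given any word $w\in\varepsilon L_{rev}(p,R)$, there is an $R$-substitution $\psi$ with $\psi(p)=w$. I want to exhibit an $R'$-substitution witnessing $w\in\varepsilon L_{rev}(p',R')$. The natural candidate is the composition $\psi\circ\phi$ (suitably read on $\overline{(p',R')}$): since $\phi$ rewrites $\overline{(p',R')}$ into $\overline{(p,R)}$, applying $\psi$ afterwards produces $w$. The only thing to verify is that $\psi\circ\phi$ respects the $rev$ constraints imposed by $R'$, i.e.\ that the induced substitution is genuinely a valid $R'$-substitution. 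This follows because $\phi$ is reversal-obedient for $R'$ (so related variables of $R'$ are sent to reverse-paired images), and because $\psi$ already respects $R$ together with the identity $(\cdot)^{rev}$ being an anti-automorphism of concatenation; one checks on the generators $x$ versus $x^{rev}$ that reversing commutes through $\phi$ as intended. This direction is essentially a morphism-composition argument and should be routine.

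The substantial direction is the forward one: from $\varepsilon L_{rev}(p,R)\subseteq\varepsilon L_{rev}(p',R')$ I must \emph{construct} the morphism $\phi$. The standard technique (as in \cite{JSSY95}) is to feed a cleverly chosen ``spanning'' or ``distinguishing'' word $w\in\varepsilon L_{rev}(p,R)$ into the inclusion. Because $p,p'\in X^+$ are pure variable strings, I would pick a substitution $\psi$ for $(p,R)$ that replaces each group representative by a long, pairwise-incomparable terminal string designed so that any $R'$-substitution $\psi'$ with $\psi'(p')=\psi(p)$ is essentially forced; the factorization of $\psi(p)$ according to the block structure of $p'$ then reads off, group by group, how each variable (or reversed variable) of $p'$ must map into a factor of $\overline{(p,R)}$. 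Reading these forced factorizations defines $\phi$ on the variables of $p'$, and the reversal-friendliness hypothesis guarantees the two-coloring $[x]_{rev}$ versus $[x]_=$ is consistent, so that $\phi$ can be taken reversal-obedient for $R'$. One then verifies $\phi\overline{(p',R')}=\overline{(p,R)}$ by construction.

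The main obstacle I anticipate is the forward-direction design of the witness word $w$ together with the uniqueness-of-factorization argument. With erasing substitutions allowed, variables may absorb the empty string, so a naive choice of $w$ admits spurious factorizations of $p'$ and the induced map $\phi$ need not be well defined or need not be a morphism. The delicate point is to choose the terminal images of $\psi$ (using $|\Sigma|\ge 2$) so that each maximal variable block of $p'$ has a unique decomposition matching the corresponding portion of $w$, and crucially so that the \emph{reversal} pairing is detected---i.e.\ so that a factor forced to equal some $u^{rev}$ cannot accidentally be read as an unreversed block. I expect the proof to follow \cite{JSSY95} closely in spirit but to require the extra care of ensuring chosen images are not palindromic and have distinguishable reverses, thereby letting the reversal constraints of $R'$ be recovered unambiguously from the factorization.
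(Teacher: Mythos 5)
Your overall route is the paper's: the backward direction is a routine composition argument (the paper dismisses it as trivial), and the forward direction feeds a designed test word into the inclusion and reads the morphism off the resulting factorization, following \cite{JSSY95}. You also correctly identify the two danger points, namely that erasing substitutions create spurious factorizations and that orientation must be detectable so a factor equal to some $u^{rev}$ cannot be misread as unreversed.

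However, the mechanism you propose for the forward direction --- choosing $\psi$ so that ``any $R'$-substitution $\psi'$ with $\psi'(p')=\psi(p)$ is essentially forced'' and each block of $p'$ has a ``unique decomposition'' --- cannot work as stated, and this is a genuine gap. No test word forces the factorization: the decomposition of the test word into $\theta'(y_1)\cdots\theta'(y_{m'})$ has $m'-1$ free boundaries that may fall anywhere (for $p=x_1$ and $p'=y_1y_2$ with all variables free, both languages are $\Sigma^*$ and every split of the test word is a valid factorization, yet the lemma's conclusion must still be reached). The paper's fix is redundancy plus pigeonhole rather than uniqueness: each group representative $x_i$ is mapped to a concatenation of $m'$ distinct ordered marker pairs $ab^{(2m'i)+2\alpha-1}a\,ab^{(2m'i)+2\alpha}a$ for $1\le\alpha\le m'$, each pair occurring $|[x_i]|$ times in $\theta(p)$; since the $m'-1$ boundaries can split at most $m'-1$ such occurrences, some index $\alpha_i$ has \emph{every} occurrence of its marker pair lying wholly inside a single $\theta'(y_j)$, and $\phi$ is then defined by erasing everything else and replacing each surviving pair by $x_i$ or $x_i^{rev}$ according to whether its exponents appear in increasing or decreasing order (this ordered-pair trick is also what realizes your requirement of non-palindromic images with distinguishable reverses). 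Without this counting argument, the map $\phi$ you describe need not be well defined.
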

\begin{proof} If there exists a reversal-obedient morphism $\phi$ such that $\phi\overline{(p', R')} = \overline{(p, R)}$
    then $\varepsilon L_{rev}(p, R) \subseteq \varepsilon L_{rev}(p', R')$ follows trivially. 
    So suppose $\varepsilon L_{rev}(p, R) \subseteq \varepsilon L_{rev}(p', R')$. 
    %If $R$ is empty, all variables in $(p, R)$ are free. This means $\varepsilon L_{rev} \overline{(p,R)}  = \Sigma^\ast$ which implies $\varepsilon L_{rev} \overline{(p',R')} = \Sigma^\ast$. In turn, it means $(p',R')$ contains at least one free variable and the lemma trivially holds. Hence, suppose $R\ne\emptyset$.
    
    Let $p = x_1 x_2 \ldots x_{m}$ and $q = y_1 y_2 \ldots y_{m'}$ where $x_1 x_2 \ldots x_{m}$, $y_1 y_2 \ldots y_{m'} \in X$ and $m, m' \geq 1$. 
    Now, define a substitution $\theta$ as follows, by sequentially defining $\theta(x_i)$ for $i=1,2,\ldots$: If there is some $j<i$ such that $x_i\in[x_j]$, then let $\theta(x_i)=\theta(x_j)^{rev}$, in case $x_i\in[x_j]_{rev}$, and $\theta(x_i)=\theta(x_j)$, in case $x_i\in[x_j]_=$. Otherwise, define
    \[ \theta(x_i) = a b^{(2m' i)+1} a ~ a b^{(2m' i)+2} a \ldots  a b^{(2m' i)+ 2m'} a \]
    %In other words, $\theta$ replaces each variable $x_i$ with the concatenation of $m'$ distinct words of the form $a b^+ a$. 
    Note that for any two variables $x_i$ and $x_j$ where $x_i \notin[x_j]$, there is no common substring of the form $ab^+a$ in $\theta(x_i)$ and $\theta(x_j)$. Now $\theta(p) \in \varepsilon L_{rev}(p, R)$. 

    Since $\varepsilon L_{rev}(p, R) \subseteq \varepsilon L_{rev}(p', R')$ we have $\theta(p) \in \varepsilon L_{rev}(p', R')$ which in turn means there is a substitution $\theta'$ such that 
    \[ \theta'(p') = \theta'(y_1) \dots \theta'(y_{m'})  = \theta(x_1) \ldots \theta(x_{m}) = \theta(p)\]
    Clearly, for each variable $x_i$ and each value of $\alpha\in\{1,\ldots,m'\}$, there are exactly $|[x_i]|$ occurrences of either the substring $ab^{(2m'i)+2\alpha-1}a ~ ab^{(2m'i)+2\alpha}a$ or the substring $ab^{(2m'i)+2\alpha}a ~ ab^{(2m'i)+2\alpha-1}a$. Since the substrings for these $m'$ choices of $\alpha$ occur $|[x_i]|$ times each in $\theta(p)$, and the decomposition of $\theta(p)$ into $\theta'(y_1)$, \dots, $\theta'(y_{m'})$ can split no more than $m'-1$ such substrings, there must be some $\alpha_i\in\{1,\ldots,m'\}$ such that \emph{every}\/ occurrence of the substrings $ab^{(2m'i)+2\alpha_i-1}a ~ ab^{(2m'i)+2\alpha_i}a$ and $ab^{(2m'i)+2\alpha_i}a ~ ab^{(2m'i)+2\alpha_i-1}a$ occurs as a substring of $\theta'(y_j)$ for some $j$ (i.e., no occurrence of these substrings is split by the decomposition of $\theta(p)$ into $\theta'(y_1)$, \dots, $\theta'(y_{m'})$). 

The desired reversal-obedient morphism $\phi$ is now easy to define. For each variable $y\in \mathit{Var}(p')$, we define $\phi(y)$ as follows. First, remove from $\theta'(y)$ all terminal symbols except for all full substrings of the form $ab^{(2m'i)+2\alpha_i-1}a ~ ab^{(2m'i)+2\alpha_i}a$ or $ab^{(2m'i)+2\alpha_i}a ~ ab^{(2m'i)+2\alpha_i-1}a$ for $\alpha_i$ as defined above. Second, replace each remaining full substring of the form $ab^{(2m'i)+2\alpha_i-1}a ~ ab^{(2m'i)+2\alpha_i}a$ by $x_i$, and each full substring of the form $ab^{(2m'i)+2\alpha_i}a ~ ab^{(2m'i)+2\alpha_i-1}a$ by $x_i^{rev}$. (If nothing remains, then $\theta'(y)=\varepsilon$.)
%Given a fixed value of $i$, there are now $|[x_i]|$ occurrences of $x_i$ in the resulting string. Third, given $i$, let $i_1< \dots <i_{|[x_i]|}$ be such that $[x_i]=\{x_{i_1},\ldots, x_{i_{|[x_i]|}}\}$. For $1\le k\le |[x_i]|$, we now replace the $k$th occurrence of $x_i$ by $x_i^{rev}$ if and only if $x_k\in[x_i]_{rev}$. Note that $y_t=y_{t'}^{rev}$ iff occurrences of $ab^{(m'i)+\alpha_i}a$ in $y_t$ and $y_{t'}$, respectively, have been created by $\theta(x)$ and $\theta(x^{rev})$, respectively, for some variable $x\in [x_i]$. 
It is easy to see that $\phi$ is a reversal-obedient morphism for $R'$, and $\phi\overline{(p', R')} = \overline{(p,R)}$. 
   \iffalse
   We call $\alpha_i$ the indicator for $x_i$.
    In addition, for a variable $y_i \in \mathit{Var}(p')$ we say $\beta(y_i)$ is $b$-increasing if for some $h \in \mathbb{N}$ either $ab^taab^{t+1}a \sqsubseteq \beta(y_i y_{i+1} \dots y_{i+h})$ or $ab^{t-1}aab^ta \sqsubseteq \beta(y_{i-h} \ldots y_{i-1} y_i)$ where $m'i + 1 \leq t-1  < t < t+1 \leq m' i + m' $. \textcolor{red}{The previous sentence does not seem grammatically correct to me, and I do not know how to parse it. Thus I do not know what it means.} \textcolor{green}{Modified. Please let me know whether my modification makes sense or not.}
    Similarly, one can define a $b$-decreasing substring of $\beta(p')$.

    The reversal-obedient morphism $\phi$ is defined as follows: For each $y \in \mathit{Var}(p')$, $\phi$ finds the indicator that $\beta(y)$ contains and then checks whether $\beta(y)$ is $b$-increasing or $b$-decreasing. Then it deletes all of $\beta(y)$ except the indicator substring. Finally it replaces the indicator substring with the variable it is indicating for if $\beta(y)$ was $b$-increasing or with its reverse if $\beta(y)$ was $b$-decreasing. For all other variables $z \not\in \mathit{Var}(p')$, $\phi(z) = \nu$ where $\nu$ is an arbitrary string in $(\Sigma \cup X)^\ast$. 
    Obviously, $\phi\overline{(p', R')} = \overline{(p,R)}$.\qed \fi
\end{proof}

\begin{theorem}\label{thm-nonlearnable-rev-gold}
Let $|\Sigma| = 2$. Then $\varepsilon\mathcal{L}_{rev}$ does not possess a family of positive characteristic sets. 
\end{theorem}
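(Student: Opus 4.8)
The plan is to invoke Proposition~\ref{prop:telltales}: since a family of positive characteristic sets exists precisely when a family of telltales exists, it suffices to prove that $\varepsilon\mathcal{L}_{rev}$ admits no family of telltales. By the telltale definition this reduces to exhibiting a single witness language $L^*\in\varepsilon\mathcal{L}_{rev}$ that has no finite telltale, i.e.\ a language $L^*$ such that \emph{every} finite $F\subseteq L^*$ is contained in some $L'\in\varepsilon\mathcal{L}_{rev}$ with $L'\subsetneq L^*$. Indeed, if $L^*=L_i$ and $T$ is any finite candidate telltale for it, the corresponding proper sublanguage $L'=L_j$ with $T\subseteq L_j\subsetneq L_i$ satisfies $L_i\neq L_j$ and $T\subseteq L_j$ yet $L_j\subset L_i$, contradicting Condition~2 of the telltale definition. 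Hence no finite set assigned to $L^*$ can be a valid telltale, and no telltale family can exist.

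To produce such an $L^*$ I would build an infinite, strictly ascending chain of proper sublanguages $L_1\subsetneq L_2\subsetneq\cdots$ in $\varepsilon\mathcal{L}_{rev}$ whose union is itself a reversal pattern language $L^*=\bigcup_n L_n\in\varepsilon\mathcal{L}_{rev}$. Any finite $F\subseteq L^*$ then lies in a single $L_n$ (by finiteness of $F$ together with directedness of the chain), and $L_n\subsetneq L^*$, which is exactly the property required above. I would take $L^*$ and all $L_n=\varepsilon L_{rev}(p_n,R_n)$ to be terminal-free, so that Lemma~\ref{lem-cf-inclusion} becomes the central tool: each inclusion $L_n\subseteq L_{n+1}\subseteq L^*$ is certified by exhibiting an explicit reversal-obedient morphism between the associated barred patterns $\overline{(p_n,R_n)}$, and each strictness $L_n\neq L_{n+1}$ is certified either by the absence of a morphism in the reverse direction or, more concretely, by producing a word $w_n\in L_{n+1}\setminus L_n$.

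The delicate, binary-specific step---which I expect to be the main obstacle---is the design of the patterns $(p_n,R_n)$ and of the limiting pattern for $L^*$ so that all three demands hold simultaneously: the chain must be genuinely ascending, every term must remain a \emph{proper} sublanguage, and the union must still be a single pattern language. This is subtle because all naive candidates fail. A pattern with a free variable generates all of $\Sigma^*$, which does possess a telltale; a pattern without a free variable can only realize a word of length $\ell$ as $\ell=\sum_g k_g\ell_g$, where $k_g$ is the size of group $g$, so odd lengths require an odd-size group, and over a two-letter alphabet the short words of such languages force rigid, palindrome-type constraints that tend to restore telltales. The core of the argument is thus a combinatorial analysis over $\Sigma=\{a,b\}$ showing that the witnessing words $w_n$ cannot be morphically generated by $\overline{(p_n,R_n)}$; this is exactly where $|\Sigma|=2$ is indispensable, since a third terminal symbol would supply the extra freedom that lets the approximants collapse back onto $L^*$ (paralleling the contrasting $|\Sigma|\ge 3$ phenomena discussed elsewhere in the paper). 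Finally, I would verify $\bigcup_n L_n=L^*$ by a direct factorization argument, checking that each word of $L^*$ decomposes in a way matching $\overline{(p_n,R_n)}$ for all sufficiently large $n$.
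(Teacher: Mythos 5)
Your reduction is sound and matches the paper's strategy in spirit: via Proposition~\ref{prop:telltales} it suffices to exhibit one language $L^*\in\varepsilon\mathcal{L}_{rev}$ such that every finite $F\subseteq L^*$ is contained in some $L'\in\varepsilon\mathcal{L}_{rev}$ with $L'\subsetneq L^*$, and Lemma~\ref{lem-cf-inclusion} is indeed the right tool for certifying the inclusions and their strictness among terminal-free patterns. (Your insistence on a single ascending chain with union $L^*$ is a harmless strengthening; a per-$F$ construction of a proper sublanguage containing $F$ suffices, and is what the paper actually does. Incidentally, the paper's construction applied to an increasing exhaustion of $L^*$ does yield such a chain, since the $n$-th pattern is obtained from the $(n+1)$-st by erasing a block of variables.)

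However, there is a genuine gap: the entire combinatorial content of the theorem is deferred. You announce that the ``delicate, binary-specific step'' is the design of the patterns, but you neither choose $L^*$ nor construct the approximants, and this is precisely where the proof lives. The paper (following Reidenbach) takes $L^*=\varepsilon L_{rev}(x_1x_1^{rev}x_2x_2^{rev}x_3x_3^{rev})$ and, given $T=\{w_1,\ldots,w_n\}$ with witnessing substitutions $\theta_i$, recodes each $w_i$ over two fresh variables (one for $a$, one for $b$, via the inverse morphism $\overleftarrow{\theta_i}$) to obtain blocks $\alpha_{i,1},\alpha_{i,2},\alpha_{i,3}$, with a special case when some letter occurs exactly once in $\theta_i(x_3)$ while $\theta_i(x_1),\theta_i(x_2)$ are powers of the other letter; concatenating blockwise gives $(p',R')=y_1y_1^{rev}y_2y_2^{rev}y_3y_3^{rev}$ with $y_k=\alpha_{1,k}\cdots\alpha_{n,k}$. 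The special case is engineered so that $(p',R')$ has no free variables and $|[\Phi(x_3)]|\in\{0\}\cup\{4,5,\ldots\}$, which is what makes the strictness argument go through: any morphism witnessing $\varepsilon L_{rev}(p,R)\subseteq\varepsilon L_{rev}(p',R')$ would have to factor through one of three six-variable patterns ($x_1x_2x_2^{rev}x_1^{rev}x_3x_3^{rev}$, $x_1x_1^{rev}x_2x_3x_3^{rev}x_2^{rev}$, $x_1x_2x_3x_3^{rev}x_2^{rev}x_1^{rev}$), each of which is separated from $(p,R)$ by a word of the form $a^{2\kappa}bba^{2\xi}$. Without this construction and the accompanying case analysis, your argument establishes only that the theorem would follow \emph{if} a suitable witness existed, not that one does.
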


\begin{proof}
%For convenience of notation, for a relational pattern $(p,R)$, we will write $y=x^{rev}$ and use $x^{rev}$ interchangeably with $y$, when $(x,y)\in R$, where $x,y$ are variables in $p$. Thus, for example, we can write $(p,R)= x_1 x^{rev}_1 x_2 x^{rev}_2 x_3 x^{rev}_3$ instead of $(p,R)= (x_1 y_1 x_2 y_2 x_3 y_3,\{(x_1,y_1),(x_2,y_2),(x_3,y_3)\})$. The reverse operation on a string of variables is defined as 
%\[(\nu_1 \nu_2 \ldots \nu_{l-1} \nu_l)^{rev} := (\nu_l)^{rev} (\nu_{l-1})^{rev} \ldots (\nu_2)^{rev} (\nu_1)^{rev}\,,\]
%where $\nu_i \in \{x_i, x^{rev}_i\}$ for all $1 \leq i \leq l$. Note that $(x^{rev}_i)^{rev}:=x_i$ and $(x_i)^{rev}:=x^{rev}_i$. 
%Now let $\theta$ be a substitution that is valid for a relational pattern with $x = y^{rev}$. If $\theta(x) = z_1 \ldots z_n$ with $z_i\in\Sigma^*$, then $\theta(y) = z^{rev}_n \ldots z^{rev}_1$.

Fix $\Sigma = \{a, b\}$. We will show that $\varepsilon L_{rev}(x_1 x^{rev}_1 x_2 x^{rev}_2 x_3 x^{rev}_3)$ does not have a positive characteristic set with respect to $\varepsilon\mathcal{L}_{rev}$. Our proof follows a construction used by Reidenbach \cite{Reidenbach02} to show that $\varepsilon L_{eq}(x_1 x_1 x_2 x_2 x_3 x_3)$ does not have a telltale with respect to $\varepsilon\mathcal{L}_{eq}$.
Let $(p,R)= x_1 x^{rev}_1 x_2 x^{rev}_2 x_3 x^{rev}_3$. In particular, we will show that for any finite set $T \subseteq \varepsilon L_{rev}(p,R)$ there exists a relational pattern $(p',R')$ with $p'\in X^+$ such that $T \subseteq \varepsilon L_{rev}(p',R') \subset \varepsilon L_{rev}(p,R)$. This implies that $T\times\{1\}$ cannot be used as a positive characteristic set for $\varepsilon L_{rev}(p,R)$ with respect to $\varepsilon\mathcal{L}_{rev}$, which proves the theorem.

\iffalse
Now a procedure is presented that construct the pattern $q$ with respect to $p$ and $T$. In order to do so, suppose $T:=\{w_1, \ldots, w_n\} \subseteq L(p)$. For every $w_i \in T$ we define inverse (first arity) morphism $\overleftarrow{\theta_i}: \Sigma^\ast \times \Sigma^\ast \longrightarrow X^\ast$ as follows:

$$
\overleftarrow{\theta_i}(c, w_i) =\begin{cases}
			x^{[(j-1)_2.rev]}_{2i-1}, & \text{if $c$ is the $j^{th}$ occurrence of letter $a$ in $w_i$}\\
            x^{[(j-1)_2.rev]}_{2i}, & \text{if $c$ is the $j^{th}$ occurrence of letter $b$ in $w_i$}
		 \end{cases}
$$

where $(j-1)_2$ is $(j-1)$ modulo 2 and $x^{[(j-1)_2.rev]} = x^{rev}$ if $(j-1)_2=1$ and $x^{[(j-1)_2.rev]} = x^{rev}$ otherwise. 

\fi

%Now a procedure is presented that constructs the pattern $(q,R')$ with respect to $(p,R)$ and $T$. In order to do so, suppose 
Fix $T=\{w_1, \ldots, w_n\} \subseteq \varepsilon L_{rev}(p,R)$. For each $w_i \in T$, define $\overleftarrow{\theta_i}: \Sigma^\ast \longrightarrow X^\ast$ by $\overleftarrow{\theta_i}(\sigma_1\ldots\sigma_z)=\overleftarrow{\theta_i}(\sigma_1)\ldots\overleftarrow{\theta_i}(\sigma_z)$, for $\sigma_1,\ldots,\sigma_z\in\Sigma$, where, for $c\in\Sigma$, we set
$$
\overleftarrow{\theta_i}(c) =\begin{cases}
			x_{2i-1}, & \text{$c=a$}\\
            x_{2i}, & \text{$c=b$}
		 \end{cases}
$$

$T \subseteq \varepsilon L_{rev}(p,R)$ implies for every $i \in \{1, \ldots, n\}$ the existence of a substitution $\theta_i$ (valid for $R$) such that $w_i = \theta_i(p)$. For each $w_i \in T$ we construct three strings of variables $\alpha_{i,1},\alpha_{i,2},\alpha_{i,3} \in X^\ast$, such that a concatenation of these strings in a specific way produces a pattern other than $(p,R)$ that generates $w_i$. These $\alpha_{i,k}$ will be the building blocks for the desired pattern $(p',R')$. Consider two cases:
\begin{itemize}
    \item[($i$)] Some $\sigma \in \Sigma$ appears in $\theta_i(x_3)$ exactly once while $\theta_i(x_1), \theta_i(x_2) \in \{\sigma'\}^\ast$ for $\sigma' \neq \sigma$, $\sigma'\in\Sigma$.
    Here we construct strings of variables as follows:
    
    $\alpha_{i,1} := \overleftarrow{\theta_i}(\theta_i(x_1) \theta_i(x_2))$
    
    $\alpha_{i,2} := \overleftarrow{\theta_i}(\theta_i(x_3))$
        
    $\alpha_{i,3} := \varepsilon$
    
    %This shows that the word $w_i$ is ambiguous.
    
    \item[($ii$)] Not ($i$). In this case we simply set $\alpha_{i,k} := \overleftarrow{\theta_i}(\theta_i(x_k))$ where $1 \leq k \leq 3$.
\end{itemize}
In each case, $w_i \in \varepsilon L_{rev}(\alpha_{i,1} \alpha^{rev}_{i,1} \alpha_{i,2} \alpha^{rev}_{i,2} \alpha_{i,3} \alpha^{rev}_{i,3})$.

Now define 
$(p',R') := y_1 y^{rev}_1 y_2 y^{rev}_2 y_3 y^{rev}_3$
where $y_k := \alpha_{1,k} \alpha_{2,k} \ldots \alpha_{n,k}\in X^+$ for $1 \leq k \leq 3$. %Note that each $y_k$ is made of only variables and there is a correspondence between $x_k$ in $(p,R)$ and $y_k$ in $(q,R')$.
To conclude the proof, we show $T \subseteq \varepsilon L_{rev}(p',R') \subset \varepsilon L_{rev}(p,R)$. %Therefore, we first need to show that $T \subseteq \varepsilon L(q,R')$, second, $\varepsilon L(q,R') \subseteq \varepsilon L(p,R)$, and finally we prove that the inclusion is proper.

To see that $T \subseteq \varepsilon L_{rev}(p',R')$, note $w_i \in \varepsilon L_{rev}(\alpha_{i,1} \alpha^{rev}_{i,1} \alpha_{i,2} \alpha^{rev}_{i,2} \alpha_{i,3} \alpha^{rev}_{i,3})$, and thus $w_i$ can be generated from $(p',R')$ by replacing all variables in $\alpha_{j,k}$, $j\ne i$, $1\le k\le 3$, with the empty string.

%define, for each $i \in \{1, \ldots, n\}$:  
%    $$
%    \phi_i(v_j) =\begin{cases}
%			a, & \text{$j=2i-1$,}\\
%            b, & \text{$j=2i$,}\\
%            \varepsilon, & \text{otherwise.}
%		 \end{cases}
%    $$
%    Obviously, $\phi_i(p') = w_i$ and $\phi_i$ is valid for $R'$. This proves $T \subseteq \varepsilon L(p',R')$.
    
To verify $\varepsilon L_{rev}(p', R') \subseteq \varepsilon L_{rev}(p, R)$, it suffices to provide a reversal-obedient morphism $\Phi: X^\ast \longrightarrow X^\ast$ such that $\Phi\overline{(p,R)} = \overline{(p', R')}$. The existence of such $\Phi$ is obvious from our construction, namely $\Phi(x_k)= \alpha_{1,k} \alpha_{2,k} \ldots \alpha_{n,k}$ for $1 \leq k \leq 3$. %Hence, $\varepsilon L_{rev}(p', R') \subseteq \varepsilon L_{rev}(p, R)$. 
    
Finally, we prove $\varepsilon L_{rev}(p', R')\subset\varepsilon L_{rev}(p,R)$. By way of contradiction, using Lemma~\ref{lem-cf-inclusion}, suppose there is a morphism $\Psi: X^\ast \longrightarrow X^\ast$ with $\Psi\overline{(p',R')} = \overline{(p,R)}$. 

By construction, $(p',R')$ has no free variables, i.e., the group of each variable in $p'$ has size at least $2$. Hence we can decompose $\Psi$ into two morphisms $\psi$ and $\psi'$ such that $\Psi\overline{(p', R')} = \psi'(\psi\overline{(p',R')}) = \overline{(p,R)}$ and, for each $v_j\in \mathit{Var}(p')$:\\ (i) $\psi(v_j) =\varepsilon$, if $|[v_j]|>2$, (ii) $\psi(v_j) =v_j$, if $|[v_j]|=2$.
    %$$
    %\psi(v_j) =\begin{cases}
	%		\varepsilon, & \text{$|[v_j]|>2$,}\\
     %       v_j, & \text{$|[v_j]|=2$,}
	%	 \end{cases}
    %$$
    Since $\overline{(p', R')}=\Phi\overline{(p, R)}$ and either $[\Phi(x_3)]= \emptyset$ or $|[\Phi(x_3)]| \geq 4$ (see cases ($i$) and ($ii$)), we conclude that $\psi(\Phi(x_3))=\varepsilon$. Therefore,
    \begin{equation*}
        \begin{split}
           \psi\overline{(p', R')} & = \overbrace{v_{j_1} v_{j_2} \ldots v_{j_t}}^{\psi(\Phi(x_1))} \overbrace{v^{rev}_{j_t} v^{rev}_{j_{t-1}} \ldots v^{rev}_{j_1}}^{\psi(\Phi(x^{rev}_1))} \\
           & ~~~~ \overbrace{v_{j_{t+1}} v_{j_{t+2}} \ldots v_{j_{t+\ell}}}^{\psi(\Phi(x_2))} \overbrace{v^{rev}_{j_{t+\ell}} v^{rev}_{j_{t+\ell-1}} \ldots v^{rev}_{j_{t+1}}}^{\psi(\Phi(x^{rev}_2))} \overbrace{\varepsilon}^{\psi(\Phi(x_3))} \overbrace{\varepsilon}^{\psi(\Phi(x^{rev}_3))}
       \end{split}
    \end{equation*}
    where $t,\ell \geq 1$ and $v_{j_\zeta} \neq v_{j_\eta}$ for $\zeta \neq \eta, ~ 1\leq \zeta ,\eta \leq t + \ell$.
    To obtain a pattern with the same Parikh vector as of $(p, R)$, $\psi'$ must replace all except six variables with $\varepsilon$. 
    By the structure of $\psi\overline{(p', R')}$, the only possible options up to renaming of variables are $\overline{(p_1, R_1)} := x_1 x_2 x^{rev}_2 x^{rev}_1 x_3 x^{rev}_3$, $\overline{(p_2, R_2)} := x_1 x^{rev}_1 x_2 x_3 x^{rev}_3 x^{rev}_2$ and $\overline{(p_3, R_3)} := x_1 x_2 x_3 x^{rev}_3 x^{rev}_2 x^{rev}_1$. None of these are equivalent to $(p, R)$. 
    This is because there are words of the form $ba^{2\kappa_1} bb a^{2\kappa_2} bb a^{2\kappa_3}b$ for pairwise distinct $\kappa_1, \kappa_2, \kappa_3 \in \mathbb{N}$ that witness $\varepsilon L_{rev}(p, R) \not\subseteq \varepsilon L_{rev}(p_1, R_1)$, $\varepsilon L_{rev}(p, R) \not\subseteq \varepsilon L_{rev}(p_2, R_2)$, and $\varepsilon L_{rev}(p, R)\not \subseteq \varepsilon L_{rev}(p_3, R_3)$.

    Hence there is no morphism $\psi'$ such that $(p, R) = \psi'(\psi((p', R')))$, and no $\Psi$ such that $\Psi(\overline{(p', R')}) = \overline{(p, R)}$. This is a contradiction and therefore $\varepsilon L_{rev}(p', R')\subset\varepsilon L_{rev}(p, R)$.
\end{proof}

%The following example illustrates the construction in the proof.

\begin{example}
    Consider the set $T:=\{w_1,w_2,w_3,w_4\}$ with 
    $$w_1 = \underbrace{a}_{\theta_1(x_1)} 
    \underbrace{a}_{\theta_1(x^{rev}_1)} 
    \underbrace{a}_{\theta_1(x_2)} 
    \underbrace{a}_{\theta_1(x^{rev}_2)} 
    \underbrace{b}_{\theta_1(x_3)} 
    \underbrace{b}_{\theta_1(x^{rev}_3)}$$
    $$w_2 = \underbrace{ba}_{\theta_2(x_1)} 
    \underbrace{ab}_{\theta_2(x^{rev}_1)} 
    \underbrace{a}_{\theta_2(x_2)} 
    \underbrace{a}_{\theta_2(x^{rev}_2)} 
    \underbrace{ab}_{\theta_2(x_3)} 
    \underbrace{ba}_{\theta_2(x^{rev}_3)}$$
    $$w_3 = \underbrace{b}_{\theta_3(x_1)} 
    \underbrace{b}_{\theta_3(x^{rev}_1)} 
    \underbrace{bb}_{\theta_3(x_2)} 
    \underbrace{bb}_{\theta_3(x^{rev}_2)} 
    \underbrace{bb}_{\theta_3(x_3)} 
    \underbrace{bb}_{\theta_3(x^{rev}_3)}$$
    $$w_4 = \underbrace{aab}_{\theta_4(x_1)} 
    \underbrace{baa}_{\theta_4(x^{rev}_1)} 
    \underbrace{b}_{\theta_4(x_2)} 
    \underbrace{b}_{\theta_4(x^{rev}_2)} 
    \underbrace{bab}_{\theta_4(x_3)} 
    \underbrace{bab}_{\theta_4(x^{rev}_3)}$$

    A pattern $\overline{(p', R')}$ generating these four words and generating a proper subset of $\varepsilon L_{rev}(x_1x_1^{rev}x_2x_2^{rev}x_3x_3^{rev})$ is constructed as follows. %(where $\bar{\alpha} = x_n \ldots x_1$ if $\alpha = x_1 \ldots x_n$):
    $$\alpha_1\!=\!\underbrace{x_1x_1}_{\overleftarrow{\theta_1}(a ~ a)}^{\overbrace{}^{\alpha_{1,1}}} 
    \underbrace{x_1x_1}_{\overleftarrow{\theta_1}(a~ a)}^{\overbrace{}^{\bar{\alpha}_{1,1}}} 
    \underbrace{x_2}_{\overleftarrow{\theta_1}(b)}^{\overbrace{}^{\alpha_{1,2}}} 
    \underbrace{x_2}_{\overleftarrow{\theta_1}(b)}^{\overbrace{}^{\bar{\alpha}_{1,2}}} 
    {\overbrace{\varepsilon}^{\alpha_{1,3}}} 
    {\overbrace{\varepsilon}^{\bar{\alpha}_{1,3}}}\,\ \ \alpha_2\!=\!\underbrace{x_4 x_3}_{\overleftarrow{\theta_2}(ba)}^{\overbrace{}^{\alpha_{2,1}}}
    \underbrace{x_3 x_4}_{\overleftarrow{\theta_2}(ab)}^{\overbrace{}^{\bar{\alpha}_{2,1}}}
    \underbrace{x_3}_{\overleftarrow{\theta_2}(a)}^{\overbrace{}^{\alpha_{2,2}}}
    \underbrace{x_3}_{\overleftarrow{\theta_2}(a)}^{\overbrace{}^{\bar{\alpha}_{2,2}}}
    \underbrace{x_3 x_4}_{\overleftarrow{\theta_2}(ab)}^{\overbrace{}^{\alpha_{2,3}}} \underbrace{x_4 x_3}_{\overleftarrow{\theta_2}(ba)}^{\overbrace{}^{\bar{\alpha}_{2,3}}}$$
    $$\alpha_3\!=\!\underbrace{x_6}_{\overleftarrow{\theta_3}(b)}^{\overbrace{}^{\alpha_{3,1}}} 
    \underbrace{x_6}_{\overleftarrow{\theta_3}(b)}^{\overbrace{}^{\bar{\alpha}_{3,1}}} 
    \underbrace{x_6 x_6}_{\overleftarrow{\theta_3}(bb)}^{\overbrace{}^{\alpha_{3,2}}} 
    \underbrace{x_6 x_6}_{\overleftarrow{\theta_3}(bb)}^{\overbrace{}^{\bar{\alpha}_{3,2}}} 
    \underbrace{x_6 x_6}_{\overleftarrow{\theta_3}(bb)}^{\overbrace{}^{\alpha_{3,3}}} 
    \underbrace{x_6 x_6}_{\overleftarrow{\theta_3}(bb)}^{\overbrace{}^{\bar{\alpha}_{3,3}}}\,\ \ \alpha_4\!=\!\underbrace{x_7 x_7 x_8}_{\overleftarrow{\theta_4}(aab)}^{\overbrace{}^{\alpha_{4,1}}} 
    \underbrace{x_8 x_7 x_7}_{\overleftarrow{\theta_4}(baa)}^{\overbrace{}^{\bar{\alpha}_{4,1}}} 
    \underbrace{x_8}_{\overleftarrow{\theta_4}(b)}^{\overbrace{}^{\alpha_{4,2}}} 
    \underbrace{x_8}_{\overleftarrow{\theta_4}(b)}^{\overbrace{}^{\bar{\alpha}_{4,2}}}
    \underbrace{x_8 x_7 x_8}_{\overleftarrow{\theta_4}(bab)}^{\overbrace{}^{\alpha_{4,3}}}
    \underbrace{x_8 x_7 x_8}_{\overleftarrow{\theta_4}(bab)}^{\overbrace{}^{\bar{\alpha}_{4,3}}}$$

    Hence, the pattern $\overline{(p', R')}$ is of the form:

    \begin{align*}
        ~~~~~\overline{(p', R')}&= \underbrace{\overbrace{x_1x_1}^{\alpha_{1,1}}        \overbrace{x_4 x_3}^{\alpha_{2,1}} \overbrace{x_6}^{\alpha_{3,1}} \overbrace{x_7 x_7 x_8}^{\alpha_{4,1}}}_{\Phi(x_1)} 
        \underbrace{\overbrace{x^{rev}_8 x^{rev}_7 x^{rev}_7}^{\alpha^{rev}_{4,1}} \overbrace{x^{rev}_6}^{\alpha^{rev}_{3,1}} \overbrace{x^{rev}_3 x^{rev}_4}^{\alpha^{rev}_{2,1}} \overbrace{x^{rev}_1x^{rev}_1}^{\alpha^{rev}_{1,1}}}_{\Phi(x^{rev}_1)}\\ 
        &~~~\underbrace{\overbrace{x_2}^{\alpha_{1,2}}\overbrace{x_3}^{\alpha_{2,2}} \overbrace{x_6 x_6}^{\alpha_{3,2}} \overbrace{x_8}^{\alpha_{4,2}}}_{\Phi(x_2)} 
        \underbrace{\overbrace{x^{rev}_8}^{\alpha^{rev}_{4,2}} \overbrace{x^{rev}_6 x^{rev}_6}^{\alpha^{rev}_{3,2}} \overbrace{x^{rev}_3}^{\alpha^{rev}_{2,2}} \overbrace{x^{rev}_2}^{\alpha^{rev}_{1,2}}}_{\Phi(x^{rev}_2)}\\
        &~~~
        \underbrace{\overbrace{x_3 x_4}^{\alpha_{2,3}} \overbrace{x_6 x_6}^{\alpha_{3,3}} \overbrace{x_8 x_7 x_8}^{\alpha_{4,3}}}_{\Phi(x_3)} 
        \underbrace{\overbrace{x^{rev}_8 x^{rev}_7 x^{rev}_8}^{\alpha^{rev}_{4,3}} \overbrace{x^{rev}_6 x^{rev}_6}^{\alpha^{rev}_{3,3}} \overbrace{x^{rev}_4 x^{rev}_3}^{\alpha^{rev}_{2,3}}}_{\Phi(x^{rev}_3)} 
    \end{align*}

    Obviously, $S= \{s_1, s_2, s_3, s_4\} \subseteq \varepsilon L_{rev}(p', R')$ and $\varepsilon L_{rev}(p', R') \subseteq \varepsilon L_{rev}(p, R)$. However, $aa~bb~aaaa \in \varepsilon L_{rev}(p, R) \setminus \varepsilon L_{rev}(p', R')$.

\end{example}

\section{Equal-Length Relation}

This section assumes $r=len$, i.e., variables in relation are replaced by words of equal length, independent of the actual symbols in those words. Recent work by Holte et al.\ \cite{HolteMZ22} studied this relation for non-erasing pattern languages. We will demonstrate how one of their main results can be used to show that substitutions replacing each variable with a word of length at most 2 form positive characteristic sets, if the underlying alphabet has at least three symbols. In particular, %without using the term ``positive characteristic sets'', 
Holte et al.\ showed that, if $|\Sigma|\ge 3$ and $\mathcal{S}_2(p,R)\subseteq L_{len}(p,R')$, then $L_{len}(p,R)\subseteq L_{len}(p,R')$. Here $L_{len}(p,R)$ refers to the non-erasing language generated by $(p,R)$, i.e., the subset of $\varepsilon L_{len}(p,R)$ that results from applying only valid $R$-substitutions $\varphi$ to $p$ that do not erase any variable, i.e, $|\varphi(x)|\ge 1$ for all $x\in\mathit{Var}(p)$. Moreover, $\mathcal{S}_2(p,R)$ denotes the set of words in $L_{len}(p,R)$ that are generated by $R$-substitutions $\varphi$ satisfying $\exists x\in \mathit{Var}(p)\ [\forall y\in [x]\ |\varphi(y)|= 2\ \wedge\ \forall y\in \mathit{Var}(p)\setminus [x]\ |\varphi(y)|=1]$. Note that, while the size of $\mathcal{S}_2(p,R)$ is in general not linear in $|p|$, the actual subsets  $\mathcal{S}_2(p,R)$ used in the proof by Holte et al.\ are of size linear in $|p|$. Moreover, all words in $\mathcal{S}_2(p,R)$ are generated from $p$ using an $\ell_2$-substitution, in the sense of the following definition.

\smallskip

\begin{definition}
%For any relational pattern $(p,R)$ and any $z \in \mathbb{N}\setminus \{0\}$:
%\[\mathcal{S}_{\le z}(p,R)=\{\varphi(p)\mid\varphi\mbox{ is an }R\mbox{-substitution}\wedge\forall x\in X\ |\varphi(x)|\le z \}\,.\]
Let $z \in \mathbb{N}$. Any substitution that maps each variable to a word of length at most $z$ is called an $\ell_z$-substitution. Given a relational pattern $(p,R)$, the set $L_{len}(p,R,z)\subseteq L_{len}(p,R)$ is the set of all words generated from $p$ using non-erasing $R$-substitutions that are also $\ell_z$-substitutions. When dropping the condition ``non-erasing'', we obtain a subset of $\varepsilon L_{len}(p,R)$, which we denote with $\varepsilon L_{len}(p,R,z)$.
\end{definition}

\smallskip

One can easily conclude from Holte et al.'s result that the following theorem holds for general pairs of relational patterns $((p,R),(p',R'))$, where $p\ne p'$ is allowed.

\smallskip

\begin{theorem}\label{thm:Sigmage3NE}
    Let $|\Sigma|\ge 3$. If $\mathcal{S}_{2}(p,R)\subseteq L_{len}(p',R')$ and $\mathcal{S}_{2}(p',R')\subseteq L_{len}(p,R)$, then $L_{len}(p,R)= L_{len}(p',R')$. In particular, $\mathcal{S}_2(p,R)$ is a positive characteristic set for $L_{len}(p,R)$ with respect to $\mathcal{L}_{len}$.
\end{theorem}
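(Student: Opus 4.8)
The plan is to separate the statement into two parts and dispatch the second (``in particular'') part purely definitionally. Once the equality $L_{len}(p,R)=L_{len}(p',R')$ is established under the two symmetric containment hypotheses, the characteristic-set claim is immediate: unwinding the definition of a positive characteristic set, the \emph{distinguishing} condition for $\mathcal{S}_2(p,R)\times\{1\}$ is exactly the contrapositive of the first assertion, namely that whenever $L_{len}(p,R)\ne L_{len}(p',R')$ the two containments $\mathcal{S}_2(p,R)\subseteq L_{len}(p',R')$ and $\mathcal{S}_2(p',R')\subseteq L_{len}(p,R)$ cannot both hold; and \emph{consistency} holds because $\mathcal{S}_2(p,R)\subseteq L_{len}(p,R)$ by definition. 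So all the real work goes into the first sentence.

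For the first sentence I would exploit the symmetry of the hypotheses and reduce to Holte et al.'s result, which is stated for a \emph{fixed} underlying pattern: for any single pattern $q$ and relations $S,S'$, that result gives $\mathcal{S}_2(q,S)\subseteq L_{len}(q,S')\Rightarrow L_{len}(q,S)\subseteq L_{len}(q,S')$. Hence, if I can show that the hypotheses force $p$ and $p'$ to be the \emph{same} pattern up to renaming of variables (the same sequence of terminal symbols and variable slots, so that $R$ and $R'$ may be regarded as two relations over a common variable set $q:=p=p'$), then I apply Holte's lemma twice: once to $\mathcal{S}_2(q,R)\subseteq L_{len}(q,R')$ to obtain $L_{len}(q,R)\subseteq L_{len}(q,R')$, and once to $\mathcal{S}_2(q,R')\subseteq L_{len}(q,R)$ to obtain the reverse inclusion, yielding the desired equality.

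Establishing this skeleton equality is where $|\Sigma|\ge 3$ and the full set $\mathcal{S}_2$ enter. I would pick, for a chosen group $G$ in $(p,R)$, the word $w\in\mathcal{S}_2(p,R)$ obtained by filling the variables of $G$ with length-$2$ blocks and all remaining variables with length-$1$ blocks, choosing the substituted symbols so that $w$ admits an essentially \emph{rigid} factorization: having three letters available lets me mark variable-generated positions with patterns that cannot be confused with the terminal symbols of $p$, so that any valid $(p',R')$-parse of $w$ must align the terminal symbols of $p'$ with those of $p$ and its variable slots with those of $p$. Reading this off for enough words of $\mathcal{S}_2(p,R)$ (varying $G$ and the block contents), then running the symmetric argument with the roles of $(p,R)$ and $(p',R')$ exchanged, pins down that $p$ and $p'$ have identical terminal/variable skeletons. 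The lengths of the words in $\mathcal{S}_2$, which equal $|p|+|G|$, simultaneously control $|p|$ versus $|p'|$ and rule out spurious shifts of the factorization.

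The main obstacle I expect is exactly this skeleton-matching step: arguing that no genuinely different pattern $p'$ can both contain $\mathcal{S}_2(p,R)$ and have $\mathcal{S}_2(p',R')$ contained in $L_{len}(p,R)$. The difficulty is that the $len$ relation constrains only lengths, so a factorization of a witness word can a priori slide terminals across variable slots; the crux is to use the three-letter alphabet to design the substituted blocks so that every valid parse is forced, while tracking the length accounting carefully enough that the two symmetric containments cannot be met by patterns of mismatched shape. Once the skeletons are shown to coincide, the rest is just the twofold application of Holte et al.'s fixed-pattern lemma described above, and the characteristic-set conclusion follows by the definitional observation in the first paragraph.
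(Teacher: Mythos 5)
Your proposal is sound in outline and, modulo one caveat below, follows essentially the route the paper intends. The paper gives no self-contained proof of this theorem: it asserts that the statement ``can easily be concluded'' from Holte et al.'s fixed-pattern result, which is exactly the reduction you propose. The place where the paper does carry out such an argument in full is the erasing analogue (Theorem~\ref{prop:Sigmage3}, via Lemmas~\ref{prop:Sigma3_constant_block_matching} and~\ref{prop:Sigma3_varBlocls_identical}), and there your skeleton-matching step appears almost verbatim: a letter $\sigma$ is chosen to differ from the two adjacent terminal letters (this is precisely where $|\Sigma|\ge 3$ enters), and a prefix/suffix induction forces every valid parse of the witness word to align the terminal and variable blocks of $p'$ with those of $p$. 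The one genuine difference is what happens after the skeletons are matched: you outsource the comparison of $R$ and $R'$ to two applications of Holte et al.'s same-pattern lemma, whereas the paper's erasing-case proof never reduces to a fixed pattern --- it shows directly that each group decomposition of $(p,R)$ is a positive integer linear combination of group decompositions of $(p',R')$ and vice versa, concluding $p_{\nf}=p'_{\nf}$. Both routes are valid; yours is lighter on bookkeeping but treats the cited lemma as a black box, while the paper's yields the normal-form characterization as a by-product.

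One caveat on your length accounting. Under the paper's literal definition, $\mathcal{S}_2(p,R)$ contains only words generated with one group at length $2$ and \emph{all other} variables at length $1$; it does not contain the all-length-one witness. For a terminal-free single-group pattern this gives $\mathcal{S}_2(x_1x_2x_3,R)=\Sigma^6$ with $R$ relating all three variables, which is contained in $L_{len}(y_1\cdots y_6,R')$ with all six related; symmetrically $\mathcal{S}_2(p',R')=\Sigma^{12}\subseteq L_{len}(x_1x_2x_3,R)$, even though the two languages differ. So your claim that the words of length $|p|+|G|$ ``simultaneously control $|p|$ versus $|p'|$'' does not go through as stated: you also need the base substitution (every variable at length $1$) among the test words. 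Holte et al.'s actual test sets include it, and your write-up should make that explicit.
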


\smallskip

The subsets of $\mathcal{S}_2(p,R)$ used in the proof are actually of size linear in $|p|$, so that the languages in $\mathcal{L}_{len}$ have linear-size  positive characteristic sets. Moreover, these sets can be effectively constructed as follows.

%\textcolor{olive}{Suppose $\kappa$ is the number of groups in $(p, R)$. For each $i \in \{ 1, \dots, \kappa \}$, define a substitution $\theta_i$ as \( \theta_i (\vec{x}_j(i)) = c^{2|\vec{x}_j(i)|} \) and for any $i' \neq i$, \( \theta_i (\vec{x}_j(i')) = c^{|\vec{x}_j(i')|} \)  where $\omega_{j-1}[|\omega_{j-1}|] \neq c \neq \omega_j[1]$, \( \omega_0[|\omega_0|] = \varepsilon \), \( \omega_{j+2}[|\omega_{j+2}|] = \varepsilon \), and $1 \leq j \leq n+1$ \footnote{$\omega_0$ and $\omega_{j+2}$ do not appear in $p$, but here they are for the consistency for the definition of $c$}. In addition, define $\theta_0$ as \( \theta_0 (\vec{x}_j(i)) = c^{|\vec{x}_j(i)|} \) where $\omega_{j-1}[|\omega_{j-1}|] \neq c \neq \omega_j[1]$. Then, \( \{ \theta_i(p) | 0 \leq i \leq \kappa \} \subset \mathcal{S}_2(p,R) \) is a linear-size positive characteristic set for $L_{len}(p,R)$ with respect to $\mathcal{L}_{len}$ of size linear. }
For each variable group $x$ occurring in $p$, let $\sigma_x\in\Sigma$ be a letter distinct from the last symbol in the nearest terminal block to the left of $x$ in $p$ (if such terminal block exists), and distinct from the first symbol in the nearest terminal block to the right of $x$ in $p$ (if such terminal block exists). Since $|\Sigma|\ge 3$, such a letter $\sigma_x$ always exists. Further, define a substitution $\theta_x$ by 
$$
\theta_x(y):=\begin{cases}
    \sigma_y\sigma_y\,&\mbox{if }y\in[x]\,,\\
    \sigma_y\,,&\mbox{otherwise}\,,
\end{cases}
$$
for all $y\in\mathit{Var}(p)$, and a substitution $\theta$ by $\theta(y)=\sigma_y$ for all $y\in\mathit{Var}(p)$.
Then, as follows from the proof details presented by Holte et al.~\cite{HolteMZ22}, \( \{ \theta_x(p) \mid x\in\mathit{Var}(p) \} \cup\{\theta(p)\}\subset \mathcal{S}_2(p,R) \) is a linear-size positive characteristic set for $L_{len}(p,R)$ with respect to $\mathcal{L}_{len}$.

\iffalse
\begin{proof}
\textcolor{red}{TO BE ADDED BY SEYYED}

\textcolor{OliveGreen}{ As stated in \cite[Theorem 3.1]{HolteMZ22}, if $\mathcal{S}_{2}(p,R)\subseteq L_{len}(p',R')$ and $\mathcal{S}_{2}(p',R')\subseteq L_{len}(p,R)$, then $L_{len}(p,R)= L_{len}(p',R')$. In the proof, they use the concept of length-tuples. what is a length tuple? $LT^2_{base} \rightarrow$ inclusion. $\mathcal{S}_2(p,R)$ gives us the $LT^2_{base}(p,R)$}
\end{proof}
\fi

As a consequence, under the premises of Theorem~\ref{thm:Sigmage3NE}, the equivalence of two relational patterns can be decided by testing membership of a small number of short strings (relative to the length of the underlying patterns).

We will first show that inclusion of two relational pattern languages cannot be decided in the same fashion. In particular, for deciding inclusion, it is not even sufficient to test \emph{all}\/ valid non-erasing $\ell_2$-substitutions (rather than just those generating $\mathcal{S}_2(p,R)$).

\smallskip

\begin{theorem}\label{thm:Sigmage3NEinclusion}
    Let $|\Sigma|\ge 2$. There are relational patterns $(p,R)$ and $(p',R')$ such that $L_{len}(p,R,2)\subseteq L_{len}(p',R')$, but $L_{len}(p,R)\not\subseteq L_{len}(p',R')$. 
\end{theorem}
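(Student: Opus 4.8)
The plan is to refute the naive inclusion test by producing a single explicit pair of relational patterns. The guiding principle is that membership of a word $w$ in $L_{len}(p',R')$ depends only on where the terminal symbols of $p'$ may be placed inside $w$, and that these admissible positions are governed by the purely arithmetic equal-length constraints of $p'$. A size-$3$ equal-length group forces a terminal into a position that is fixed modulo $3$, and the modulus $3$ is exactly what distinguishes a length-$3$ substitution from the length-$1$ and length-$2$ substitutions; by contrast, any single affine or parity condition on the variable lengths is already pinned down by testing the values $1$ and $2$, so a single such condition can never separate $L_{len}(p,R,2)$ from $L_{len}(p,R)$. I would therefore build $p'$ so that it accepts $w$ iff $w$ carries a terminal $a$ at some position $\equiv 1 \pmod 3$, and build $p$ with two forced $a$'s whose positions cover the residues reached by the short substitutions but not the one reached by an $\ell_3$-substitution.

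Concretely, over $\Sigma=\{a,b\}$ (only two symbols are used, so the statement holds for every $|\Sigma|\ge 2$), I would take
$$p = b\,x_1x_2\,a\,x_3x_4x_5\,a\,x_6,\qquad p' = y_1y_2y_3\,a\,y_4,$$
with $R$ putting $x_1,\dots,x_6$ into one equal-length group, and $R'$ putting $y_1,y_2,y_3$ into one equal-length group while leaving $y_4$ free. The first step is a membership lemma for $p'$: a word $w$ lies in $L_{len}(p',R')$ iff $w$ has an $a$ at some position $3s+1$ with $s\ge 1$ and $3s+1\le |w|-1$, that is, at some position $\equiv 1 \pmod 3$ inside the range $[4,|w|-1]$. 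Writing the single equal-length parameter of $p$ as $\ell$, the word generated by $p$ has length $6\ell+3$ and carries its two forced $a$'s at positions $2\ell+2$ and $5\ell+3$.

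The remaining steps are three verifications. For $L_{len}(p,R,2)\subseteq L_{len}(p',R')$ I would treat only $\ell=1$ and $\ell=2$ (the sole $\ell_2$-substitutions, since $p$ has one group): at $\ell=1$ the first forced $a$ lands at position $4\equiv 1 \pmod 3$, and at $\ell=2$ the second forced $a$ lands at position $13\equiv 1 \pmod 3$, both inside the admissible range, so every such word—whatever its free content—is accepted by $p'$ through that terminal. For the non-inclusion I would exhibit the $\ell_3$-word $w^\ast = b^{7}ab^{9}ab^{3}\in L_{len}(p,R)$, whose only $a$'s sit at positions $8\equiv 2$ and $18\equiv 0 \pmod 3$; since neither is $\equiv 1 \pmod 3$, the membership lemma yields $w^\ast\notin L_{len}(p',R')$, hence $L_{len}(p,R)\not\subseteq L_{len}(p',R')$.

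I expect the main obstacle to be the first inclusion rather than the non-inclusion. Because the variable positions of $p$ carry arbitrary terminal content, $L_{len}(p,R,2)$ contains words with extra, uncontrolled occurrences of $a$, and one must ensure that no such content can ever push a short word out of $L_{len}(p',R')$. This is precisely why the construction pins a \emph{forced} terminal $a$ of $p$ at a position $\equiv 1 \pmod 3$ for each short substitution: that terminal is present independently of the content, so the membership lemma applies uniformly, and any additional $a$'s only enlarge the set of admissible placements. The secondary subtlety is arithmetic bookkeeping—verifying that the relevant positions genuinely fall in $[4,|w|-1]$ and that the residues are as claimed—which becomes routine once the leading terminal $b$ of $p$ is included to shift all positions into the intended residue classes.
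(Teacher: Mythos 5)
Your proof is correct. It follows the same overall strategy as the paper---an explicit pair of patterns in which the length arithmetic of the equal-length groups lets every non-erasing $\ell_2$-substitution be re-absorbed by $p'$ while some longer substitution cannot---but the concrete construction and the organizing lemma are different. The paper takes $p = x_1x_2\,ab\,y_1y_2y_3\,ab\,z_1\cdots z_5$ and $p' = x'_1x'_2\,ab\,z'_1\cdots z'_5$ with groups of sizes $2,3,5$, and argues by cases: for $k_2=1$ the middle block merges into the size-$5$ group ($3k_2+5k_3+2$ is a multiple of $5$), for $k_2=2$ it merges into the size-$2$ group ($2k_1+2+3k_2$ is even), while the witness $a^2\,ab\,a^9\,ab\,a^5$ defeats both divisibility conditions at once. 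You instead use a single group of size $6$ in $p$ and a size-$3$ group plus a free variable in $p'$, and you first prove an exact characterization of $L_{len}(p',R')$ (an $a$ at some position $\equiv 1 \pmod 3$ in the window $[4,|w|-1]$); the inclusion then reduces to checking that one of the two forced $a$'s of $p$ lands at such a position for $\ell\in\{1,2\}$ (positions $4$ and $13$), and the non-inclusion to checking that for $\ell=3$ the word $b^7ab^9ab^3$ has its only $a$'s at positions $8$ and $18$, both in the wrong residue class. I verified the arithmetic: $|w|=6\ell+3$ with forced $a$'s at $2\ell+2$ and $5\ell+3$, and the membership lemma is an honest iff because any word of length $3s$ splits freely into three words of length $s$. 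Your version buys a cleaner, single modular invariant and a complete description of the target language, at the cost of a slightly less symmetric pattern; the paper's version avoids stating a membership characterization but needs two separate divisibility cases. Both are sound.
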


\begin{proof}
Let \( p = x_1 x_2 \ ab\ y_1 y_2 y_3\ ab\ z_1 z_2 z_3 z_4 z_5 \) and \( p' = x'_1 x'_2 ab z'_1 z'_2 z'_3 z'_4 z'_5 \) where all shown symbols other than $a,b$ are variables and \[ R=\{ (x_1, x_2), (y_1, y_2), (y_2, y_3), (z_1, z_2), (z_2, z_3), (z_3, z_4), (z_4, z_5)\} ,\ \] \[ R' =\{ (x'_1, x'_2), (z'_1, z'_2), (z'_2, z'_3), (z'_3, z'_4), (z'_4, z'_5)\} .\ \] 
Thus, in each block of each pattern, all the variables within the block form a group.

To see that $L_{len}(p,R,2)\subseteq L_{len}(p',R')$, let $\theta$ be any non-erasing $R$-substitution that replaces variables by words of length at most 2. Then $\theta(p)$ is of the form $w_1abw_2abw_3$ where $|w_1|=2k_1$, $|w_2|=3k_2$, and $|w_3|=5k_3$, for $k_i\in\{1,2\}$. If $k_2=1$, then $3k_2+5k_3+2=5(k_3+1)$, and thus $\theta(p)$ can be generated from $p'$ by a non-erasing $R'$-substitution that replaces the $x'_i$ by suitable words of length $k_1$, and the $z'_j$ by suitable words of length $k_3+1$. If $k_2=2$, then $2k_1+2+3k_2=2k_1+8=2(k_1+4)$. Therefore, $\theta(p)$ can be generated from $p'$ by a non-erasing $R'$-substitution that replaces the $x'_i$ by suitable words of length $k_1+4$, and each $z'_j$ by $\theta(z_j)$.

In order to verify $L_{len}(p,R)\not\subseteq L_{len}(p',R')$, consider the word $a^2\ ab\ a^9 \ ab\ a^5$, which clearly belongs to $L_{len}(p,R)$.
%Now let $c\in\Sigma$ be an arbitrary letter (possibly equal to $a$ or $b$) and consider $R$-substitutions $\theta_1$, $\theta_2$, $\theta_3$ such that the following holds:
%\begin{eqnarray*} 
%\theta_1(p) &=& c^4\ ab\ c^3\ ab\ c^5\,, \\
%\theta_2(p) &=& c^4\ ab\ c^6\ ab\ c^5\,, \\
%\theta_3(p) &=& c^4\ ab\ c^9\ ab\ c^5\,. 
%\end{eqnarray*}
%Obviously, for each $i$, such $\theta_i$ is an $\ell_i$-substitution. Now, there exist $R'$-substitutions $\theta'_1$ and $\theta'_2$ such that 
%\[ \theta'_1(x'_1 x'_2) = c^4 ,\  \theta'_1(z'_1 z'_2 z'_3 z'_4 z'_5) = c^3\ ab\ c^5 \mbox{ and } \theta'_2(x'_1 x'_2) = c^4\ ab\ c^6 ,\  \theta'_2(z'_1 z'_2 z'_3 z'_4 z'_5) = c^5 ,\ \]
%which yields \( \theta_1(p) = \theta'_1(p') \in L_{len}(p',R') \) and \( \theta_2(p) = \theta'_2(p) \in L_{len}(p',R') \). 
Fix any non-erasing $R'$-substitution $\theta'$. Since $|\theta'(x'_1 x'_2)|$ is even, we know that $\theta'(x'_1 x'_2) \ne a^2\ ab\ a^9$. Likewise, since $|\theta'(z'_1 z'_2 z'_3 z'_4 z'_5)|$ is a multiple of $5$, we have  $\theta'_3(z'_1 z'_2 z'_3 z'_4 z'_5) \ne a^9\ ab\ a^5$. Thus $\theta'(p')\ne\theta(p)$, which concludes the proof.
\end{proof}

For the erasing case, we obtain an analogous negative result on deciding inclusion. The construction in the proof is similar, but requires a slightly different design.

\smallskip

\begin{theorem}
    Let $|\Sigma|\ge 2$. There are relational patterns $(p,R)$ and $(p',R')$ such that $\varepsilon L(p,R,2)\subseteq \varepsilon L_{len}(p',R')$, but $\varepsilon L_{len}(p,R)\not\subseteq \varepsilon L_{len}(p',R')$. 
\end{theorem}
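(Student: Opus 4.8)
The plan is to mimic the construction in the proof of Theorem~\ref{thm:Sigmage3NEinclusion}, keeping essentially the same two patterns but re-examining every step under the more permissive erasing semantics. Concretely, I would take $p = x_1 x_2\, ab\, y_1 y_2 y_3\, ab\, z_1 z_2 z_3 z_4 z_5$ and $p' = x'_1 x'_2\, ab\, z'_1 z'_2 z'_3 z'_4 z'_5$, with each block of variables forming one group, exactly as before. The first thing to record is a clean membership characterization: $w\in\varepsilon L_{len}(p',R')$ holds iff $w$ admits a factorization $w=u\,(ab)\,v$ at some literal occurrence of $ab$ in $w$ with $|u|$ even and $|v|\equiv 0\pmod 5$. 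In contrast to the non-erasing case, here $|u|$ and $|v|$ may be $0$, since erasing makes the $x'$- and $z'$-blocks contractible; this is the single point at which the argument departs from the previous one.

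For the inclusion $\varepsilon L_{len}(p,R,2)\subseteq\varepsilon L_{len}(p',R')$, I would let $\theta$ be any (possibly erasing) $\ell_2$-substitution, so that $\theta(p)=w_1\,ab\,w_2\,ab\,w_3$ with $|w_1|=2k_1$, $|w_2|=3k_2$, $|w_3|=5k_3$ and $k_1,k_2,k_3\in\{0,1,2\}$. The key is a short case split on $k_2$. If $k_2=1$, cut at the first literal $ab$, giving $u=w_1$ with $|u|=2k_1$ even and $v=w_2\,ab\,w_3$ with $|v|=3+2+5k_3=5(k_3+1)$. If $k_2\in\{0,2\}$, cut at the second literal $ab$, giving $u=w_1\,ab\,w_2$ with $|u|=2k_1+2+3k_2$ even (as $k_2$ is even) and $v=w_3$ with $|v|=5k_3$. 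Either way $\theta(p)\in\varepsilon L_{len}(p',R')$, and since the two literal $ab$'s are present regardless of the actual symbols substituted, the inclusion follows.

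For the separation, I would exhibit the witness $w^\ast=a^2\,ab\,a^9\,ab\,a^5$, which lies in $L_{len}(p,R)\subseteq\varepsilon L_{len}(p,R)$ via $k_1=1$, $k_2=3$, $k_3=1$. To show $w^\ast\notin\varepsilon L_{len}(p',R')$, I would use that $w^\ast$ contains the letter $b$ only at its two literal $ab$'s, so any factorization $w^\ast=u\,(ab)\,v$ must cut at one of these two positions, yielding $(|u|,|v|)=(2,16)$ or $(|u|,|v|)=(13,5)$. The first fails $|v|\equiv 0\pmod 5$ and the second fails the parity of $|u|$, so no admissible factorization exists.

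The main obstacle, and the reason the design needs slight care relative to the non-erasing proof, is precisely this extra permissiveness: because $\varepsilon L_{len}(p',R')$ now admits factorizations in which $u$ or $v$ is empty, the separating word can no longer be excluded merely by the lower bounds $|u|\ge 2$ and $|v|\ge 5$ available in Theorem~\ref{thm:Sigmage3NEinclusion}. Instead, the witness must be engineered so that at each of its (few) literal $ab$-positions the induced split $(|u|,|v|)$ violates either the parity of $|u|$ or the divisibility of $|v|$ by $5$; placing the two $b$'s at positions $4$ and $15$ of a length-$20$ word achieves exactly this, as checked above. The containment direction is routine once the case split on $k_2$ is in place, so I expect essentially all the work to go into selecting and verifying the separating word.
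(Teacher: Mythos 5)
Your proof is correct, but it takes a different route from the one in the paper. You recycle the pair of patterns from Theorem~\ref{thm:Sigmage3NEinclusion} ($p = x_1x_2\,ab\,y_1y_2y_3\,ab\,z_1\cdots z_5$, $p' = x'_1x'_2\,ab\,z'_1\cdots z'_5$) and observe that the only effect of passing to erasing substitutions is to allow $k_1,k_2,k_3$ (and hence $|u|$, $|v|$) to be $0$; your case split on the parity of $k_2$ then covers all of $\varepsilon L_{len}(p,R,2)$, and the witness $a^2\,ab\,a^9\,ab\,a^5$ is excluded because its only two occurrences of $ab$ force the splits $(|u|,|v|)=(2,16)$ and $(13,5)$, each of which violates one of the two arithmetic constraints. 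This is sound: the paper's own exclusion argument for the non-erasing case already relies only on parity and divisibility by $5$, never on the lower bounds $|u|\ge 2$, $|v|\ge 5$, so it survives erasure unchanged. The paper instead builds a fresh pair of patterns: $p = ab^{100}\,y_1y_2\,z_1z_2z_3\,ab^{100}$, and a $p'$ consisting of nine all-related variable blocks whose lengths $101,103,\ldots,109,111$ enumerate exactly the lengths $|ab^{100}w_1w_2|$ realizable by $\ell_2$-substitutions of $p$, with the witness $ab^{100}\,a^9\,ab^{100}$ excluded because $110$ is not among the enumerated lengths. Your construction is more economical (it avoids the length-enumeration gadget entirely and shows that the earlier example already does double duty), while the paper's construction trades elegance for a more mechanical, easily checkable enumeration; both establish the theorem.
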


\begin{proof}
 Let \[ p = ab^{100} \ y_1 y_2\ z_1 z_2 z_3\ ab^{100} \, , \]  \[ p' = \vec{x}^{(103)}_{1}\ \vec{x}^{(104)}_{2}\ \vec{x}^{(105)}_{3}\ \vec{x}^{(106)}_{4}\ \vec{x}^{(107)}_{5}\  \vec{x}^{(108)}_{6}\ \vec{x}^{(109)}_{7}\ \vec{x}^{(111)}_{8}\ ab^{100}\ \vec{x}^{(101)}_{9}\,, \] where \( \vec{x}^{(j)}_i = x_{i,1}
    \dots x_{i,j} \) is a sequence of $j$ variables. Moreover, let \[R=\{ (y_1,y_2), (z_1,z_2),(z_2,z_3)\} ,\ \] 
    \[ R' =\{ (x_{i,j},x_{i,j'})\mid 1\le i\le 7 , 1\le j,j' \le\jmath_i\}\,,\ \]
    where $\jmath_1=103$, $\jmath_2=104$, $\jmath_3=105$, $\jmath_4=106$,  $\jmath_5=107$, $\jmath_6=108$, $\jmath_7=109$, $\jmath_8=111$, $\jmath_9=101$.

    To see that $\varepsilon L_{len}(p,R,2)\subseteq \varepsilon L_{len}(p',R')$, let $\theta$ be any $R$-substitution that replaces variables by words of length at most 2. Then $\theta(p)$ is of the form $ab^{100}w_1w_2ab^{100}$ where $|w_1|=2k_1$ and $|w_2|=3k_2$, for $k_i\in\{0,1,2\}$. Thus, the length $|w_1w_2|$ could take the values 0, 2, 3, 4, 5, 6, 7, 8, or 10. Consequently, the length $|ab^{100}w_1w_2|=|w_1w_2ab^{100}|$ could take the values 101, 103, 104, 105, 106, 107, 108, 109, or 111.  If $|ab^{100}w_1w_2|\in\{103, 104, 105, 106, 107, 108, 109, 111\}$, then  consider an $R'$-substitution $\theta'$ that, for some suitably chosen $i\in\{1,\ldots,8\}$, replaces the variables of the form $x_{i,j}$ with suitable strings of length 1, so that $\theta'(\vec{x}_i^{(\jmath_i)})=ab^{100}w_1w_2$; moreover, $\theta'$ erases all other variables. Then $\theta'(p')=\theta(p)$. If $|w_1w_2ab^{100}| (=|ab^{100}w_1w_2|)=101$, then consider an $R'$-substitution $\theta'$ that erases all variables of the form $x_{i,j}$ for $1\le i\le 8$ and replaces the variables of the form $x_{9,j}$ with suitable strings of length 2, so that $\theta'(\vec{x}_9^{(101)})=w_1w_2ab^{100}$. Here as well we obtain $\theta'(p')=\theta(p)$. Therefore, $\varepsilon L_{len}(p,R,2)\subseteq \varepsilon L_{len}(p',R')$.

    In order to verify $\varepsilon L_{len}(p,R)\not\subseteq \varepsilon L_{len}(p',R')$, consider the word $ab^{100}\ a^9\ ab^{100}$, which is clearly a member of $\varepsilon L_{len}(p,R)$. 
    Since $|ab^{100} a^9|=|a^9ab^{100}|=110$, we obtain $ab^{100}\ a^9\ ab^{100}\notin\varepsilon L_{len}(p',R')$, which concludes the proof.
\end{proof}

%\textcolor{violet}{It said inside the proof above: ``Note that this argument can be generalized to show that there is no fixed $r$ such that we can decide the inclusion problem just by looking at $\ell_r$-substitutions.'' Firstly, please do not put such a statement inside a proof, unless it is part of the proof the theorem itself. Here it is not part of the proof, but a potentially important observation. Secondly, I think in order to keep it, we would have to formally prove it. Since we are tight for time, I suggest to skip this statement. We can work on it in case we end up having enough time before November 10, or else afterwards, if we like. We would then also want to know whether the same generalization holds in the non-erasing case.} \textcolor{red}{SM: I keep a copy of this version not to forget to work on it.}

\subsection{Alphabets of Size at Least Three}

For learning-theoretic purposes, by contrast with $r=eq$ or $r=rev$, there is no real technical difference between the erasing and the non-erasing case when $r=len$. In particular, Theorem~\ref{thm:Sigmage3NE}, which was proven by Holte et al.\ \cite{HolteMZ22}, carries over to the erasing case, using only a few technical changes to the proof. Where substitutions replacing variables with strings of length either 1 or 2 were crucial in Holte et al.'s result, in the erasing case, it is sufficient to consider substitutions replacing variables with strings of length either 0 or 1.
To formulate this result, we first define, for any relational pattern $(p,R)$ and any $\ell \in \mathbb{N}\setminus \{0\}$:
\[\mathcal{S}_{\varepsilon,\ell}(p,R)=\{\varphi(p)\mid\exists x\in \mathit{Var}(p)\ \exists z\le \ell\ [\forall y\in[x]\ |\varphi(y)|= z \wedge \forall y\notin[x]\ \varphi(y)=\varepsilon]\}\,.\]

The analogous version of Theorem~\ref{thm:Sigmage3NE} for the erasing case now reads as follows.

\smallskip

\begin{theorem}\label{prop:Sigmage3}
    Let $|\Sigma|\ge 3$. If $\mathcal{S}_{\varepsilon,1}(p,R)\subseteq\varepsilon L_{len}(p',R')$ and $\mathcal{S}_{\varepsilon,1}(p,R)\subseteq\varepsilon L_{len}(p',R')$, then $\varepsilon L_{len}(p,R) = \varepsilon L_{len}(p',R')$. In particular, $\mathcal{S}_{\varepsilon,1}(p,R)$ is a positive characteristic set for $\varepsilon L_{len}(p,R)$ with respect to $\varepsilon\mathcal{L}_{len}$.
\end{theorem}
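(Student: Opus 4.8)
The plan is to mirror the non-erasing argument behind Theorem~\ref{thm:Sigmage3NE}, translating every substitution length downward by one. First I would reduce the theorem to the single directed statement
\[
\mathcal{S}_{\varepsilon,1}(p,R)\subseteq \varepsilon L_{len}(p',R')\ \Longrightarrow\ \varepsilon L_{len}(p,R)\subseteq \varepsilon L_{len}(p',R')\,,
\]
since applying this (intended) lemma to both hypotheses of the theorem yields the two inclusions and hence $\varepsilon L_{len}(p,R)=\varepsilon L_{len}(p',R')$. The ``in particular'' part then follows immediately: writing $C=\mathcal{S}_{\varepsilon,1}(p,R)\times\{1\}$, consistency with $\varepsilon L_{len}(p,R)$ holds because $\mathcal{S}_{\varepsilon,1}(p,R)\subseteq\varepsilon L_{len}(p,R)$ by definition; and if some $(p',R')$ with $\varepsilon L_{len}(p',R')\ne\varepsilon L_{len}(p,R)$ had both $\mathcal{S}_{\varepsilon,1}(p,R)\subseteq\varepsilon L_{len}(p',R')$ and $\mathcal{S}_{\varepsilon,1}(p',R')\subseteq\varepsilon L_{len}(p,R)$, the equivalence just proven would force the two languages to coincide, a contradiction. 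This is exactly Condition~2 in the definition of a positive characteristic set.

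For the directed lemma I would exploit that, under $r=len$, only lengths and terminal blocks matter. Decomposing $p=t_0v_1t_1\cdots v_kt_k$ into its terminal blocks $t_i$ and variable blocks $v_i$, the set of gap-length vectors realizable by $R$-substitutions is exactly the set of nonnegative integer combinations $\sum_{H}c_H\,\iota(H)$, where $H$ ranges over the groups of $(p,R)$ and $\iota(H)\in\mathbb{N}^k$ records how many variables of $H$ sit in each variable block; each gap of length $g_i$ may then be filled by an arbitrary word of length $g_i$. The set $\mathcal{S}_{\varepsilon,1}(p,R)$ samples precisely the generators of this cone: the all-erased (core) word realizes the zero vector, and, for each group $H$, the length-$1$ activation of $H$ (erasing all other groups) realizes the generator $\iota(H)$, with arbitrary symbols in the activated positions. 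This is the erasing counterpart of Holte et al.'s base set $\mathcal{S}_2(p,R)$, whose realizable tuples form the same cone translated by the all-ones vector; concretely, replacing their lengths $1$ and $2$ by $0$ and $1$ subtracts the all-ones vector from every group length, which leaves the generators $\iota(H)$, and therefore the entire length-tuple bookkeeping, unchanged.

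The core of the argument is then to show that membership of these finitely many base words in $\varepsilon L_{len}(p',R')$ propagates to membership of every word of $\varepsilon L_{len}(p,R)$. I would re-run Holte et al.'s length-tuple analysis with the shifted lengths: the core word locates the terminal blocks of $p'$ relative to those of $p$, and each generator word forces a realizable ``direction'' of $(p',R')$ aligned with $\iota(H)$; the remaining task is to verify that these directions combine additively, so that every nonnegative combination---i.e., every word of $\varepsilon L_{len}(p,R)$---is realized by lengthening the matched segments of $p'$. I expect the main obstacle to be exactly this additivity step: one must rule out that $(p',R')$ matches each short base word through a different, incompatible alignment while failing on some longer word. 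This is where $|\Sigma|\ge 3$ is essential---with at least three symbols the activated gaps can be filled by symbols that cannot be mistaken for fragments of terminal blocks, so the alignment of $p'$ to each witness is forced and the alignments agree across witnesses; this unambiguity is precisely what collapses in the binary case (treated separately in the paper). The only genuinely new point relative to the non-erasing proof is that erasing can now produce empty gaps, which I would dispatch by noting that terminal blocks remain nonempty and hence continue to separate the variable blocks, so no degenerate merging of blocks occurs.
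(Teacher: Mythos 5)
There is a genuine gap, and it sits at the very first step. You reduce the theorem to the one-directional claim that $\mathcal{S}_{\varepsilon,1}(p,R)\subseteq \varepsilon L_{len}(p',R')$ alone implies $\varepsilon L_{len}(p,R)\subseteq \varepsilon L_{len}(p',R')$. This claim is false, even for $|\Sigma|\ge 3$: the paper's erasing analogue of Theorem~\ref{thm:Sigmage3NEinclusion} exhibits $p=ab^{100}\,y_1y_2\,z_1z_2z_3\,ab^{100}$ and a pattern $p'$ (several long, internally related variable blocks around a single terminal block $ab^{100}$) with $\varepsilon L_{len}(p,R,2)\subseteq \varepsilon L_{len}(p',R')$ --- hence a fortiori $\mathcal{S}_{\varepsilon,1}(p,R)\subseteq \varepsilon L_{len}(p',R')$ --- while $ab^{100}a^9ab^{100}\in\varepsilon L_{len}(p,R)\setminus\varepsilon L_{len}(p',R')$. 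The construction uses only the letters $a,b$ and survives enlarging the alphabet. The failure is exactly the ``additivity/alignment'' obstacle you yourself flag as the main difficulty: with only the forward hypothesis, $(p',R')$ may realize each short witness through an entirely different terminal-block skeleton (here $p'$ has one terminal block where $p$ has two), so the generators of your length cone need not be matched against any common alignment. Your observation that $|\Sigma|\ge 3$ forces the alignment is correct only once one already knows that $p$ and $p'$ have the same sequence of terminal blocks, and that fact cannot be extracted from the one-sided hypothesis.

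The paper accordingly never proves a directed inclusion lemma; it uses both hypotheses jointly throughout. It first shows (Lemma~\ref{prop:Sigma3_constant_block_matching}) that $\mathcal{S}_{\varepsilon,1}(p,R)\subseteq\varepsilon L_{len}(p',R')$ \emph{and} $\mathcal{S}_{\varepsilon,1}(p',R')\subseteq\varepsilon L_{len}(p,R)$ together force $n=n'$ and $\omega_i=\omega'_i$ for all $i$; only then does a linear-combination argument on group decompositions, run in both directions (Lemma~\ref{prop:Sigma3_varBlocls_identical}), show that the equal-length normal forms coincide, whence the languages are equal by Corollary~\ref{cor:nf}. Your cone-of-generators picture is essentially the content of that second lemma and would be sound once the terminal skeletons are known to agree, but the reduction you start from cannot be repaired: the symmetric hypothesis must be carried through the entire argument. (Incidentally, the second hypothesis in the theorem as printed is a typo for $\mathcal{S}_{\varepsilon,1}(p',R')\subseteq\varepsilon L_{len}(p,R)$; read literally, the two hypotheses coincide and the statement would reduce to the false directed claim.)
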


\smallskip

A close inspection of the proof shows again that the actual subsets of $\mathcal{S}_{\varepsilon,1}(p,R)$ used in the construction are of size linear in $|p|$.

To prove this theorem, we adjust Holte et al.'s proof from the non-erasing case, which requires a few technical definitions and lemmas. We begin with the erasing version of  \cite[Lemma 8]{HolteMZ22}; its proof is straightforward:

\smallskip

 \begin{lemma}\label{swap-variables_length}
    Suppose $(p, R)$ is a relational pattern, where $p = q_1 x_i x_{i+1} q_2$ with $q_1, q_2 \in (\Sigma \cup X)^\ast$. Then $\varepsilon L_{len} (p, R) = \varepsilon L_{len} (p, R')$, where $R'$ is obtained from R by swapping $x_i$ with $x_{i+1}$ in all relations.
 \end{lemma}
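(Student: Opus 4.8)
The plan is to prove the two inclusions via a length-preserving \emph{re-splitting} of substitutions, exploiting that $x_i$ and $x_{i+1}$ are adjacent in $p$ and that $r=len$ constrains only lengths, not the actual symbols. First I would observe that forming $R'$ amounts to applying the transposition $\tau=(x_i\ x_{i+1})$, which fixes every other variable, coordinatewise to every pair of $R$; that is, $R'=\{(\tau(u),\tau(v))\mid (u,v)\in R\}$ (note $x_i\ne x_{i+1}$, since a pattern carries no repeated variable, so $\tau$ is a genuine transposition). Since $\tau$ is an involution and $p$ is left unchanged, $R$ is recovered from $R'$ by the identical operation. Hence it suffices to establish a single inclusion, say $\varepsilon L_{len}(p,R)\subseteq \varepsilon L_{len}(p,R')$, and the reverse inclusion follows by applying the same argument to $(p,R')$ in place of $(p,R)$.

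For that inclusion, let $\varphi$ be any $R$-substitution and write $w=\varphi(p)=\varphi(q_1)\,\varphi(x_i)\varphi(x_{i+1})\,\varphi(q_2)$; set $a=|\varphi(x_i)|$, $b=|\varphi(x_{i+1})|$, and $W=\varphi(x_i)\varphi(x_{i+1})$. I would define $\psi$ by $\psi(x)=\varphi(x)$ for all $x\notin\{x_i,x_{i+1}\}$, with $\psi(x_i)$ the length-$b$ prefix of $W$ and $\psi(x_{i+1})$ the length-$a$ suffix of $W$. Because $x_i$ and $x_{i+1}$ sit side by side in $p$, this re-splitting leaves the whole word intact: $\psi(x_i)\psi(x_{i+1})=W=\varphi(x_i)\varphi(x_{i+1})$ and every other factor is unchanged, so $\psi(p)=\varphi(p)=w$.

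It then remains to verify that $\psi$ is valid for $R'$. The key bookkeeping identity is $|\psi(v)|=|\varphi(\tau(v))|$ for every variable $v$: this is immediate for $v\notin\{x_i,x_{i+1}\}$ (both sides equal $|\varphi(v)|$), while $|\psi(x_i)|=b=|\varphi(x_{i+1})|=|\varphi(\tau(x_i))|$ and $|\psi(x_{i+1})|=a=|\varphi(x_i)|=|\varphi(\tau(x_{i+1}))|$. Given any $(u,v)\in R'$, we have $(\tau(u),\tau(v))\in R$, so validity of $\varphi$ for $R$ yields $|\varphi(\tau(u))|=|\varphi(\tau(v))|$, and the identity then gives $|\psi(u)|=|\psi(v)|$, i.e.\ $(\psi(u),\psi(v))\in len$. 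Thus $\psi$ is a valid $R'$-substitution with $\psi(p)=w$, proving $w\in\varepsilon L_{len}(p,R')$ and hence the inclusion.

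I expect the only point requiring genuine care to be this validity check, i.e.\ confirming that swapping the two block lengths is exactly compensated by the transposition $\tau$ relabelling the relation; the identity $|\psi(v)|=|\varphi(\tau(v))|$ makes this transparent. Crucially it relies on adjacency, so that the re-split does not disturb the rest of $w$, and on $r=len$, so that only lengths, never the actual symbols, need to be matched. The erasing setting introduces no difficulty, since $a$ or $b$ may be $0$ and the corresponding prefix or suffix of $W$ is then simply $\varepsilon$.
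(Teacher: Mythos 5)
Your proof is correct and is exactly the argument the paper has in mind; the paper simply omits it, declaring the lemma's proof straightforward. The re-splitting of the adjacent block $W=\varphi(x_i)\varphi(x_{i+1})$ together with the identity $|\psi(v)|=|\varphi(\tau(v))|$ is the right (and complete) way to make "straightforward" precise.
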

 
\smallskip

\iffalse
\begin{lemma}\label{lem:copy-groups-length}
    Suppose $(p,R)$ is a relational pattern and $x \in \mathit{Var}(p)$. Let $\ell\in\mathbb{N}\setminus\{0\}$. Moreover, suppose $p'$ is obtained from $p$ by replacing each variable $x_i \in [x]$ with the word $x_i x_i^1 \ldots x_i^\ell \in X^{\ell+1}$. Set $R' := R \cup \{(x_i^t,x_{j}^t) \mid x_i, x_j \in [x]; 1 \leq t \leq \ell\}$. Then $\varepsilon L_{len}(p, R) = \varepsilon L_{len}(p', R')$.
\end{lemma}
\fi

\begin{definition}
    Let $(p, R)$ be a relational pattern, and let $(\vec{x}_1$, \dots, $\vec{x}_n)$ be the list of non-empty variable blocks in $p$, ordered from left to right according to their appearance in $p$. Let $[x]$ be a group in $(p, R)$. The decomposition of $[x]$ over $(p,R)$ is the vector $\langle |\mathit{Var}(\vec{x}_1)\cap[x]|, \dots, |\mathit{Var}(\vec{x}_n)\cap[x]| \rangle$.
\end{definition}

\smallskip

The following lemma is now straightforward.

\smallskip

\begin{lemma}\label{lem:linearcombination}
Let $(\bar{p},R)$ be any relational pattern, and let $p$ be the pattern obtained by iteratively removing from $\bar{p}$, in a one at a time manner, all variable groups whose decomposition is a positive integer linear combination of the decomposition of other groups in $(\bar{p},R)$. Then $\varepsilon L_{len}(\bar{p}, R) = \varepsilon L_{len}(p, R)$.
\end{lemma}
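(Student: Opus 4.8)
The plan is to reduce the claim to a statement purely about \emph{lengths}. First I would record the structural description of $\varepsilon L_{len}$ that underlies the whole equal-length section: since related variables constrain only the \emph{lengths} of their images and never the symbols, a word $w$ lies in $\varepsilon L_{len}(\bar p,R)$ if and only if $w$ can be cut into the terminal blocks of $\bar p$ (appearing literally and in order) interleaved with arbitrary words $u_1,\dots,u_n$ filling the $n$ variable blocks $\vec x_1,\dots,\vec x_n$, where the length vector $(|u_1|,\dots,|u_n|)$ is \emph{achievable}, i.e.\ $|u_k|=\sum_g \langle d_g\rangle_k\,\ell_g$ for some choice of non-negative integers $\ell_g$ (one length $\ell_g$ per group $g$), and $\langle d_g\rangle_k=|\mathit{Var}(\vec x_k)\cap g|$ is the $k$-th entry of the decomposition of $g$. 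Thus two relational patterns with the same terminal-block skeleton generate the same language precisely when they admit the same set of achievable length vectors.

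It suffices to treat the removal of a single group, say $g_0$, whose decomposition satisfies $d_{g_0}=\sum_{h\neq g_0}c_h\,d_h$ with all $c_h\in\mathbb{N}$; the general statement then follows by iterating, at each step re-expressing the defining combination of a still-removable group over the groups that are currently present. The first key step is a structural observation: because the $c_h$ are non-negative and $d_{g_0}=\sum_h c_h d_h$, every variable block $\vec x_k$ with $\langle d_{g_0}\rangle_k>0$ must also contain a variable of some retained group $h$ with $c_h>0$ and $\langle d_h\rangle_k>0$. Consequently, deleting the variables of $g_0$ from $\bar p$ never empties a variable block, so $p$ has exactly the same terminal-block skeleton and the same list of variable-block positions as $\bar p$, and the decomposition of every retained group is literally unchanged. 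This is what makes the reduction to lengths legitimate, and I expect it to be the main point to get right.

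With the skeletons aligned, I would finish by the two inclusions at the level of length vectors. For $\varepsilon L_{len}(p,R)\subseteq\varepsilon L_{len}(\bar p,R)$, any achievable vector for $p$ is achievable for $\bar p$ by setting $\ell_{g_0}=0$ (erasing is permitted), since $g_0$ then contributes nothing to any block total. For the reverse inclusion, given lengths $(\ell_g)_g$ realising a word $w$ of $\varepsilon L_{len}(\bar p,R)$, define $\ell'_h=\ell_h+c_h\,\ell_{g_0}$ for each retained group $h$ and drop $g_0$. A one-line computation shows that for every block $k$,
\[\sum_{h\neq g_0}\langle d_h\rangle_k\,\ell'_h=\sum_{h\neq g_0}\langle d_h\rangle_k\,\ell_h+\ell_{g_0}\sum_{h\neq g_0}c_h\langle d_h\rangle_k=\sum_{h\neq g_0}\langle d_h\rangle_k\,\ell_h+\ell_{g_0}\langle d_{g_0}\rangle_k,\]
which is exactly the original block total; the $\ell'_h$ are non-negative integers precisely because the $c_h$ lie in $\mathbb{N}$. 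Since within a variable block the symbols are free and only the total length is constrained (cf.\ Lemma~\ref{swap-variables_length}), the same word $w$ can be regenerated from $p$ by splitting each block's content among its retained variables according to $(\ell'_h)$. This yields both inclusions, hence $\varepsilon L_{len}(\bar p,R)=\varepsilon L_{len}(p,R)$, and iterating over all removable groups completes the proof. The one place demanding care beyond routine bookkeeping is the iteration: re-expressing a removable group's combination over the currently present groups while preserving non-negative integer coefficients, which I would handle by substituting the already established combinations for previously removed groups, the process terminating because the number of groups strictly decreases.
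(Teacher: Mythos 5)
The paper offers no proof of this lemma at all: it is introduced with the remark that it ``is now straightforward'' and then used as a black box in building the equal-length normal form. Your argument is therefore not competing with a written proof but supplying the one the authors omitted, and the route you take is the intended one. Since the $len$ relation constrains only lengths, membership in $\varepsilon L_{len}(\cdot,R)$ reduces to the achievability of the block-length vector as a non-negative integer combination of the group decompositions, and removing a group whose decomposition already lies in the monoid generated by the others does not change that monoid. Your preliminary observation that the removal never empties a variable block (a positive entry of $d_{g_0}$ forces a positive entry of some retained $d_h$ in the same block, since a sum of non-negative terms cannot be positive otherwise) is exactly the point that needs to be made explicit, because it guarantees that $p$ and $\bar p$ have the same terminal skeleton and hence that the comparison really does reduce to length vectors; the two displayed inclusions are then an elementary computation, with the non-negativity of the $c_h$ correctly identified as the reason $\ell'_h=\ell_h+c_h\ell_{g_0}$ stays in $\mathbb{N}$. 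All of this is correct.

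The one place where your write-up is too quick is the iteration step, which you flag but then dispatch by asserting that one can always ``re-express'' a removable group over the currently surviving groups by substitution. That re-expression can fail: if $d_{g_1}=d_{g_2}$, each group is a positive integer combination of the other, but after removing $g_1$ the substitution yields only the tautology $d_{g_2}=d_{g_2}$, not a combination of the remaining groups --- and indeed $g_2$ cannot then be removed without shrinking the language (take $\bar p=a\,x_1x_2\,b$ with $x_1,x_2$ free: deleting both groups turns $a\Sigma^*b$ into $\{ab\}$). So the lemma as literally stated, with removability judged against the \emph{original} group set, is false in such degenerate cases, and what your argument actually proves is the sequential version: a group may be deleted whenever its decomposition is a non-negative integer combination of the groups \emph{still present}. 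That is the version the paper needs (Definition~\ref{equal-length normal form} fixes a canonical order of groups, so a greedy sequential removal is well defined) and presumably the one the authors intended, but you should state explicitly that your iteration proves this version rather than suggesting the substitution always succeeds.
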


\smallskip

Lemmas~\ref{swap-variables_length} and \ref{lem:linearcombination} allow us to define a normal form representation of relational patterns over the relation $len$.

\smallskip

 \begin{definition}\label{equal-length normal form}
 Suppose $(p,R)$ is a relational pattern. Let $p=\vec{x}_1 \omega_1 \vec{x}_2 \omega_2 \ldots \vec{x}_n \omega_n \vec{x}_{n+1}$, with $n\ge 0$, $\vec{x}_1,\vec{x}_{n+1}\in X^*$, $\vec{x}_2,\ldots,\vec{x}_{n}\in X^+$, and $\omega_1,\ldots,\omega_n\in\Sigma^+$.\footnote{Note that $n=0$ results in a  pattern in $X^+$.}
%\begin{itemize}
    %\item For $i \in \{1, \ldots, n\}$ and $y \in \mathit{Var}(p)$ let $[y]^{\vec{x}_i}= \mathit{Var}(\vec{x}_i) \cap [y]$. Suppose $|[y]^{\vec{x}_i}| = \mu$ and $[y]^{\vec{x}_i} = \{z_1, \ldots, z_\mu\}$. Then $\vec{x}_i(y) := z_1 \ldots z_\mu$.
    %\item 
    Let $([y_1], \ldots, [y_\kappa])$ be an ordered list of all groups in $(p,R)$.\footnote{We assume a canonical way of choosing such order.} The \emph{equal-length normal form} of $(p,R)$, denoted by $(p_{\nf},R)$, is obtained in two steps. First, one forms the pattern
     %$$p_{\nf}= \vec{x}_1(y_1) \ldots \vec{x}_1(y_\kappa) \omega_1 \ldots \omega_{n-1} \vec{x}_n(y_1) \ldots \vec{x}_n(y_\kappa) \omega_n \vec{x}_{n+1}(y_1) \ldots \vec{x}_{n+1}(y_\kappa)$$
     $$\bar{p}= \vec{x}_1(1)\ldots\vec{x}_1(\kappa) \omega_1 \ldots \omega_{n-1} \vec{x}_n(1)\ldots\vec{x}_n(\kappa) \omega_n \vec{x}_{n+1}(1)\ldots\vec{x}_n(\kappa)\,,$$
     where $\vec{x}_i(j)$ is the string of all variables in  $\mathit{Var}(\vec{x}_i) \cap [y_j]$, in increasing order of their indices in $X=\{x_1,x_2,\ldots\}$.\footnote{Since $([y_1], \ldots, [y_\kappa])$ is a canonically ordered list, this string is unique.} Second, one obtains $p_{\nf}$ by iteratively removing from $\bar{p}$, in a one at a time manner, all variable groups whose decomposition is a positive integer linear combination of the decomposition of other groups in $(\bar{p},R)$.
    %Note that some of the $[x_i]^{\vec{x}_j}$ could be empty. 
%\end{itemize} 
 \end{definition}
\smallskip

\begin{corollary}\label{cor:nf}
Let $(p,R)$ be any relational pattern. Then
    $\varepsilon L_{len}(p,R) = \varepsilon L_{len}(p_{\nf},R)$.
\end{corollary}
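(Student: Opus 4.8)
The plan is to prove Corollary~\ref{cor:nf} by simply chaining together the two structural results that the normal form is built from, namely Lemma~\ref{swap-variables_length} and Lemma~\ref{lem:linearcombination}, exactly mirroring the two-step construction in Definition~\ref{equal-length normal form}. First I would recall that the equal-length normal form $(p_{\nf},R)$ is obtained from $(p,R)$ in two stages: stage one produces an intermediate pattern $\bar p$ by reordering the variables inside each variable block so that, block by block, the variables are grouped together according to the canonically ordered list of groups $([y_1],\ldots,[y_\kappa])$; stage two removes from $\bar p$ those groups whose decomposition is a positive integer linear combination of the decompositions of the remaining groups. The goal is to show that neither stage changes the generated erasing language, so that $\varepsilon L_{len}(p,R)=\varepsilon L_{len}(\bar p,R)=\varepsilon L_{len}(p_{\nf},R)$.

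For the first stage, the key observation is that the reordering of variables within a block is nothing but a finite sequence of adjacent transpositions, and each such transposition is governed by Lemma~\ref{swap-variables_length}. Concretely, I would write the passage from $p$ to $\bar p$ as a composition of elementary swaps of the form $q_1 x_i x_{i+1} q_2 \rightsquigarrow q_1 x_{i+1} x_i q_2$, each accompanied by the corresponding renaming of $x_i$ and $x_{i+1}$ in all relations of $R$. Since any permutation of the variables inside a block (in particular the one sorting them by group and then by index) decomposes into adjacent transpositions, and each swap preserves the language by Lemma~\ref{swap-variables_length}, an induction on the number of swaps shows $\varepsilon L_{len}(p,R)=\varepsilon L_{len}(\bar p,R)$. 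Note that the swaps only ever exchange variables that lie in the same maximal variable block (terminal blocks $\omega_i$ act as fixed separators and are never moved), so the hypothesis $p=q_1 x_i x_{i+1} q_2$ of Lemma~\ref{swap-variables_length} is always met.

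For the second stage, the passage from $\bar p$ to $p_{\nf}$ is precisely the operation described in Lemma~\ref{lem:linearcombination}: one removes exactly those groups whose decomposition vector is a positive integer linear combination of the decompositions of the other groups. Applying Lemma~\ref{lem:linearcombination} directly to $(\bar p, R)$ gives $\varepsilon L_{len}(\bar p,R)=\varepsilon L_{len}(p_{\nf},R)$. Concatenating the two equalities yields $\varepsilon L_{len}(p,R)=\varepsilon L_{len}(p_{\nf},R)$, as desired.

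I do not anticipate a serious obstacle, since both lemmas are already stated (and described as straightforward) in the excerpt; the corollary is essentially an assembly step. The only point that warrants a little care is making sure that the variable-sorting permutation in stage one is genuinely realizable by \emph{within-block} adjacent transpositions, so that every intermediate pattern still has the form required by Lemma~\ref{swap-variables_length} and the relation $R$ is updated consistently at each swap. Once that is spelled out, the rest is a direct application of the two lemmas in sequence.
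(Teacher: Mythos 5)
Your proposal is correct and follows exactly the same route as the paper, which simply notes that the corollary is immediate from Lemmas~\ref{swap-variables_length} and \ref{lem:linearcombination}; your write-up just spells out the decomposition into within-block adjacent swaps and the subsequent group removal in more detail. No gaps.
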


\smallskip

\begin{proof}
    Immediate from Lemmas \ref{swap-variables_length} and \ref{lem:linearcombination}.
\end{proof}

\smallskip
%\textcolor{violet}{
% \begin{example}
    %Let $p=x_1x_2y_1y_2y_3y_4\ a\ x_3\ b\ x_4x_5x_6y_5y_6\ bba\ x_7 x_8 x_9y_7y_8\ aa$, $R=\{(x_1,x_2),(x_1,x_8),$ $(x_3,x_7),(x_5,x_8),(x_4,x_6)\}\cup\{(y_1,y_i)\mid 2\le i\le 6\}$. The groups in $(p,R)$ are $\{x_1,x_2,x_5,x_8\}$, $\{x_3,x_7\}$, $\{x_4,x_6\}$, $\{x_9\}$, $\{y_1,\ldots,y_6\}$. The decomposition of $[y_1]$ is $\langle 4,0,2,2\rangle=2\langle 2,0,1,1\rangle$, where $\langle 2,0,1,1\rangle$ is the decomposition of $[x_1]$. Thus $p_{\nf}=x_1x_2\ a\ x_3\ b\ x_5x_4x_6\ bba\ x_8 x_7 x_9\ aa$. In particular, $\varepsilon L_{len}(p,R) = \varepsilon L_{len}(p_{\nf},R)$.
% \end{example}
%}

\begin{example}
    Let $p=x_1x_2y_1y_2y_3y_4\ a\ x_3\ b\ x_4x_5x_6y_5y_6\ bba\ x_7 x_8 x_9y_7y_8\ aa$, $R=\{(x_1,x_2),(x_1,x_8),$ $(x_3,x_7),(x_5,x_8),(x_4,x_6)\}\cup\{(y_1,y_i)\mid 2\le i\le 8\}$. The groups in $(p,R)$ are $\{x_1,x_2,x_5,x_8\}$, $\{x_3,x_7\}$, $\{x_4,x_6\}$, $\{x_9\}$, $\{y_1,\ldots,y_8\}$. The decomposition of $[y_1]$ is $\langle 4,0,2,2\rangle=2\langle 2,0,1,1\rangle$, where $\langle 2,0,1,1\rangle$ is the decomposition of $[x_1]$. Thus $p_{\nf}=x_1x_2\ a\ x_3\ b\ x_5x_4x_6\ bba\ x_8 x_7 x_9\ aa$. In particular, $\varepsilon L_{len}(p,R) = \varepsilon L_{len}(p_{\nf},R)$.
 \end{example}

\smallskip

\iffalse
\begin{proposition}\label{prop:Sigmage3}
    Let $|\Sigma|\ge 3$. If $\mathcal{S}_{\varepsilon,1}(p,R)\subseteq\varepsilon L_{len}(p',R')$, then $\varepsilon L_{len}(p,R)\subseteq\varepsilon L_{len}(p',R')$. In particular, $\mathcal{S}_{\varepsilon,1}(p,R)$ is a linear-size positive characteristic set for $\varepsilon L_{len}(p,R)$ with respect to $\varepsilon\mathcal{L}_{len}$.\textcolor{red}{SM: In the ALT paper we did not showed that \textbf{``If $\mathcal{S}_{\varepsilon,1}(p,R)\subseteq\varepsilon L_{len}(p',R')$, then $\varepsilon L_{len}(p,R)\subseteq\varepsilon L_{len}(p',R')$."} We actually showed: \textbf{``$\mathcal{S}_{\varepsilon,1}(p,R)\subseteq\varepsilon L_{len}(p,R')$, then $\varepsilon L_{len}(p,R)\subseteq\varepsilon L_{len}(p,R')$."} I think we need to write this for the equivalence problem without affecting our Theorem \ref{thm:mainResult}. This result in its current form is wrong due to what we have already proved.}
\end{proposition}
\fi

The normal form of a relational pattern will help verify Theorem~\ref{prop:Sigmage3}; to this end we first prove two more lemmas.

\begin{lemma}\label{prop:Sigma3_constant_block_matching}
Suppose $|\Sigma| \geq 3$, $\mathcal{S}_{\varepsilon,1}(p,R)\subseteq\varepsilon L_{len}(p',R')$, and $\mathcal{S}_{\varepsilon,1}(p',R')\subseteq\varepsilon L_{len}(p,R)$. Let $\omega_1\ldots,\omega_n,\omega'_1,\ldots,\omega'_{n'}\in\Sigma^+$, $\vec{x}_i,\vec{x}'_j\in X^+$ for $2\le i\le n$ and $2\le j\le n'$, as well as $\vec{x}_1,\vec{x}_{n+1},\vec{x}'_1,\vec{x}'_{n'+1}\in X^*$ such that $p = \vec{x}_1 \omega_1 \ldots \vec{x}_n \omega_n \vec{x}_{n+1}$ and $p' = \vec{x}'_1 \omega'_1 \ldots \vec{x}'_{n'} \omega'_{n'} \vec{x}'_{n'+1}$. Then $n=n'$ and $\omega'_i = \omega_i$ for all $i \in \{1, \ldots, n\}$.
\end{lemma}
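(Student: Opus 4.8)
The plan is to first pin down the common terminal content of the two patterns, and then to locate the variable blocks inside that content. For the first step, I observe that the fully erasing substitution (take any group $[x]$ together with length $z=0$ in the definition of $\mathcal{S}_{\varepsilon,1}$) produces the word $W_p:=\omega_1\omega_2\cdots\omega_n$, so that $W_p\in\mathcal{S}_{\varepsilon,1}(p,R)\subseteq\varepsilon L_{len}(p',R')$, and symmetrically $W_{p'}:=\omega'_1\cdots\omega'_{n'}\in\varepsilon L_{len}(p,R)$. Any substitution witnessing $W_p\in\varepsilon L_{len}(p',R')$ produces a word whose length is at least $\sum_j|\omega'_j|=|W_{p'}|$, since its image must contain every terminal symbol of $p'$; hence $|W_{p'}|\le|W_p|$, and the symmetric inequality yields $|W_p|=|W_{p'}|$. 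But then the witnessing substitution for $W_p$ must erase every variable of $p'$, so that $W_p=\omega'_1\cdots\omega'_{n'}=W_{p'}$. Thus $p$ and $p'$ share the same terminal string $W:=W_p=W_{p'}$.

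Second, I reduce the statement to a claim about cut positions. Writing $c_i:=|\omega_1\cdots\omega_i|$ for $1\le i\le n-1$ and $c'_j$ analogously, the sequence of terminal blocks $\omega_1,\dots,\omega_n$ is completely determined by the set of interior boundaries $\{c_1,\dots,c_{n-1}\}\subseteq\{1,\dots,|W|-1\}$ at which a nonempty variable block of $p$ separates two terminal symbols of $W$, and likewise for $p'$. Hence it suffices to prove $\{c_1,\dots,c_{n-1}\}=\{c'_1,\dots,c'_{n'-1}\}$: equal boundary sets force $n=n'$ and $\omega_i=\omega'_i$ for all $i$. A leading or trailing variable block creates no interior boundary and is irrelevant here.

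The heart of the proof is this equality of boundary sets, which I would establish by contradiction using a minimal (leftmost) point of disagreement. Suppose $c$ is the smallest position that is an interior boundary of $p$ but not of $p'$ (the symmetric case is handled by the other inclusion hypothesis). Let $b<c$ be the largest common boundary below $c$; by minimality no boundary of either pattern lies strictly between $b$ and $c$, so $W[b+1:c]$ is a single terminal block $\omega_i$ of $p$, while the block $\omega'_j$ of $p'$ covering position $c$ equals $W[b+1:e']$ with $e'>c$, i.e. $\omega_i$ is a proper prefix of $\omega'_j$. I then choose a group $[x]$ of $p$ containing a variable in the variable block located at $c$ and consider the word $w\in\mathcal{S}_{\varepsilon,1}(p,R)$ obtained by mapping each variable of $[x]$ to a single marker symbol and erasing all other variables. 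This inserts a marker immediately after $\omega_i$, and inserts markers elsewhere only at boundaries of $p$ below $c$ (which by agreement are also boundaries of $p'$) or above $c$. By hypothesis $w\in\varepsilon L_{len}(p',R')$, so some $R'$-substitution $\psi'$ satisfies $\psi'(p')=w$. The goal is to show $\psi'$ is forced, on the prefix up to $c$, to place $\omega'_1,\dots,\omega'_{j-1}$ at their natural positions and to begin $\omega'_j$ at $b+1$: the markers below $c$ sit exactly where $p'$ also has variable blocks and are absorbed there, and a counting argument on the terminal content (the blocks of $p'$ must collectively spell all of $W$, of length $|W|$, inside $w$) should forbid deferring $\omega'_j$ to a later occurrence. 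Then $\omega'_j=\omega_i\beta$ with $\beta=W[c+1:e']\neq\varepsilon$ would have to occur contiguously starting at $b+1$, contradicting the marker inserted between $\omega_i$ and $\beta$.

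The main obstacle I anticipate is precisely this forcing step: ruling out that $\psi'$ realigns the terminal blocks of $p'$ to a \emph{different} family of occurrences of $W$-substrings within $w$, thereby absorbing the offending marker into a variable image. This is where the assumption $|\Sigma|\ge 3$ must be exploited, namely to select the marker symbol so that the broken occurrence of $\omega'_j$ cannot be repaired elsewhere in $w$; carrying out this realignment exclusion cleanly, together with the bookkeeping for the markers inserted at the remaining boundaries, is the delicate core of the argument, while steps one and two are routine.
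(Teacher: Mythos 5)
Your first two steps are sound and match the paper: the fully erasing substitution gives $\omega_1\cdots\omega_n=\omega'_1\cdots\omega'_{n'}$, and the statement reduces to showing the two block decompositions of this common terminal word coincide. Your third step also starts the same way as the paper's proof (leftmost disagreement, insert a single marker letter at the boundary where the shorter block ends, so that the marker lands strictly inside a terminal block of the other pattern). However, the decisive step --- deriving a contradiction from $w\in\varepsilon L_{len}(p',R')$ --- is exactly the part you leave open, and the route you sketch for it (forcing $\psi'$ to place $\omega'_1,\dots,\omega'_{j-1}$ at their ``natural positions'' and excluding any realignment of the blocks of $p'$ to other occurrences inside $w$) is not carried out and is also much harder than necessary. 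As written, the proposal has a genuine gap at its core.

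The missing idea is that no global alignment argument is needed; the paper closes the case with a purely local observation. First, since $|\Sigma|\ge 3$, the marker $\sigma$ can be chosen to differ simultaneously from the last letter $\sigma'$ of the shorter block and from the first letter of its continuation $\omega$ (this is the only place $|\Sigma|\ge 3$ enters). Second --- and this is the key point you are missing --- because $p$ and $p'$ carry the same terminal symbols with the same multiplicities (step one), any substitution of the target pattern that produces the marked word $w$ can only contribute copies of $\sigma$: every variable must be mapped into $\{\sigma\}^*$. Consequently the occurrences of $\sigma'$ in $w$ are in bijection with the occurrences of $\sigma'$ in the terminal strings of \emph{both} patterns, and one simply compares what follows the $k$th occurrence of $\sigma'$ (the one ending the shorter block): in the marked factorization it is the marker $\sigma$, while in any factorization coming from the other pattern it is the first letter of $\omega$, which was chosen to differ from $\sigma$. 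This single-position counting argument replaces your entire ``realignment exclusion'' and is what you need to supply to complete the proof.
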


\begin{proof} The premises of the lemma imply $\omega_1 \cdots \omega_n\in \varepsilon L(p',R')$ and  $\omega'_1 \cdots \omega'_{n'}\in \varepsilon L(p,R)$. This is only possible if $\omega_1 \cdots \omega_n = \omega'_1 \cdots \omega'_{n'}$. 

Now suppose $(\omega_1, \ldots, \omega_n) \neq (\omega'_1, \ldots, \omega'_{n'})$. Since $\omega_1 \cdots \omega_n = \omega'_1 \cdots \omega'_{n'}$, there exists a smallest $i < \min\{n,n'\}$ such that $\omega'_i \neq \omega_i$. In particular, either $\omega'_i$ is a proper prefix of $\omega_i$ or vice versa. Without loss of generality, suppose $\omega'_i$ is a proper prefix of $\omega_i$. Thus, there is a string $\omega\in\Sigma^+$ such that $\omega'_i\omega=\omega_i$. Now let $\sigma\in\Sigma$ be a letter different from the last letter of $\omega'_i$ and different from the first letter of $\omega$. Such $\sigma$ exists, since $|\Sigma|\ge 3$. Further, let $x$ be a variable occurring in $\vec{x}'_{i+1}$.

Now define a substitution $\theta$ by $\theta(y)=\sigma$ if $y\in [x]_{R'}$, and $\theta(y)=\varepsilon$ otherwise. Clearly, $\theta(p')\in\mathcal{S}_{\varepsilon,1}(p',R')$. Since $\mathcal{S}_{\varepsilon,1}(p',R')\subseteq\varepsilon L_{len}(p,R)$, we obtain $\theta(p')\in\varepsilon L_{len}(p,R)$. Thus, there is an $R$-substitution $\theta'$ such that $\theta'(p)=\theta(p')$. Let $\sigma'$ be the last symbol of $\omega'_i$ and let $k$ be such that the $k$th occurrence of the symbol $\sigma'$ in $\omega'_1\cdots\omega'_i$ is exactly in the last position of $\omega'_1\cdots\omega'_i$. Since $p$ and $p'$ contain the same terminal symbols in the same multiplicities, $\theta'(z)\in\{\sigma\}^*$ for each variable $z\in\textit{Var}(p)$. Therefore, the $k$th occurrence of $\sigma'$ in $\theta(p')$ is followed by $\sigma$ in $\theta(p')$. The same must be true for $\theta'(p)$, since $\theta'(p)=\theta(p')$. But in $\theta'(p)$, the $k$th occurrence of $\sigma'$ in $\theta(p')$ is followed by the first symbol of $\omega$, which differs from $\sigma$, by the choice of $\sigma$. This is a contradiction. 
\end{proof}

\begin{lemma}\label{prop:Sigma3_varBlocls_identical}
    Let $|\Sigma| \geq 3$, and let $(p,R),(p',R')$ be any two relational patterns. Suppose $\mathcal{S}_{\varepsilon,1}(p,R)\subseteq\varepsilon L_{len}(p',R')$ and $\mathcal{S}_{\varepsilon,1}(p',R')\subseteq\varepsilon L_{len}(p,R)$. Then, $p_{\nf} = p'_{\nf}$, subject to renaming of variables. 
\end{lemma}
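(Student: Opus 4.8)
The plan is to prove $p_{\nf}=p'_{\nf}$ by showing that the two normal forms agree on their terminal structure, their variable-block decompositions, and finally on the actual strings of variables appearing in each block. I would proceed by first applying Corollary~\ref{cor:nf} to replace $(p,R)$ and $(p',R')$ by their normal forms $(p_{\nf},R)$ and $(p'_{\nf},R')$; since $\varepsilon L_{len}(p,R)=\varepsilon L_{len}(p_{\nf},R)$ and likewise for the primed pattern, and since $\mathcal{S}_{\varepsilon,1}$ is defined in terms of the generated language, the hypotheses $\mathcal{S}_{\varepsilon,1}(p,R)\subseteq\varepsilon L_{len}(p',R')$ and $\mathcal{S}_{\varepsilon,1}(p',R')\subseteq\varepsilon L_{len}(p,R)$ carry over to the normal forms. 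So without loss of generality I may assume both patterns are already in normal form and must show they are \emph{identical} as strings.

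First I would invoke Lemma~\ref{prop:Sigma3_constant_block_matching}: it gives $n=n'$ and $\omega_i=\omega'_i$ for every $i$, so the two normal forms have the same terminal blocks in the same positions and the same number of variable blocks $\vec{x}_1,\ldots,\vec{x}_{n+1}$ and $\vec{x}'_1,\ldots,\vec{x}_{n+1}'$. What remains is to show that the $i$th variable block of $p_{\nf}$ equals the $i$th variable block of $p'_{\nf}$ for every $i$. The key intermediate step is to show that the two patterns induce the same \emph{decompositions}: for each group $[y]$ in one pattern there is a matching group in the other with the same decomposition vector $\langle|\mathit{Var}(\vec{x}_1)\cap[y]|,\ldots,|\mathit{Var}(\vec{x}_{n+1})\cap[y]|\rangle$. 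The tool here is again the single-group substitutions defining $\mathcal{S}_{\varepsilon,1}$: for a group $[x]$ in $p_{\nf}$, the substitution $\theta$ sending all of $[x]$ to a common length-$1$ letter and erasing everything else produces a word whose block-length profile is exactly the decomposition of $[x]$. Because this word lies in $\varepsilon L_{len}(p',R')$, there must be an $R'$-substitution realizing the same per-block lengths, and since the terminal blocks already coincide, the realizing substitution is forced to distribute lengths across the groups of $p'_{\nf}$ so that the profile of $[x]$ is a nonnegative integer combination of the decompositions of groups of $p'_{\nf}$. Running this argument in both directions, and using that the normal form has, by Definition~\ref{equal-length normal form}, removed every group whose decomposition is a positive integer linear combination of others, I would conclude that the multisets of decomposition vectors of $p_{\nf}$ and $p'_{\nf}$ coincide, and that the correspondence is a bijection respecting the block-by-block counts.

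Once the decompositions match up groupwise, each variable block $\vec{x}_i$ of $p_{\nf}$ and $\vec{x}'_i$ of $p'_{\nf}$ contain, for each group, the same number of variables; since the normal form orders variables within a block canonically (by group order, then by index), the blocks are identical as strings up to the naming of variables, and by the canonical naming convention built into Definition~\ref{equal-length normal form} they are literally identical. Concatenating with the already-matched terminal blocks $\omega_i$ gives $p_{\nf}=p'_{\nf}$.

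The main obstacle I anticipate is the middle step: justifying that a length profile realizable from $p'_{\nf}$ forces the single-group decomposition of $p_{\nf}$ to be a \emph{nonnegative integer} combination of $p'_{\nf}$'s group decompositions, and pinning down why the minimality built into the normal form (removal of linear-combination groups) upgrades this to an exact groupwise bijection rather than merely a spanning relationship. This requires care because an $R'$-substitution may split the contributed length of a single group across several blocks, so I must track per-block length contributions rather than total lengths, and argue that the erasing freedom does not allow spurious matchings; the alphabet condition $|\Sigma|\ge 3$, already exploited in Lemma~\ref{prop:Sigma3_constant_block_matching} to keep terminal blocks rigid, is what prevents variables from masquerading as terminals and keeps the length bookkeeping clean.
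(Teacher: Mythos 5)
Your overall strategy is the same as the paper's: first invoke Lemma~\ref{prop:Sigma3_constant_block_matching} to fix the terminal structure, then use the single-group $\ell_1$-substitutions underlying $\mathcal{S}_{\varepsilon,1}$ to show that each group's decomposition in one pattern is a positive integer combination of group decompositions in the other, and finally let the minimality built into the normal form force $p_{\nf}=p'_{\nf}$. However, the step you explicitly defer --- showing that the $R'$-substitution realizing $\theta_i(p)$ is \emph{forced} to align block-by-block with $\theta_i$, so that the per-block length profile of $[y_i]$ really is reproduced by groups of $p'_{\nf}$ rather than smeared across neighbouring blocks --- is not a routine bookkeeping issue; it is the substantive content of the paper's proof, and your sketch does not contain the device needed to close it. Worse, the device you do name (substituting ``a common length-$1$ letter'' for the whole group $[y_i]$) does not work in general: if that single letter happens to equal the first letter of some $\omega_j$ or the last letter of some $\omega_{j-1}$ adjacent to a block containing variables of $[y_i]$, the generated letters can be absorbed into the terminal blocks by a different parsing, and the alignment $\theta'_i(\vec{x}'_j)=\theta_i(\vec{x}_j(i))$ is no longer forced. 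Since the forbidden letters vary from block to block, no single global letter need exist even when $|\Sigma|\ge 3$.

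The paper's resolution is to let the substituted letter depend on the block: for each variable $y\in[y_i]$ occurring in $\vec{x}_j(i)$, choose $\theta_i(y)=c$ with $c$ different from both the last letter of $\omega_{j-1}$ and the first letter of $\omega_j$ (with the obvious adjustment at the two ends); this is where $|\Sigma|\ge 3$ enters \emph{this} lemma directly, not only via Lemma~\ref{prop:Sigma3_constant_block_matching}, and it is legitimate under $r=len$ because only lengths, not the actual letters, must agree within a group. With this choice one shows by induction from the left that $\theta_i(\vec{x}_1(i))\omega_1\cdots\omega_{j-1}\theta_i(\vec{x}_j(i))$ is a prefix of $\theta'_i(\vec{x}'_1)\omega_1\cdots\omega_{j-1}\theta'_i(\vec{x}'_j)$ for every $j$ (the substituted letters cannot begin the following terminal block), by the symmetric induction from the right that the corresponding suffix relation holds, and the two together pin down $\theta'_i(\vec{x}'_j)=\theta_i(\vec{x}_j(i))$ exactly. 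Only after that equality is the linear-combination statement for decompositions available, and the rest of your argument (symmetry, minimality of the normal form, canonical ordering of variables within blocks) then goes through as you describe.
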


\begin{proof}
    Lemma \ref{prop:Sigma3_constant_block_matching} implies that there exist $n\in\mathbb{N}$, $\omega_1$, \dots, $\omega_n\in\Sigma^*$, $\vec{x}_i$, $\vec{x}'_i\in X^*$ for $i\in\{1,n+1\}$, and $\vec{x}_j$, $\vec{x}'_j\in X^+$ for $2\le j\le n$, such that $p_{\nf} = \vec{x}_1 \omega_1 \ldots \vec{x}_n \omega_n \vec{x}_{n+1}$ and $p'_{\nf} = \vec{x}'_1 \omega_1 \ldots \vec{x}'_{n} \omega_{n} \vec{x}'_{n+1}$. Using the same notation as in Definition~\ref{equal-length normal form}, suppose $p_{\nf}$ contains $\kappa$ distinct groups $[y_1],\ldots,[y_\kappa]$, and $p'_{\nf}$ contains $\kappa'$ distinct groups $[y'_1],\ldots, [y'_{\kappa'}]$. 
   
   For $ 1 \leq i \leq \kappa$ define an $R$-substitution $\theta_i$ as follows. For each variable $y\in [y_i]$, if $y$ appears in $\vec{x}_j(i)$, then \( \theta_i(y) = c \) for some letter $c\in\Sigma$ that differs from both the last letter of $\omega_{j-1}$ and the first letter of $\omega_{j}$. (In case $j=1$, we only need $c$ to differ from the first letter of $\omega_1$, and in case $j=n+1$, $c$ differs from the last letter of $\omega_n$.)
   $\theta_i$ erases all other variables. Since \( \theta_1(p), \dots, \theta_\kappa(p) \in \mathcal{S}_{\varepsilon,1}(p,R) \) and \( \mathcal{S}_{\varepsilon,1}(p,R)\subseteq \varepsilon L_{len}(p',R') \), we have $\theta_i(p) \in \varepsilon L_{len}(p',R')$ for each $i$. Now, fix $i$ and suppose $\theta'_i$ is an $R'$-substitution such that $\theta_i(p_{\nf}) = \theta'_i(p'_{\nf})$. We want to show that $\theta_i(\vec{x}_j(i)) = \theta'_i(\vec{x}'_j)$ for all $j$. Note that
    \[ \theta_i(p_{\nf})\ =\ \theta_i(\vec{x}_1(i))\ \omega_1\ \theta_i(\vec{x}_2(i))\ \omega_2\ \dots\ \theta_i(\vec{x}_n(i))\ \omega_n\ \theta_i(\vec{x}_{n+1}(i))\,. \]

    Since $\theta_i(\vec{x}_1(i))$ does not contain any letter equal to the first symbol of $\omega_1$, we obtain that $\theta_i(\vec{x}_1(i))$ is a prefix of $\theta'_i(\vec{x}'_1)$ which in turn implies that $\theta_i(\vec{x}_1)\omega_1$ is a prefix of $\theta'_i(\vec{x}'_1)\omega_1$. 
    Consequently, and since $\theta_i(\vec{x}_2(i))$ does not contain any letter equal to the first symbol of $\omega_2$, $\theta_i(\vec{x}_1(i))\omega_1\theta_i(\vec{x}_2(i))$ is a prefix of $\theta'_i(\vec{x}'_1)\omega_1\theta'_i(\vec{x}'_2)$ so that  $\theta_i(\vec{x}_1) \omega_1\theta_i(\vec{x}_2) \omega_2 $ is a prefix of $\theta'_i(\vec{x}'_1)\omega_1\theta'_i(\vec{x}'_2) \omega_2$. 
    Iterating this argument yields that $\theta_i(\vec{x}_1(i))\omega_1\ldots\omega_{j-1}\theta_i(\vec{x}_j(i))$ is a prefix of $\theta'_i(\vec{x}'_1)\omega_1\ldots\omega_{j-1}\theta'_i(\vec{x}'_j)$ for all $j$ and therefore $\theta_i(\vec{x}_1)\omega_1\ldots\omega_{j-1}\theta_i(\vec{x}_j) \omega_j$ is a prefix of $\theta'_i(\vec{x}'_1)\omega_1\ldots\omega_{j-1}\theta'_i(\vec{x}'_j) \omega_j$ for all $j$.

    Since $\theta_i(\vec{x}_{n+1}(i))$ does not contain any letter equal to the last symbol of $\omega_n$, we obtain that $\theta_i(\vec{x}_{n+1}(i))$ is a suffix of $\theta'_i(\vec{x}'_{n+1})$ and hence $\omega_n \theta_i(\vec{x}_{n+1})$ is a suffix of $\omega_n \theta'_i(\vec{x}'_{n+1})$. Iterating this argument like above yields that $\theta_i(\vec{x}_j(i))\omega_j\ldots\omega_n\theta_i(\vec{x}_{n+1}(i))$ is a suffix of $\theta'_i(\vec{x}'_j)\omega_j\ldots\omega_n\theta'_i(\vec{x}'_{n+1})$ for all $j$, and therefore $\omega_{j-1} \theta_i(\vec{x}_j)\omega_j\ldots\omega_n\theta_i(\vec{x}_{n+1})$ is a suffix of $\omega_{j-1} \theta'_i(\vec{x}'_j)\omega_j\ldots\omega_n\theta'_i(\vec{x}'_{n+1})$ for all $j$. 

    In combination, this yields $\theta_i(\vec{x}_j(i)) = \theta_i(\vec{x}_j) = \theta'_i(\vec{x}'_j)$ for all $j$.

   %A similar argument for maximum index where $\omega_n, \omega_{n-1}, \dots, \omega_1$ can end, shows that $\theta'_i(\vec{x}'_j) = \theta_i(\vec{x}_j(i))$ for every $j \in \{1, 2, \dots, n+1\}$. This implies that there is only one way for $\theta'_i$ to create $\theta_i(p_{rnf})$ from $p'_{rnf}$ and it must fulfill the following requirement for all $i, j$:
    %\[ \theta'_i(\vec{x}'_j) = \theta_i(\vec{x}_j(i)) \]
    Therefore, for each group $[y_i]$ in $(p,R)$ there exist groups $[y'_{i_1}], \dots, [y'_{i_{k_i}}]$ in $(p', R')$ such that 
    \[ \langle\ \vec{x}_1(i),\ \vec{x}_2(i),\ \dots,\ \vec{x}_{n+1}(i)\ \rangle\ = \sum_{\jmath = 1}^{k_i}\ \lambda_\jmath\ \langle\ \vec{x}'_1(i_\jmath),\ \vec{x}'_2(i_\jmath),\ \dots,\ \vec{x}'_{n+1}(i_\jmath)\ \rangle \]
    where $\lambda_\jmath \in \mathbb{N}^+$.

    By a completely symmetric argument, for each group $[y'_i]$ in $(p',R')$ there exist groups $[y_{i_1}], \dots, [y_{i_{k_i}}]$ in $(p,R)$ such that 
    \[ \langle\ \vec{x}'_1(i),\ \vec{x}'_2(i),\ \dots,\ \vec{x}'_{n+1}(i)\ \rangle\ = \sum_{\jmath = 1}^{k_i}\ \lambda_\jmath\ \langle\ \vec{x}_1(i_\jmath),\ \vec{x}_2(i_\jmath),\ \dots,\ \vec{x}_{n+1}(i_\jmath)\ \rangle \]
    where $\lambda_\jmath \in \mathbb{N}^+$.

    In conclusion, this yields $p_{\nf} = p'_{\nf}$.  
\end{proof}

Now Theorem~\ref{prop:Sigmage3} follows from Lemma~\ref{prop:Sigma3_varBlocls_identical} and Corollary~\ref{cor:nf}.

\subsection{Binary Alphabets}

 Both in our Theorem~\ref{prop:Sigmage3} and in Holte et al.'s result for the non-erasing case), the premise $|\Sigma|\ge 3$ plays a crucial role in the proof. This raises the question whether the result generalizes to binary alphabets. The main contribution of this section is to show that Theorem~\ref{prop:Sigmage3} holds for binary alphabets if we limit the claim to patterns in which all variable blocks (except at the start and end of the pattern) contain at least 2 variables from each group, and all maximal terminal substrings are of length at least 3, while avoiding the shape $a^nb$ or $ab^n$ for $n\ge 2$. This result is formulated in Theorem~\ref{thm:mainResult}, which refers to a subclass of patterns defined as follows.

\medskip

\begin{definition}
Let $\mathcal{P}_{2,3}$ be the class of all relational patterns $(\hat{p},\hat{R})$ for which there exist 
     $n\in\mathbb{N}$, $\vec{x}_1, \vec{x}_{n+1} \in X^\ast$, $\vec{x}_2, \ldots, \vec{x}_n \in X^{\ge 2}$, and $\omega_j\in\Sigma^{\ge 3}$ for $1\le j\le n$, such that $\hat{p}$ is
     of the form 
    \[\hat{p}=\vec{x}_1 \omega_1 \vec{x}_2 \omega_2 \ldots \vec{x}_n \omega_n \vec{x}_{n+1}\,, \]
    and, for each $j\in\{2,\ldots,n\}$ and each group $[x]$, $\vec{x}_j$ contains at least 2 variables from $[x]$.
\end{definition}

\medskip

\begin{theorem}\label{thm:mainResult}\!\!\!\!\footnote{The earlier conference version of this paper had a more general version of Theorem~\ref{thm:mainResult}, see \cite[Theorem 2]{MousawiZ24}. However, the proof presented there has a gap, and we are currently not able to verify the stronger claim made in \cite[Theorem 2]{MousawiZ24}.}
     Let $|\Sigma|=2$.
     Let $(p,R),(p',R')$ be any two relational patterns. Suppose  $(p,R)\in\mathcal{P}_{2,3}$.
     In addition, suppose $p$ has no terminal block of the form $a^n b$ or $a b^n$ where $a,b \in \Sigma$, $a\ne b$ and $n \geq 2$.     
     If $\mathcal{S}_{\varepsilon,2}(p,R)\subseteq\varepsilon L_{len}(p',R')\subseteq\varepsilon L_{len}(p,R)$, then $\varepsilon L_{len}(p,R)=\varepsilon L_{len}(p',R')$. In particular, $\mathcal{S}_{\varepsilon,2}(p,R)$ is a positive characteristic set for $\varepsilon L_{len}(p,R)$ with respect to $\varepsilon \mathcal{L}_{len}$.
\end{theorem}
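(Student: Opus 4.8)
The plan is to prove $\varepsilon L_{len}(p,R)\subseteq\varepsilon L_{len}(p',R')$; combined with the hypothesis $\varepsilon L_{len}(p',R')\subseteq\varepsilon L_{len}(p,R)$ this gives the asserted equality, after which the ``in particular'' clause is just a restatement of the telltale/positive-characteristic-set condition. By Corollary~\ref{cor:nf} it suffices to compare normal forms, and I would aim for the strongest statement, $p_{\nf}=p'_{\nf}$, mirroring the structure of Lemma~\ref{prop:Sigma3_varBlocls_identical} and Lemma~\ref{prop:Sigma3_constant_block_matching} from the case $|\Sigma|\ge3$. One point worth noting at the outset is that, unlike the symmetric hypotheses $\mathcal{S}_{\varepsilon,1}(p,R)\subseteq\varepsilon L_{len}(p',R')$ and $\mathcal{S}_{\varepsilon,1}(p',R')\subseteq\varepsilon L_{len}(p,R)$ used for $|\Sigma|\ge3$, here I have the \emph{full} inclusion $\varepsilon L_{len}(p',R')\subseteq\varepsilon L_{len}(p,R)$ available; this is stronger than a mere inclusion of the probe set $\mathcal{S}(p',R')$ and will make the ``reverse direction'' steps comparatively cheap.

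The heart of the argument, and the main obstacle, is the binary analogue of Lemma~\ref{prop:Sigma3_constant_block_matching}: I must show that the terminal contents coincide, say $\omega_1\cdots\omega_n=\omega'_1\cdots\omega'_{n'}=:W$, and that the \emph{block boundaries} of $p$ and $p'$ coincide within $W$, so that $n=n'$ and $\omega_i=\omega'_i$ with matching positions. For $|\Sigma|\ge3$ this was done by substituting a single group with a symbol $c$ \emph{not} occurring in the neighbouring terminal blocks, instantly distinguishing variable-generated symbols from terminal symbols. Over $\{a,b\}$ no such fresh symbol exists, and inserted variable content can \emph{masquerade} as terminal symbols: a word $\varphi(p)\in\mathcal{S}_{\varepsilon,2}(p,R)$ might be re-parsed by $p'$ with its blocks shifted relative to the $\omega_i$, some of $p$'s terminal symbols being re-designated as variable images and vice versa. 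The plan is to probe with words of $\mathcal{S}_{\varepsilon,2}(p,R)$ in which a single group is retained (all its variables mapped to length-$z$ words, $z\in\{1,2\}$) while the inserted content over $\{a,b\}$ is chosen \emph{adversarially}, and to rule out masquerading by the three structural hypotheses on $(p,R)$. Concretely, since $(p,R)\in\mathcal{P}_{2,3}$, each interior block contributes an inserted string of length at least $2z$, which I would use to forbid any single $p'$-block from straddling a $p$-insertion site (this yields that every boundary of $p$ is a boundary of $p'$); and since each terminal block $\omega_i$ of $p$ has length at least $3$ and is \emph{not} of the near-monochromatic shape $a^nb$ or $ab^n$, I would show by a case analysis on how a $p'$-block can sit inside the contiguous occurrence of $\omega_i$ in a probe word that $p'$ cannot place a boundary strictly inside $\omega_i$ (equivalently, this direction can be obtained from the full inclusion $\varepsilon L_{len}(p',R')\subseteq\varepsilon L_{len}(p,R)$ by exhibiting, for a hypothetical interior $p'$-variable, a content choice producing a word $p$ cannot match). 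Together these pin the boundaries of $p'$ to those of $p$.

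With terminal blocks aligned, the remaining task is to match the variable structure, i.e.\ the per-block group decompositions of Definition~\ref{equal-length normal form}. Here I would follow the template of Lemma~\ref{prop:Sigma3_varBlocls_identical}: for each group $[x]$ of $p$, feed the corresponding probe from $\mathcal{S}_{\varepsilon,2}(p,R)$ into $p'$ and, using the now-fixed boundaries together with adversarial binary content to localise the inserted symbols block by block, read off that the decomposition of $[x]$ is a positive-integer linear combination of the decompositions of groups of $p'$; the full inclusion $\varepsilon L_{len}(p',R')\subseteq\varepsilon L_{len}(p,R)$ supplies the symmetric statement for the groups of $p'$. As in the $|\Sigma|\ge3$ case, these two families of linear relations, combined with the normal-form reduction that deletes exactly the groups whose decomposition is such a combination (Lemma~\ref{lem:linearcombination}), force $p_{\nf}=p'_{\nf}$. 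Corollary~\ref{cor:nf} then yields $\varepsilon L_{len}(p,R)=\varepsilon L_{len}(p_{\nf},R)=\varepsilon L_{len}(p'_{\nf},R')=\varepsilon L_{len}(p',R')$, as required.

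I expect the block-alignment step to be by far the hardest part: it is where the binary alphabet genuinely changes the situation, and where all three restrictions (at least two variables per group in every interior block, terminal blocks of length at least three, and the exclusion of $a^nb$ and $ab^n$ blocks) are simultaneously needed. The delicate feature is that the probe words are \emph{short} --- an interior insertion has length only $2z\le4$ when the relevant group contributes the minimal two variables --- so the reasoning must be essentially local and cannot lean on long ``incompressible'' insertions; instead it must exploit the precise admissible shapes of the terminal blocks of $p$ to show they cannot be straddled by, nor internally split at, an insertion site. This is precisely the point at which the conference argument of \cite{MousawiZ24} broke down, so I would expect the bulk of the technical effort to be a careful, shape-by-shape verification that the excluded block types are exactly those permitting a consistent global shift.
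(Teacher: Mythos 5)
Your high-level plan---reduce to normal forms and prove $p_{\nf}=p'_{\nf}$ by first aligning terminal-block boundaries and then matching group decompositions in both directions---diverges from the paper's proof, which instead splits into the cases where $p$ and $p'$ are \emph{congruous} (same terminal-block decomposition and matching emptiness of boundary variable blocks) and \emph{incongruous}: in the congruous case it proves only the single inclusion $\varepsilon L_{len}(p,R)\subseteq\varepsilon L_{len}(p',R')$ (Theorem~\ref{thm:congruous}) and takes the reverse inclusion directly from the hypothesis, and in the incongruous case it derives a contradiction with the premises through a long case analysis organized around ``telltale conjugates'' (Theorem~\ref{thm:incongruous} via Lemmas~\ref{lem:ex-|alphabet|=2-length-1}--\ref{lem:ex-|alphabet|=2-length-5}). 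Your route has a concrete gap at the point where you claim that the hypothesis $\varepsilon L_{len}(p',R')\subseteq\varepsilon L_{len}(p,R)$ ``supplies the symmetric statement for the groups of $p'$,'' i.e., that every group decomposition of $p'$ is a positive integer combination of group decompositions of $p$. All structural hypotheses (membership in $\mathcal{P}_{2,3}$, terminal blocks of length at least $3$, exclusion of $a^nb$ and $ab^n$) are imposed on $p$ only; nothing guarantees that $p'$ admits probe words with a unique parse. A word $w\in\varepsilon L_{len}(p',R')$ generated with a given gap-length vector does lie in $\varepsilon L_{len}(p,R)$, but over a binary alphabet $p$ may generate $w$ with a \emph{shifted} parse, so the $p'$-gap-length vector need not be realizable by $p$ and no linear relation among decompositions follows. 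This is exactly why the paper proves the decomposition relations in one direction only (via the $(p,i)$-unambiguous substitutions, whose construction exploits the constraints on $p$) and never attempts $p_{\nf}=p'_{\nf}$ for binary alphabets---that identity is stronger than the theorem's conclusion and is not established.

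A second, smaller issue: your block-alignment argument as sketched (forbidding a $p'$-block from ``straddling'' an insertion site because interior insertions have length at least $2z$) does not address the actual failure mode, which is that terminal symbols of $p$ adjacent to an insertion site can be re-designated as variable images under the $p'$-parse and compensated elsewhere; this is precisely what the excluded shapes $a^nb$ and $ab^n$ (the telltale conjugates) permit, and ruling it out in all remaining cases is the content of the paper's longest proof (Lemma~\ref{lem:ex-|alphabet|=2-length-5}). You correctly locate the difficulty there, but the proposal gives no argument for it.
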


\smallskip

The subsets of $\mathcal{S}_{\varepsilon,2}(p,R)$ that are exploited in the construction of our proof consist of a number of words linear in $|p|$, which means that the languages in $\varepsilon \mathcal{L}_{len}$  have linearly-sized positive characteristic sets.

This theorem is an immediate consequence of Theorems~\ref{thm:incongruous} and \ref{thm:congruous}, which will be proven in the subsequent subsections.

Theorems~\ref{thm:incongruous} and \ref{thm:congruous} offer a separate treatment of the claim of Theorem~\ref{thm:mainResult} for two cases, roughly speaking: (i) one in which the two patterns, whose languages are compared, have differences in the sequences of their terminal substrings, and (ii) one in which they don't. Formally, this is captured by the following definition.

\medskip

\begin{definition}
Let $n,n'\in\mathbb{N}$, $\vec{x}_1,\vec{x}'_1, \vec{x}_{n+1},\vec{x}'_{n'+1}  \in X^\ast$, $\vec{x}_1, \ldots, \vec{x}_n, \vec{x}_1, \ldots, \vec{x}_{n'} \in X^+$, and $\omega_j,\omega'_{j'}\in\Sigma^+$ for $1\le j\le n$, $1\le j'\le n'$. Consider the two patterns $p,p'$ with 
    \begin{eqnarray*}
    p&=&\vec{x}_1 \omega_1 \vec{x}_2 \omega_2 \ldots \vec{x}_n \omega_n \vec{x}_{n+1}\,,\\
    p'&=&\vec{x}'_1 \omega'_1 \vec{x}'_2 \omega'_2 \ldots \vec{x}'_{n'} \omega'_{n'} \vec{x}'_{n'+1}\,.
    \end{eqnarray*}
We call $p$ and $p'$ congruous, if $n=n'$ and $\omega_j=\omega'_{j'}$ for all $j$ with $1\le j\le n=n'$, and moreover, for $j\in\{1,n+1\}$, either both $\vec{x}_j$ and $\vec{x}'_j$ are empty or both are non-empty. Otherwise, $p$ and $p'$ are called incongruous.    
\end{definition}

\medskip

\begin{theorem}\label{thm:incongruous}
    Let $|\Sigma|=2$ and let $(p,R)$, $(p', R')$ be any two relational patterns, where $p$ and $p'$ are incongruous. Suppose $p$ has no terminal block of the form $a^n b$ or $ab^n$ where $a,b \in \Sigma$, $a \neq b$, and $n \geq 0$.\footnote{In Theorem~\ref{thm:mainResult}, the case $n\in\{0,1\}$ is impossible due to the premise $(p,R)\in \mathcal{P}_{2,3}$. In Theorem~\ref{thm:incongruous}, we need to mention the case $n\in\{0,1\}$ explicitly, thus obtaining ``$n\ge 0$'' instead of the condition ``$n\ge 2$'' in Theorem~\ref{thm:mainResult}.} If $\mathcal{S}_{\varepsilon,2}(p,R)\subseteq\varepsilon L_{len}(p',R')\subseteq\varepsilon L_{len}(p,R)$, then $\varepsilon L_{len}(p,R)=\varepsilon L_{len}(p',R')$.
\end{theorem}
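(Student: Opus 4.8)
The plan is to argue by contradiction: I will show that, under the stated hypotheses, incongruity of $p$ and $p'$ is actually impossible, so that the implication holds vacuously. First note that the two containments in the premise are secretly symmetric at the level of the small test sets. Indeed, $\mathcal{S}_{\varepsilon,2}(p,R)\subseteq\varepsilon L_{len}(p',R')$ is given, while $\mathcal{S}_{\varepsilon,2}(p',R')\subseteq\varepsilon L_{len}(p',R')\subseteq\varepsilon L_{len}(p,R)$ follows from the right-hand containment. Thus I may use both $\mathcal{S}_{\varepsilon,2}(p,R)\subseteq\varepsilon L_{len}(p',R')$ and $\mathcal{S}_{\varepsilon,2}(p',R')\subseteq\varepsilon L_{len}(p,R)$, which is exactly the symmetric setup of Lemma~\ref{prop:Sigma3_constant_block_matching}, but now over a binary alphabet.

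The first step pins down the terminal skeleton. Writing $W_p=\omega_1\cdots\omega_n$ and $W_{p'}=\omega'_1\cdots\omega'_{n'}$, each is obtained by erasing all variables and so lies in the set $\mathcal{S}_{\varepsilon,2}$ of its own pattern (taking $z=0$). A Parikh-vector count then forces equality: from $W_{p'}\in\varepsilon L_{len}(p,R)$ the number of $a$'s and of $b$'s in $p'$ is at least that in $p$, while $W_p\in\varepsilon L_{len}(p',R')$ yields the reverse inequalities, so $p$ and $p'$ carry the same multiset of terminal symbols. Since $W_p$ then already realizes exactly the terminal counts of $p'$, every variable of $p'$ must be erased in any substitution generating $W_p$, whence $W_p=W_{p'}=:W$. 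Hence both patterns factor the \emph{same} string $W$ into their maximal terminal blocks, and incongruity can only arise from one of two sources: (i) the boundary variable blocks disagree in emptiness (e.g.\ $\vec{x}_1=\varepsilon$ but $\vec{x}'_1\neq\varepsilon$, or symmetrically at the right end), or (ii) the two factorizations of $W$ place an interior cut at different positions.

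Case (i) is disposed of directly. If $\vec{x}'_1\neq\varepsilon$ while $\vec{x}_1=\varepsilon$, then substituting the group of the leftmost variable of $p'$ by a single letter distinct from $W[1]$, and erasing everything else, yields a word of $\varepsilon L_{len}(p',R')\subseteq\varepsilon L_{len}(p,R)$ that begins with the wrong terminal symbol; this is impossible, since every word of $\varepsilon L_{len}(p,R)$ must begin with $\omega_1[1]=W[1]$. The opposite emptiness pattern is handled symmetrically, using $\mathcal{S}_{\varepsilon,2}(p,R)\subseteq\varepsilon L_{len}(p',R')$ and a word of $\mathcal{S}_{\varepsilon,2}(p,R)$ starting with the wrong letter, and the right end is analogous. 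Neither subcase needs the structural restriction on $p$.

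This leaves case (ii), the technical heart. Since the two factorizations of $W$ differ, some cut position $\pi$ is a block boundary of one pattern but lies strictly inside a maximal terminal block of the other. From the pattern that has a genuine variable block at $\pi$, I will build a word $w$ in the corresponding set $\mathcal{S}_{\varepsilon,2}$ by substituting that group with a short ``disruptor'' of length at most $2$ and erasing all other variables, so that $w$ is precisely $W$ with a few extra letters inserted at $\pi$. The disruptor will be chosen so as to create a local pattern of transitions that cannot be absorbed by the terminal block of the other pattern that spans $\pi$: for $w$ to be generated there, that block would have to occur contiguously, yet the inserted letters split it. The exclusion of terminal blocks of the form $a^n b$ or $ab^n$ (for $n\ge 0$, hence also ruling out single letters and $ab$, $ba$) is exactly what guarantees that at each boundary there is enough local structure to force such an irreproducible alternation (typically a new $\cdots bab\cdots$ or $\cdots aba\cdots$). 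Over the binary alphabet this is delicate precisely because inserted letters tend to merge into existing runs — inserting an $a$ between $a^k$ and $b^m$ merely lengthens a run and is harmless — so I expect the choice of disruptor, together with the verification that the other pattern genuinely fails to generate $w$, to be the main obstacle. Once such a $w\in\mathcal{S}_{\varepsilon,2}$ lying outside the other pattern's language is exhibited, the corresponding containment is violated, contradicting the premise; hence no incongruous $p'$ can satisfy the hypotheses, and the theorem holds vacuously.
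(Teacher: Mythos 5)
Your high-level strategy is in fact the same as the paper's: there, too, the theorem is reduced (via Lemma~\ref{lem:ex-|alphabet|=2-length-5}) to showing that for incongruous patterns the two containments cannot coexist, so the conclusion is reached by contradiction. Your preliminary steps are sound and match the paper's: deriving $\mathcal{S}_{\varepsilon,2}(p',R')\subseteq\varepsilon L_{len}(p,R)$ from the right-hand containment, the Parikh-vector argument forcing $\omega_1\cdots\omega_n=\omega'_1\cdots\omega'_{n'}$, and the disposal of a mismatch in the emptiness of the boundary variable blocks.

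The genuine gap is your case (ii), which is where the entire content of the theorem lives: you describe only a plan for it and explicitly concede that the choice of disruptor and the verification that it escapes the other language are unresolved. This step cannot be carried out by a single local gadget at the first mismatched cut. Concretely, if $\omega_i=a^{k}$ and $\omega_{i+1}=ba^{m}\cdots$ in $p$ while $\omega'_i=a^{k}b$ and $\omega'_{i+1}=a^{m}\cdots$ in $p'$, then every $\ell_2$-substitution of the group sitting at either cut is absorbed by the other pattern (the inserted letters merge into the adjacent runs, exactly the phenomenon you flag as ``harmless''), and the paper's proof must instead propagate the mismatch to the next terminal boundary and iterate, possibly across the whole pattern, with a case distinction governed by the exact run-lengths on both sides (this is the lengthy appendix proof of Lemma~\ref{lem:ex-|alphabet|=2-length-5}). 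Moreover, the hypothesis excluding terminal blocks $a^nb$ and $ab^n$ does not play the role you assign to it (guaranteeing enough local structure for a disruptor at each boundary); its role is to exclude the \emph{telltale conjugate} configurations of Definition~\ref{def:conjugate-pat} and Lemma~\ref{lem:ex-|alphabet|=2-length-4}, in which two incongruous patterns genuinely satisfy both containments and generate the same language, so that no disruptor exists at all. The iterative case analysis must therefore establish a dichotomy -- every mismatched factorization either yields a violating word in some $\mathcal{S}_{\varepsilon,2}$ or lands in a telltale-conjugate configuration, the latter being ruled out by the hypothesis on $p$ -- and until that analysis is actually carried out, the theorem is not proved.
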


\medskip

\begin{theorem}\label{thm:congruous}
    Let $|\Sigma|=2$ and let $(p,R)$, $(p',R')$ be any two relational patterns, where $p$ and $p'$ are congruous and $(p,R)\in\mathcal{P}_{2,3}$. If $\mathcal{S}_{\varepsilon,1}(p,R)\subseteq\varepsilon L_{len}(p',R')$, then $\varepsilon L_{len}(p,R)\subseteq\varepsilon L_{len}(p',R')$.
\end{theorem}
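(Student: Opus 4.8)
The plan is to reduce the language inclusion to a purely numerical statement about the achievable ``block-length vectors'' and then to establish that statement from the premise on $\mathcal{S}_{\varepsilon,1}$ by a word-combinatorial rigidity argument.

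First I would fix the right invariant. Since $p$ and $p'$ are congruous, I write $p=\vec{x}_1\omega_1\cdots\omega_n\vec{x}_{n+1}$ and $p'=\vec{x}'_1\omega_1\cdots\omega_n\vec{x}'_{n+1}$ with the \emph{same} terminal blocks $\omega_1,\dots,\omega_n$ in the same order and with matching empty/non-empty endpoints, passing to normal forms via Corollary~\ref{cor:nf} so that $\mathcal{D}$ below is generated by the surviving (non-redundant) groups. For a pattern with this skeleton, let $\mathbf{d}_{[y]}=\langle|\mathit{Var}(\vec{x}_1)\cap[y]|,\dots,|\mathit{Var}(\vec{x}_{n+1})\cap[y]|\rangle$ be the decomposition of group $[y]$ (Definition~\ref{equal-length normal form}), and let $\mathcal{D}(p,R)\subseteq\mathbb{N}^{n+1}$ be the submonoid generated by all $\mathbf{d}_{[y]}$. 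Because the relation is $len$, content is unconstrained, so I would prove the characterization: $w\in\varepsilon L_{len}(p,R)$ iff $w$ factors as $u_1\omega_1\cdots\omega_n u_{n+1}$ with $\langle|u_1|,\dots,|u_{n+1}|\rangle\in\mathcal{D}(p,R)$ (one direction assigns each group a uniform length and cuts the $u_j$ freely; the other reads off the block lengths of a generating substitution). Since $p$ and $p'$ share the skeleton $\omega_1,\dots,\omega_n$, this reduces $\varepsilon L_{len}(p,R)\subseteq\varepsilon L_{len}(p',R')$ to the monoid inclusion $\mathcal{D}(p,R)\subseteq\mathcal{D}(p',R')$: the canonical factorization of any $w\in\varepsilon L_{len}(p,R)$ has block-length vector in $\mathcal{D}(p,R)\subseteq\mathcal{D}(p',R')$, and that same factorization is then realizable by $p'$. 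As $\mathcal{D}(p',R')$ is closed under addition, it suffices to show that every generator $\mathbf{d}_{[y]}$ lies in $\mathcal{D}(p',R')$.

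The core step is to prove $\mathbf{d}_{[y]}\in\mathcal{D}(p',R')$ for each group $[y]$ of $p$. The words in $\mathcal{S}_{\varepsilon,1}(p,R)$ that substitute each variable of $[y]$ by a single, freely chosen letter and erase everything else are exactly the words $w_u=u_1\omega_1\cdots\omega_n u_{n+1}$ with $|u_j|=(\mathbf{d}_{[y]})_j$ and arbitrary binary content $u_j$. By hypothesis each $w_u\in\varepsilon L_{len}(p',R')$, so some $R'$-substitution $\theta'$ satisfies $\theta'(p')=w_u$; by congruence this yields a \emph{second} factorization $w_u=u'_1\omega_1\cdots\omega_n u'_{n+1}$, with $u'_j=\theta'(\vec{x}'_j)$ and $\langle|u'_1|,\dots,|u'_{n+1}|\rangle\in\mathcal{D}(p',R')$, in which $\omega_1,\dots,\omega_n$ again appear in order. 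If I can choose the content $u$ so that this factorization is \emph{rigid} (the positions of all $\omega_j$ are forced, whence $|u'_j|=|u_j|=(\mathbf{d}_{[y]})_j$ for every $j$), then $\mathbf{d}_{[y]}=\langle|u'_1|,\dots,|u'_{n+1}|\rangle\in\mathcal{D}(p',R')$, as required. A single rigid witness per group suffices.

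The hard part is constructing such rigid content over a binary alphabet. The factorization can fail to be rigid only if some $\omega_j$ ``slides'' into an adjacent content block, and a slide by $k$ positions forces $\omega_j$ to have period $k$ and forces the abutting content to reproduce the corresponding period prefix/suffix of $\omega_j$. I would kill all slides using three features of the hypotheses. First, $|\omega_j|\ge 3$ makes every internal content block (length $\ge 2$ by $(p,R)\in\mathcal{P}_{2,3}$) too short to contain any $\omega_i$ as a factor, so no spurious interior occurrences arise. Second, having at least two freely chosen letters of $[y]$ abutting each terminal block is just enough to break the boundary period of $\omega_j$; the binary-specific periodic blocks (constant blocks $c^L$ and alternating blocks) are dispatched by a short explicit case analysis, whereas a block with no nontrivial period, such as $a^nb$, admits no slide at all and needs no guarding, which is precisely why the $a^nb/ab^n$ restriction of Theorem~\ref{thm:incongruous} is superfluous in the congruous case. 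Third, congruence pins the two boundary terminal blocks, forcing $\omega_1$ as a prefix and $\omega_n$ as a suffix and removing the only slides not controllable by interior content. The main technical obstacle is verifying that one compatible assignment of the boundary letters simultaneously blocks every left- and right-slide at once, including slides longer than a content block (which must then run through a neighbouring fixed terminal block and are pinned by its letters); this is where a Fine--Wilf-type periodicity analysis is needed. Once rigidity is secured for one witness per generator, the reduction in the first paragraph completes the proof.
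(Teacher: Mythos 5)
Your proposal follows essentially the same route as the paper's proof: reduce the inclusion to showing that each group-decomposition vector of $(p,R)$ is a non-negative integer combination of the group-decomposition vectors of $(p',R')$ (the paper's Claim~2, which you phrase slightly more cleanly as an inclusion of submonoids of $\mathbb{N}^{n+1}$), and extract each such combination from a single word of $\mathcal{S}_{\varepsilon,1}(p,R)$ whose factorization against $p'$ is forced --- your ``rigid witness'' is exactly the paper's $(p,i)$-unambiguous substitution (Claims~1 and~3). One step of your sketch is wrong as stated: you argue that no $\omega_i$ can occur inside an internal content block because such blocks ``(length $\ge 2$ by $(p,R)\in\mathcal{P}_{2,3}$)'' are too short, but $\mathcal{P}_{2,3}$ only bounds these lengths from \emph{below}, so $u_j$ can be arbitrarily long and could contain some $\omega_i$ as a factor if its content were chosen carelessly; avoiding this is part of what the content construction must accomplish, not a free consequence of $|\omega_j|\ge 3$. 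The paper resolves this by an explicit choice of witness content (alternating $\sigma\sigma'\sigma\sigma'\cdots$ between two constant terminal blocks over different letters, constant otherwise) and by verifying only two \emph{localized} conditions, namely that $\omega_j$ is not a substring of $s_{i,j}\circ\omega_j[1:|\omega_j|-1]$ nor of $\omega_j[2:|\omega_j|]\circ s_{i,j+1}$; an inductive prefix-propagation argument then shows these already force the entire factorization, so no global Fine--Wilf analysis of long slides is required. Since you explicitly leave the construction of the rigid witness as ``the main technical obstacle,'' that construction --- the paper's Claim~3, where both $|\omega_j|\ge 3$ and the requirement of at least two variables of each group per interior block are genuinely used (with only one variable per block, a constant block $\omega_{j-1}\in\{\sigma'\}^+$ can slide one position to the right) --- is the part you would still have to supply.
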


\medskip

\subsubsection{The Case of Incongruous Patterns}

This section is devoted to proving Theorem~\ref{thm:incongruous}, which states the following:
\begin{quote}
Let $|\Sigma|=2$ and let $(p,R)$, $(p',R')$ be any two relational patterns, where $p$ and $p'$ are incongruous. Suppose $p$ has no terminal block of the form $a^n b$ or $ab^n$ where $a,b \in \Sigma$, $a \neq b$, and $n \geq 0$. If $\mathcal{S}_{\varepsilon,2}(p,R)\subseteq\varepsilon L_{len}(p',R')\subseteq\varepsilon L_{len}(p,R)$, then $\varepsilon L_{len}(p,R)=\varepsilon L_{len}(p',R')$.
\end{quote}

%\textbf{\textcolor{blue}{Moving 2}}

%\medskip
 
 %In addition, for a substitution $\theta$, we denote $\theta([y]^{Y}) := \theta(y_1 \ldots y_\kappa)$ where $[y]^{Y} = \{y_1, \ldots, y_\kappa\}$. (defining like this is not well-defined as a set is not ordered.)

%Consider the set $\mathcal{S}_{\varepsilon,1}(p,R)=\{\varphi(p)\mid\exists ! x\in \mathit{Var}(p)\ [\forall y\in[x]\ |\varphi(y)|= 1\wedge \forall y\notin[x]\ \varphi(y)=\varepsilon]\}$ of all words in $\varepsilon L(p,R)$ that are generated from $p$ by replacing one group of variables with strings of length 1, and erasing all other variables. 
%The sets $\mathcal{S}_{\varepsilon,\ell}(p,R)$ for $\ell=1,2$ turn out to be a core tool for deciding the equivalence of erasing relational pattern languages under the equal-length relation. 

The question whether Theorem~\ref{thm:incongruous}
remains valid when dropping the condition ``$p$ has no terminal block of the form $a^n b$ or $ab^n$ where $a,b \in \Sigma$, $a \neq b$, and $n \geq 0$'' remains open. Before we prove Theorem~\ref{thm:incongruous}, we first present a few special cases in which this theorem remains valid even if the pattern $p$ contains terminal blocks of the form $a^n b$ or $ab^n$ where $a,b \in \Sigma$, $a \neq b$, and $n \geq 0$.
These special cases are formulated in Lemmas \ref{lem:ex-|alphabet|=2-length-1} through \ref{lem:ex-|alphabet|=2-length-4}. Notably, these lemmas do not only concern testing \emph{equivalence}\/ of certain patterns via membership of a small number of short words, but demonstrate cases in which even \emph{inclusion}\/ can be tested this way. %are special cases of Theorem~\ref{thm:incongruous} which state that, for binary alphabets and for incongruous patterns of specific shapes, the sets $\mathcal{S}_{\varepsilon,\ell}(p,R)$ for $\ell=1,2$ provide a simple test for language inclusion. 

\medskip

\begin{lemma}\label{lem:ex-|alphabet|=2-length-1}
Let $\Sigma = \{a,b\}$, $a \neq b$, $n_i \in \mathbb{N}$, $\vec{x}_1, \vec{x}'_1, \vec{x}_3, \vec{x}'_3 \in X^\ast$, $\vec{x}_2, \vec{x}'_2 \in X^+$.
\begin{enumerate}
    \item Let $p = \vec{x}_1 a^{n_1} b^{n_2+n_3} \vec{x}_2 a \vec{x}_3$ and $p' = \vec{x}'_1 a^{n_1} b^{n_2} \vec{x}'_2 b^{n_3} a \vec{x}'_3$ be incongruous.  
    Fix $R,R'$. Then $\mathcal{S}_{\varepsilon,1}(p,R) \subseteq \varepsilon L_{len}(p',R')$ implies $\varepsilon L_{len}(p,R) \subseteq\varepsilon L_{len}(p',R')$.

    %\item Let $p = \vec{x}_1 a^{n_1} b^{n_2+n_3} \vec{x}_2 a \vec{x}_3$ and $p' = \vec{x}'_1 a^{n_1} b^{n_2} \vec{x}'_2 b^{n_3} a \vec{x}'_3$ be incongruous.
    %Fix $R,R'$. Then $\mathcal{S}_{\varepsilon, 1} (p,R) \subseteq \varepsilon L_{len}(p',R') \subseteq \varepsilon L_{len}(p,R)$ implies $\varepsilon L_{len}(p,R) = \varepsilon L_{len}(p',R')$.

    \item Let $p = \vec{x}_1 a b^{n_1} \vec{x}_2 b^{n_2} a^{n_3} \vec{x}_3$ and $p' = \vec{x}'_1 a \vec{x}'_2 b^{n_1+n_2} a^{n_3} \vec{x}'_3$ be incongruous.  
    Fix $R,R'$. Then $\mathcal{S}_{\varepsilon,1}(p',R') \subseteq \varepsilon L_{len}(p,R)$ implies $\varepsilon L_{len}(p',R') \subseteq\varepsilon L_{len}(p,R)$.
    
    %\item \textcolor{OliveGreen}{Let $p = \vec{x}_1 a b^{n_1} \vec{x}_2 b^{n_2} a^{n_3} \vec{x}_3$ and $p' = \vec{x}'_1 a \vec{x}'_2 b^{n_1+n_2} a^{n_3} \vec{x}'_3$ be incongruous. 
    %Fix $R,R'$. Then $\mathcal{S}_{\varepsilon, 1} (p',R') \subseteq \varepsilon L_{len}(p,R) \subseteq \varepsilon L_{len}(p',R')$ implies $\varepsilon L_{len}(p,R) = \varepsilon L_{len}(p',R')$.} 
\end{enumerate}

\end{lemma}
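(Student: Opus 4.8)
The plan is to prove Part~1 in full and then obtain Part~2 from it by the reversal symmetry of the equal-length relation, so I would record that reduction first. Writing $q^{rev}$ for the pattern obtained by reversing a pattern $q$ together with its relation, reversal is a length-preserving bijection on $\Sigma^\ast$, so $\varepsilon L_{len}(q^{rev},S^{rev})=\{w^{rev}\mid w\in\varepsilon L_{len}(q,S)\}$ and $\mathcal{S}_{\varepsilon,1}(q^{rev},S^{rev})=\{w^{rev}\mid w\in\mathcal{S}_{\varepsilon,1}(q,S)\}$, and inclusions are preserved under taking reverses. The reverse of the ``merged'' pattern $p=\vec{x}_1 a^{n_1}b^{n_2+n_3}\vec{x}_2 a\vec{x}_3$ of Part~1 has exactly the shape of the merged pattern $\vec{x}'_1 a\vec{x}'_2 b^{n_1+n_2}a^{n_3}\vec{x}'_3$ of Part~2 (after renaming the exponents), and the reverse of the ``split'' pattern of Part~1 has the shape of the split pattern of Part~2. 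Since in both parts the claim is of the form ``$\mathcal{S}_{\varepsilon,1}(\text{merged})\subseteq\varepsilon L_{len}(\text{split})$ implies $\varepsilon L_{len}(\text{merged})\subseteq\varepsilon L_{len}(\text{split})$'', Part~2 is literally Part~1 read backwards.

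For Part~1 I would fix an arbitrary $w\in\varepsilon L_{len}(p,R)$, say $w=\theta(p)=\theta(\vec{x}_1)\,a^{n_1}b^{n_2+n_3}\,\theta(\vec{x}_2)\,a\,\theta(\vec{x}_3)$ for a valid $R$-substitution $\theta$, and try to produce a valid $R'$-substitution $\theta'$ with $\theta'(p')=\theta'(\vec{x}'_1)\,a^{n_1}b^{n_2}\,\theta'(\vec{x}'_2)\,b^{n_3}a\,\theta'(\vec{x}'_3)=w$. The point is that $p$ and $p'$ agree on the leading $a^{n_1}b^{n_2}$ and on the single terminal $a$ together with everything after it; the \emph{only} discrepancy is the relocation of $b^{n_3}$ from the left to the right of the middle variable block. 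Hence the entire difficulty is localized to realizing the factor $b^{n_3}\theta(\vec{x}_2)$ (which appears between the $b^{n_2}$ and the terminal $a$ in $w$) in the form $\theta'(\vec{x}'_2)\,b^{n_3}$, i.e.\ to solving the word equation $\theta'(\vec{x}'_2)\,b^{n_3}=b^{n_3}\,\theta(\vec{x}_2)$ inside $\varepsilon L_{len}(p',R')$.

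The key observation is that this equation is \emph{not} solvable for an arbitrary $\theta(\vec{x}_2)$ by assigning a single word to $\vec{x}'_2$: a short computation shows it forces $\theta(\vec{x}_2)$ to end in $b^{n_3}$, and for instance $a^{n_1}b^{n_2+n_3}aa$ cannot be generated from a rigid suffix of the form $(\text{block})\,b^{n_3}a$. This is precisely where the hypothesis enters. I would use $\mathcal{S}_{\varepsilon,1}(p,R)\subseteq\varepsilon L_{len}(p',R')$ as a structural oracle, feeding it the probe words obtained from $(p,R)$ by activating one group at a time with length-$1$ images and erasing the rest (in particular words $a^{n_1}b^{n_2+n_3}\sigma a$), to conclude that $(p',R')$ cannot have such a rigid configuration and must instead distribute its variables so that letters can be placed on both sides of the relocated $b^{n_3}$. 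Concretely, I would prove the $len$-analogue of the block-decomposition bookkeeping behind Lemmas~\ref{lem:linearcombination} and \ref{prop:Sigma3_varBlocls_identical}: the premise forces each group $[x]$ of $(p,R)$ to correspond to a family of groups of $(p',R')$ whose decompositions sum to that of $[x]$, so that every per-group length profile realizable in $(p,R)$ is realizable in $(p',R')$. Because the $len$ relation constrains only lengths and leaves letter contents free, once the lengths line up the images of $\theta'$ can be filled in letter by letter to reproduce $w$ exactly, including the extra $b$'s needed to bridge the gap created by the relocation.

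The main obstacle, and the step I expect to require the most care, is exactly this global alignment across \emph{all} substitutions $\theta$: one has to carry out a case analysis on where the terminal $a$ of $p'$ lands inside $w$ relative to $\theta(\vec{x}_2)$ and on the letters of $\theta(\vec{x}_2)$ near its boundaries, and show that in every case the relocated $b^{n_3}$ can be absorbed. The incongruity hypothesis is what pins the mismatch down to this single relocated block, while the premise is what excludes precisely the configurations of $(p',R')$ (such as a lone free variable directly followed by $b^{n_3}a$) that would otherwise refute the implication; combining these two facts closes the argument, and Part~2 then follows by the reversal reduction set up at the outset.
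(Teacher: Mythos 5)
Your reversal reduction of Part~2 to Part~1 is correct and is actually cleaner than the paper's ``proved analogously''; the equal-length relation and the sets $\mathcal{S}_{\varepsilon,1}$ are indeed preserved under reversing patterns and words. You also correctly localize the difficulty to the relocated block $b^{n_3}$ and choose the right family of probe words (one group at a time, length-$1$ images, everything else erased). However, the key lemma you propose to extract from the premise is the wrong statement, and the step it is supposed to support does not go through. You claim the premise forces each group $[x]$ of $(p,R)$ to correspond to groups of $(p',R')$ whose decompositions \emph{sum to that of} $[x]$, so that ``every per-group length profile realizable in $(p,R)$ is realizable in $(p',R')$,'' and that once lengths line up the letters can be filled in freely. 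But realizing the \emph{same} profile $\langle d_1(x),d_2(x),d_3(x)\rangle$ in $p'$ is not what is needed: a word $w_1a^{n_1}b^{n_2+n_3}w_2aw_3$ generated from $p$ with profile $\langle |w_1|,|w_2|,|w_3|\rangle$ must be generated from $p'$ with the \emph{shifted} profile $\langle |w_1|,\alpha,|w_3|+\beta\rangle$, where $\alpha$ is the length of the maximal $b$-prefix of $w_2$ and $\beta=|w_2|-\alpha$; everything of $w_2$ past its first non-$b$ letter has to be pushed to the right of the terminal $a$ of $p'$. So the length profile required in $p'$ depends on the letter content of $\theta(\vec{x}_2)$, which contradicts your ``letter contents free'' step. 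Concretely, for a group with $d_2(x)=1$, substituting $b$ for the $\vec{x}_2$-variable forces the profile $\langle d_1,1,d_3\rangle$ in $p'$, while substituting $a$ forces $\langle d_1,0,d_3+1\rangle$; both must be available, and neither is implied by the other.

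What the paper actually proves (and what your argument is missing) is that for \emph{every} split $\alpha+\beta=d_2(x)$ the shifted vector $\langle d_1(x),\alpha,d_3(x)+\beta\rangle$ is a nonnegative integer combination of $p'$-group decompositions --- established by the probe $b^{d_1(x)}a^{n_1}b^{n_2+n_3+\alpha}a^{\beta+1}b^{d_3(x)}$, whose factorization over $p'$ is forced by the surrounding terminals (with separate cases when $n_1=0$ or $n_2=0$, since then the relevant boundary is unmarked and only a weaker conclusion holds). One then still has to assemble these per-group splits into the single global split $(\alpha,\beta)$ determined by $w$, which requires identifying a pivotal group index $\jmath$ and a partial split $(\alpha',\beta')$ of that one group, using full splits $(d_2(z_j),0)$ before it and $(0,d_2(z_j))$ after it. Your closing paragraph defers exactly this case analysis rather than carrying it out, so as written the proposal is a plan that stops short of the lemma's actual content.
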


\begin{proof}
    Let $c\in\{1,2,3\}$. If $x\in \mathit{Var}(p)$, then $d_c(x):=|[x]_R\cap \mathit{Var}(\vec{x}_c)|$ denotes the number of positions in $\vec{x}_c$ that correspond to variables in $[x]_R$. Likewise, for $x\in \mathit{Var}(p')$, we write $d'_c(x):=|[x]_{R'}\cap \mathit{Var}(\vec{x}'_c)|$.

    We only prove the first statement; the second one is proved analogously. So, let $(p,R)$ and $(p',R')$ be as in Statement 1. Note that this implies $n_3>0$, since otherwise $p$ and $p'$ would be congruous. Now suppose $\mathcal{S}_{\varepsilon,1}(p,R) \subseteq\varepsilon L_{len}(p',R')$. Without loss of generality, assume  $(p,R)$ is in equal-length normal form. First, we will establish the following claim.

    \emph{Claim.} Let $x\in \mathit{Var}(p)$, and let $\alpha,\beta\in\mathbb{N}$ satisfy $\alpha+\beta=d_2(x)$. 
    \begin{enumerate}
        \item If $n_1\ge 1$ and $n_2\ge 1$, then there is some $t>0$ and a family $(y_i)_{1\le i\le t}$ of variables in $p'$ such that the vector $\langle d_1(x),\alpha,d_3(x)+\beta\rangle$ is a linear combination of $\langle d'_1(y_i),d'_2(y_i),d'_3(y_i)\rangle$, $1\le i\le t$.
        
        %\item If $n_1=0$ or $n_2=0$, then there is some $t>0$ and a family $(y_i)_{1\le i\le t}$ of variables in $p'$ such that the vector $\langle d_1(x) + \alpha ,d_3(x)+ \beta \rangle$ is a linear combination of $\langle d'_1(y_i),d'_3(y_i)\rangle$, $1\le i\le t$.
        \item If $n_2=0$, then there is some $t>0$ and a family $(y_i)_{1\le i\le t}$ of variables in $p'$ such that the vector $\langle d_1(x) + \alpha ,d_3(x)+ \beta \rangle$ is a linear combination of $\langle d'_1(y_i),d'_3(y_i)\rangle$, $1\le i\le t$, where $d'_2(y_i)=0$ for all $i$.

        \item If $n_1=0$ and $n_2 \neq 0$, then there is some $t>0$, $\gamma \in \mathbb{N}$ with $\gamma \leq \alpha$, and a family $(y_i)_{1\le i\le t}$ of variables in $p'$ such that the vector $\langle d_1(x) + \gamma ,\alpha - \gamma,d_3(x)+\beta\rangle$ is a linear combination of $\langle d'_1(y_i),d'_2(y_i),d'_3(y_i)\rangle$, $1\le i\le t$.
    \end{enumerate}

    \emph{Proof of Claim 1 (for $n_1,n_2 \geq 1$.)} Since $(p,R)$ is in equal-length normal form, $\vec{x}_2$ contains a (possibly empty) substring of the form $z_1\ldots z_{d_2(x)}$, where $[x]\cap \mathit{Var}(\vec{x}_2)=\{z_1,\ldots, z_{d_2(x)}\}$. Let $\phi$ be the substitution that erases every variable not contained in $[x]$, and replaces variables $y\in [x]$ as follows: if $y\in [x]\cap (\mathit{Var}(\vec{x}_1) \cup \mathit{Var}(\vec{x}_3))$ or if $y\in\{z_1,\ldots, z_\alpha\}$, then $\phi(y)=b$; if $y\in\{z_{\alpha+1},\ldots, z_{d_2(x)}\}$, then $\phi(y)=a$. This substitution is valid for $R$; applied to $p$ it generates the word
    \[
    \phi(p)=b^{d_1(x)}\ a^{n_1}\ b^{n_2+n_3+\alpha}\ a^{\beta+1}\ b^{d_3(x)}\,.
    \]
    By definition of $\phi$, we have $\phi(p)\in\mathcal{S}_{\varepsilon,1}(p,R)$ and thus $\phi(p)\in\varepsilon L_{len}(p',R')$. Thus, there exists an $R'$-substitution $\phi'$ such that $\phi'(p')=\phi(p)$. Due to the shape of $p'$, this implies $\phi'(\vec{x}'_1)=b^{d_1(x)}$, $\phi'(\vec{x}'_2)=b^{\alpha}$,  and $\phi'(\vec{x}'_3)=a^{\beta}b^{d_3(x)}$. For $\phi'$ to be valid for $R'$, this entails the existence of a family $(y_i)_{1\le i\le t}$ of variables in $p'$ such that the vector $\langle d_1(x),\alpha,d_3(x)+\beta\rangle$ is a linear combination of the vectors $\langle d'_1(y_i),d'_2(y_i),d'_3(y_i)\rangle$, $1\le i\le t$. This completes the proof for Claim 1.
    
    \emph{Proof of Claim 2 (for $n_2 = 0$.)}
    Consider the following cases:
    \begin{itemize}
        \item $n_1 \neq 0,\ n_2 = 0$. In this case, the proof is identical to the proof for the case where $n_1, n_2 \geq 1$. Here, the string $\phi(p)$ has the following structure.
        \[
            \phi(p) = b^{d_1(x)}\ a^{n_1}\ b^{n_3+\alpha}\ a^{\beta+1}\ b^{d_3(x)}\,.
        \]
   
        \item $n_1 = n_2 = 0$. In this case, $p'$ has two variable blocks, i.e., \( p' = \vec{x}'_1\ b^{n_3}a\ \vec{x}'_3 \,, \) while \( p = \vec{x}_1\ b^{n_3}\ \vec{x}_2\ a\ \vec{x}_3 \,. \) 
        Now, let $\phi$ be the substitution that erases every variable not contained in $[x]$, and replaces variables $y\in [x]$ as follows: if $y\in [x]\cap \mathit{Var}(\vec{x}_3)$ or if $y\in\{z_1,\ldots, z_\alpha\}$, then $\phi(y)=b$; if $y\in ([x]\cap \mathit{Var}(\vec{x}_1))$ or if $y\in \{z_{\alpha+1},\ldots, z_{d_2(x)}\}$, then $\phi(y)=a$. This substitution is valid for $R$; applied to $p$ it generates the word
        \[
            \phi(p) = a^{d_1(x)}\ b^{n_3+\alpha}\ a^{\beta+1}\ b^{d_3(x)}\,.
        \]
        Again by definition of $\phi$, we have $\phi(p)\in\mathcal{S}_{\varepsilon,1}(p,R)$ and thus $\phi(p)\in\varepsilon L_{len}(p',R')$. Hence, there exists an $R'$-substitution $\phi'$ such that $\phi'(p')=\phi(p)$. Due to the shape of $p'$, this implies $\phi'(\vec{x}'_1)=a^{d_1(x)} b^{\alpha}$,  and $\phi'(\vec{x}'_3)=a^{\beta}b^{d_3(x)}$. %Note that
        %\begin{eqnarray*} \phi'(p') &=& \phi'(\vec{x}'_1)\ b^{n_3}a \phi(\vec{x}'_3) = a^{d_1(x)} b^{\alpha}\  b^{n_3}a\ a^{\beta}b^{d_3(x)} = a^{d_1(x)}\ b^{n_3+\alpha}\ a^{\beta+1}\ b^{d_3(x)} \\&=& \phi(p)\,. \end{eqnarray*}
        For $\phi'$ to be valid for $R'$, this entails the existence of a family $(y_i)_{1\le i\le t}$ of variables in $p'$ such that the vector $\langle d_1(x)+\alpha,d_3(x)+\beta\rangle$ is a linear combination of the vectors $\langle d'_1(y_i),d'_3(y_i)\rangle$, $1\le i\le t$, where $d'_2(y_i)=0$ for all $i$.

         %\item $n_1 = 0,\ n_2 \neq 0$. In this case, the proof proceeds as in the case where $n_1, n_2 \geq 1$, except one defines $\phi$ such that
         \end{itemize}
This completes the proof of Claim 2.
         
          \emph{Proof of Claim 3 (for $n_1 = 0,\ n_2 \neq 0$.)} In this case, let $\phi$ be the substitution that erases every variable not contained in $[x]$, and replaces variables $y\in [x]$ as follows: if $y\in [x]\cap \mathit{Var}(\vec{x}_3)$ or if $y\in\{z_1,\ldots, z_\alpha\}$, then $\phi(y)=b$; if $y\in ([x]\cap \mathit{Var}(\vec{x}_1))$ or if $y\in \{z_{\alpha+1},\ldots, z_{d_2(x)}\}$, then $\phi(y)=a$. This substitution is valid for $R$; applied to $p$ it generates the word
        \[
            \phi(p) = a^{d_1(x)}\  b^{n_2+n_3+\alpha}\ a^{\beta+1}\ b^{d_3(x)}\,.
        \]
        Again by definition of $\phi$, we have $\phi(p)\in\mathcal{S}_{\varepsilon,1}(p,R)$ and thus $\phi(p)\in\varepsilon L_{len}(p',R')$. Hence, there exists an $R'$-substitution $\phi'$ such that $\phi'(p')=\phi(p)$. Due to the shape of $p'$, this implies that there exists $\gamma \in \mathbb{N}$ such that $\gamma \leq \alpha$, $\phi'(\vec{x}'_1)=a^{d_1(x)} b^{\gamma}$, $\phi'(\vec{x}'_2)=b^{\alpha - \gamma}$, and $\phi'(\vec{x}'_3)=a^{\beta}b^{d_3(x)}$. %Note that
        %\begin{eqnarray*} \phi'(p') &=& \phi'(\vec{x}'_1)\ b^{n_2} \phi'(\vec{x}'_2)\ b^{n_3}a \phi(\vec{x}'_3) \\&=& a^{d_1(x)} b^{\gamma}\ b^{n_2} b^{\alpha - \gamma} b^{n_3}a\ a^{\beta}b^{d_3(x)} \\&=& a^{d_1(x)}\ b^{n_2 + n_3+\alpha}\ a^{\beta+1}\ b^{d_3(x)} \\&=& \phi(p)\,. \end{eqnarray*}
        For $\phi'$ to be valid for $R'$, this entails the existence of a family $(y_i)_{1\le i\le t}$ of variables in $p'$ such that the vector $\langle d_1(x) + \gamma ,\alpha - \gamma,d_3(x)+\beta\rangle$ is a linear combination of $\langle d'_1(y_i),d'_2(y_i),d'_3(y_i)\rangle$, $1\le i\le t$.\hfill$\Box$\emph{(Claim.)}

\smallskip

    With this claim in hand, we can prove $\varepsilon L_{len}(p,R)\subseteq\varepsilon L_{len}(p',R')$. For the sake of brevity, let $n_1 \geq 1$ and $n_2 \geq 1$. The cases where $n_1 = 0$ or $n_2 = 0$ can be handled similarly. 
    
    Let $w\in\varepsilon L_{len}(p,R)$. We need to show that $w\in\varepsilon L_{len}(p',R')$.
    Clearly, there are $w_1,w_2,w_3\in\Sigma^*$ and variables $z_1, \dots, z_u$ in $p$ such that $w=w_1\ a^{n_1}\ b^{n_2+n_3}\ w_2\ a\ w_3$ and the vector $\langle |w_1|,|w_2|,|w_3|\rangle$ is a linear combination of the vectors $\langle d_1(z_j),d_2(z_j),d_3(z_j)\rangle$, $1\le j\le u$.
    In particular, let $\lambda_1, \ldots, \lambda_u$ be natural numbers such that 
    
    \[ \langle |w_1|,|w_2|,|w_3|\rangle = \sum_{j=1}^{u} \lambda_j \langle d_1(z_j),d_2(z_j),d_3(z_j)\rangle\,. \]
    
    Now, let $\alpha\ge 0$ be such that  $w_2 = b^{\alpha} s$  where $s$ either is the empty string or begins with the letter $a$. Thus,
    \[ w=w_1\ a^{n_1}\ b^{n_2+n_3}\ w_2\ a\ w_3 = w_1\ a^{n_1}\ b^{n_2}\ b^{\alpha} \ b^{n_3}\ s\ a\ w_3\,. \]
    Clearly, since $s$ either is the empty string or begins with the letter $a$, we can fix $v \in \Sigma^\ast$ such that
    \[  w = w_1\ a^{n_1}\ b^{n_2}\ b^{\alpha} \ b^{n_3}\ a\ v\ w_3\,. \]

    Let $\beta = |w_2| - \alpha$. Since $|w_2| = \sum_j \lambda_j d_2(z_j)$, there exists an index $\newj$ and $0 \leq \alpha' \leq d_2(z_\newj)$ such that 
    \[ 
    \alpha = \sum_{j=1}^{\newj-1} \lambda_j d_2(z_j) +  ((\lambda_\newj- k) d_2(z_\newj) + \alpha' )\,, %+ \sum_{j = t+1}^{u} \lambda_j d_2(z_j)
    \]
    \[
    \beta = \sum_{j=\newj+1}^{u} \lambda_j d_2(z_j) +  ( kd_2(z_\newj) - \alpha' ) \,,
    \]
    for some $k$ with $0 < k < \lambda_\newj $.
    
    Let $\beta' = d_2(z_\newj)-\alpha'$. By the claim above, there is a family $(\overline{z}_i)_{1\le i\le \overline{t}}$ of variables in $p'$ such that the vector $\langle d_1(z_\newj),\alpha',d_3(z_\newj)+\beta' \rangle$ is a linear combination of the vectors $\langle d'_1(\overline{z}_i),d'_2(\overline{z}_i),d'_3(\overline{z}_i)\rangle$, $1\le i\le \overline{t}$. 
    In particular, there exist coefficients \( \overline{\lambda}_1, \ldots, \overline{\lambda}_{\overline{t}} \) in $\mathbb{N}$ such that 
    \[ \langle d_1(z_\newj), \alpha', d_3(z_\newj) + \beta' \rangle =  \sum_{i=1}^{\overline{t}} \overline{\lambda}_i \langle d'_1(\overline{z}_i),d'_2(\overline{z}_i),d'_3( \overline{z}_i) \rangle  \,. \]

    The above claim also implies that for any $j \in \{ 1, \dots, u \}$, there is a family $(\overline{z}^j_i)_{1\le i\le \eta_j}$ of variables in $p'$ and coefficients \( \overline{\lambda}^j_1, \ldots, \overline{\lambda}^j_{\eta_j} \) in $\mathbb{N}$ such that 
    \[ \langle d_1(z_j), d_2(z_j),d_3(z_j) \rangle =  \sum_{i=1}^{\eta_j} \overline{\lambda}^j_i \langle d'_1(\overline{z}^j_i),d'_2(\overline{z}^j_i),d'_3(\overline{z}^j_i) \rangle\,,  ~~~~~~~~~~ 1 \leq j \leq \newj \,,\]
    \[ \langle d_1(z_j),0,d_3(z_j)+d_2(z_j) \rangle =  \sum_{i=1}^{\eta_j} \overline{\lambda}^j_i \langle d'_1(\overline{z}^j_i),d'_2(\overline{z}^j_i),d'_3(\overline{z}^j_i) \rangle\,,  ~~~~~~~~~~ \newj+1 \leq j \leq u\,. \]

    Summing up, we obtain

    \begin{align*}
        \langle |w_1|,\alpha ,|w_3| + \beta \rangle & = \sum_{j=1}^{\newj-1} \lambda_j \langle d_1(z_j),d_2(z_j),d_3(z_j)\rangle + (\lambda_\newj - k ) \langle d_1(z_\newj),d_2(z_\newj),d_3(z_\newj) \rangle \\
        & ~~~~~~~  + \langle d_1(z_\newj), \alpha' ,d_3(z_\newj) + \beta' \rangle + (k-1) \langle d_1(z_\newj),0,d_3(z_\newj) + d_2(z_\newj) \rangle\\
        & ~~~~~~~  + \sum_{j=\newj+1}^{u} \lambda_j \langle d_1(z_j), 0, d_3(z_j) + d_2(z_j) \rangle \\
        & = \sum_{j=1}^{\newj-1} \sum_i \lambda_j \overline{\lambda}^j_i \langle d'_1(\overline{z}^j_i),d'_2(\overline{z}^j_i),d'_3(\overline{z}^j_i) \rangle \\
        & ~~~~~~~ + (\lambda_\newj-k) \sum_i \overline{\lambda}^\newj_i \langle d'_1(\overline{z}^\newj_i),d'_2(\overline{z}^\newj_i),d'_3(\overline{z}^\newj_i) \rangle \\
        & ~~~~~~~ + \sum_{i=1}^{\overline{t}} \overline{\lambda}_i \langle d'_1(\overline{z}_i),d'_2(\overline{z}_i),d'_3(\overline{z}_i)\rangle \\
        & ~~~~~~~ + (k-1) \sum_i \overline{\lambda}^\newj_i \langle d'_1(\overline{z}^\newj_i),d'_2(\overline{z}^\newj_i),d'_3(\overline{z}^\newj_i) \rangle \\
        & ~~~~~~~ + \sum_{j=\newj+1}^{u} \sum_i \lambda_j \overline{\lambda}^j_i \langle d'_1(\overline{z}^j_i),d'_2(\overline{z}^j_i),d'_3(\overline{z}^j_i) \rangle
    \end{align*}

    Thus, the vector $ \langle |w_1|,\alpha,|w_3|+\beta\rangle$ is a linear combination of the vectors $\langle d'_1(\overline{z}^j_i),d'_2(\overline{z}^j_i),d'_3(\overline{z}^j_i)\rangle$ and $\langle d'_1(\overline{z}_{i'}),d'_2(\overline{z}_{i'}),d'_3(\overline{z}_{i'})\rangle$ where $1\le j\le u$, $1\le i\le \eta_j$, and $1 \le i' \le \overline{t}$. Hence $w\in\varepsilon L_{len}(p',R')$, which completes the proof.
\end{proof}

%With an analogous proof, we obtain the following result.
Using similar techniques, we obtain the following lemma, whose proof is similar to that of Lemma~\ref{lem:ex-|alphabet|=2-length-1}.

\medskip

\begin{lemma}\label{lem:ex-|alphabet|=2-length-2}
Let $\Sigma = \{a,b\}$, $a \neq b$, $n_1, n_2 \in \mathbb{N}$, $n_2\ge 1$, $\vec{x}_1, \vec{x}'_1, \vec{x}_3, \vec{x}'_3 \in X^\ast$, $\vec{x}_2, \vec{x}'_2 \in X^+$.
\begin{itemize}
    \item Let $p = \vec{x}_1 a^{n_1} b^{n_2} \vec{x}_2 a^{n_3} \vec{x}_3$ and $p' = \vec{x}'_1 a^{n_1} b^{n_2-1} \vec{x}'_2 b a^{n_3} \vec{x}'_3$. Fix $R,R'$. Then 
    $\mathcal{S}_{\varepsilon,1}(p,R) \subseteq \varepsilon L_{len}(p',R')$ implies $\varepsilon L_{len}(p,R) \subseteq\varepsilon L_{len}(p',R')$.
    %$\mathcal{S}_{\varepsilon, 1} (p,R) \subseteq \varepsilon L_{len}(p',R') \subseteq \varepsilon L_{len}(p,R)$ implies $\varepsilon L_{len}(p,R) = \varepsilon L_{len}(p',R')$.
    
    \item Let $p = \vec{x}_1 a^{n_1} b \vec{x}_2 b^{n_2 -1} a^{n_3} \vec{x}_3$, and  $p' = \vec{x}'_1 a^{n_1} \vec{x}'_2 b^{n_2} a^{n_3} \vec{x}'_3$. Fix $R,R'$. Then
    $\mathcal{S}_{\varepsilon,1}(p',R') \subseteq \varepsilon L_{len}(p,R)$ implies $\varepsilon L_{len}(p',R') \subseteq \varepsilon L_{len} (p,R)$.
    %$\mathcal{S}_{\varepsilon, 1} (p',R') \subseteq \varepsilon L_{len}(p,R) \subseteq \varepsilon L_{len}(p',R')$ implies $\varepsilon L_{len}(p,R) = \varepsilon L_{len}(p',R')$.
\end{itemize}
\end{lemma}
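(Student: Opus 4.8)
The plan is to prove Lemma~\ref{lem:ex-|alphabet|=2-length-2} by adapting the argument of Lemma~\ref{lem:ex-|alphabet|=2-length-1} essentially verbatim, since the two statements differ only in how a single terminal symbol is redistributed across the boundary between the constant block and the adjacent variable block. I will prove only the first bullet; the second follows by a completely symmetric argument (reading the pattern from right to left, or equivalently exchanging the roles of $(p,R)$ and $(p',R')$). Throughout, for $c\in\{1,2,3\}$ and $x\in\mathit{Var}(p)$ I write $d_c(x):=|[x]_R\cap\mathit{Var}(\vec{x}_c)|$, and analogously $d'_c(y)$ for variables of $p'$, exactly as in the proof of Lemma~\ref{lem:ex-|alphabet|=2-length-1}. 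Without loss of generality I assume $(p,R)$ is in equal-length normal form.

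First I would establish the key claim, which is the analogue of the internal claim in the previous lemma: for each $x\in\mathit{Var}(p)$ and each pair $\alpha,\beta\in\mathbb{N}$ with $\alpha+\beta=d_2(x)$, the vector $\langle d_1(x),\alpha,d_3(x)+\beta\rangle$ (suitably adjusted in the degenerate cases $n_1=0$ or in the precise placement of the pivot symbol) is a nonnegative integer linear combination of vectors $\langle d'_1(y_i),d'_2(y_i),d'_3(y_i)\rangle$ drawn from a finite family of variables $y_i$ of $p'$. To prove the claim I would use the same device: take the $R$-substitution $\phi$ that erases every variable outside $[x]$ and sends variables in $[x]$ either to $a$ or to $b$ according to whether they lie in $\vec{x}_1,\vec{x}_3$ or in a chosen prefix/suffix of the $[x]$-variables inside $\vec{x}_2$. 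Because here $p=\vec{x}_1 a^{n_1}b^{n_2}\vec{x}_2 a^{n_3}\vec{x}_3$, the resulting word $\phi(p)$ has the shape $b^{d_1(x)}\,a^{n_1}\,b^{n_2+\alpha}\,a^{\beta+n_3}\,b^{d_3(x)}$, and $\phi(p)\in\mathcal{S}_{\varepsilon,1}(p,R)\subseteq\varepsilon L_{len}(p',R')$. Matching this word against $p'=\vec{x}'_1 a^{n_1}b^{n_2-1}\vec{x}'_2 b a^{n_3}\vec{x}'_3$ via some valid $R'$-substitution $\phi'$, the forced block structure ($a^{n_1}$ aligns with $a^{n_1}$, and the single extra $b$ in $p'$ is exactly the $b$ sitting just before $a^{n_3}$) pins down $\phi'(\vec{x}'_1)$, $\phi'(\vec{x}'_2)$, $\phi'(\vec{x}'_3)$ up to their lengths, and validity of $\phi'$ for $R'$ yields the desired linear combination. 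I would organize this into the same subcases on whether $n_1$ and $n_2-1$ are zero, mirroring Claims~1--3 of Lemma~\ref{lem:ex-|alphabet|=2-length-1}.

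With the claim in hand, the passage from the claim to the inclusion $\varepsilon L_{len}(p,R)\subseteq\varepsilon L_{len}(p',R')$ is identical in structure to the corresponding part of the previous proof. Given $w\in\varepsilon L_{len}(p,R)$, I write $w=w_1\,a^{n_1}\,b^{n_2}\,w_2\,a^{n_3}\,w_3$ with $\langle|w_1|,|w_2|,|w_3|\rangle$ a nonnegative integer combination $\sum_j\lambda_j\langle d_1(z_j),d_2(z_j),d_3(z_j)\rangle$ of the $z_j\in\mathit{Var}(p)$. I then split $w_2=b^{\alpha}s$, where $s$ is empty or starts with $a$, so that the boundary needed by $p'$ (namely the extra $b$ immediately before $a^{n_3}$) can be located inside $b^{n_2}b^{\alpha}$, locate a pivot index $\jmath$ and a split $\alpha=\sum_{j<\jmath}\lambda_j d_2(z_j)+(\lambda_\jmath-k)d_2(z_\jmath)+\alpha'$ with $0\le\alpha'\le d_2(z_\jmath)$ exactly as before, apply the claim to $z_\jmath$ with the split $(\alpha',\beta')$ and to every other $z_j$ with $(d_2(z_j),0)$ or $(0,d_2(z_j))$ depending on whether $j<\jmath$ or $j>\jmath$, and sum the resulting combinations. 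This exhibits $\langle|w_1|,\alpha,|w_3|+\beta\rangle$ as a nonnegative integer combination of vectors $\langle d'_1(\cdot),d'_2(\cdot),d'_3(\cdot)\rangle$ of variables in $p'$, which is precisely what is needed to realize $w$ as $\phi'(p')$ for a valid $R'$-substitution, giving $w\in\varepsilon L_{len}(p',R')$.

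The main obstacle — and the only place where genuine care beyond copying is required — is verifying that the single displaced terminal symbol does not break the block-matching argument in the claim. In Lemma~\ref{lem:ex-|alphabet|=2-length-1} the moved symbol was an $a$ and the blocks were $a^{n_1}b^{n_2+n_3}$ versus $a^{n_1}b^{n_2}\cdots b^{n_3}a$; here the moved symbol is a $b$ straddling $b^{n_2}$ and the $b$ that $p'$ places immediately before $a^{n_3}$, so I must check that the forced alignment of the $a^{n_1}$ prefix and the $a^{n_3}$ suffix still uniquely determines where $\phi'(\vec{x}'_2)$ ends and the pivot $b$ begins. Because $|\Sigma|=2$ and the constant blocks are maximal, this alignment is forced, but I would spell out, as in the previous proof, that $\phi'(\vec{x}'_1)$ contains no $a$ beyond the $a^{n_1}$ already matched and $\phi'(\vec{x}'_3)$ no $a$ beyond $a^{n_3}$, so that the $b$-counts redistribute correctly and the linear-combination conclusion follows. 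Everything else is routine bookkeeping, which is exactly why the authors omit the full details and cite the analogy with Lemma~\ref{lem:ex-|alphabet|=2-length-1}.
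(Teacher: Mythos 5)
Your overall architecture is the right one and matches what the paper intends: the paper omits this proof, stating only that it is analogous to that of Lemma~\ref{lem:ex-|alphabet|=2-length-1}, and your plan (an internal claim obtained from length-one substitutions supported on a single group, followed by the pivot-index assembly) is the correct skeleton. The claim itself, with $\phi(p)=b^{d_1(x)}a^{n_1}b^{n_2+\alpha}a^{\beta+n_3}b^{d_3(x)}$ and the forced alignment yielding $\langle d_1(x),\alpha,d_3(x)+\beta\rangle$, goes through as you describe, modulo the degenerate subcases you already flag.

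There is, however, a concrete error in the assembly step, and it sits precisely at the spot you yourself identify as the only place requiring care. You split $w_2=b^{\alpha}s$ with $s$ empty or beginning with $a$, i.e.\ you anchor the terminal $b$ of $p'$ at the end of the \emph{leading} $b$-run of $b^{n_2}w_2$. In Lemma~\ref{lem:ex-|alphabet|=2-length-1} this is correct because the displaced block $b^{n_3}a$ of $p'$ ends in a \emph{single} $a$, and ``$s$ is empty or starts with $a$'' supplies exactly that one letter. In the present lemma the displaced block is $b\,a^{n_3}$ with $n_3$ possibly at least $2$: after the last $b$ of $b^{n_2+\alpha}$, the next $n_3$ letters of $w$ are the first $n_3$ letters of $s\,a^{n_3}w_3$, and when $s$ is nonempty these need not all be $a$ (take $n_3=2$ and $w_2=ab$, so $\alpha=0$ and the two letters following your chosen $b$ are $ab$). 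So the segmentation you write down is not consistent with the letters of $w$; knowing that $\langle|w_1|,\alpha,|w_3|+\beta\rangle$ is a nonnegative integer combination of the $p'$-group vectors only helps if the corresponding parse of $w$ actually places the terminal $b\,a^{n_3}$ of $p'$ on letters $b\,a^{n_3}$. The fix is to mirror the split rather than copy it: write $b^{n_2}w_2 = u\,b\,a^{\beta}$, where the displayed $b$ is the \emph{last} occurrence of $b$ in $b^{n_2}w_2$ (which exists since $n_2\ge 1$), set $\alpha:=|w_2|-\beta$, and anchor $b\,a^{n_3}$ there; then the $n_3$ letters following the chosen $b$ are guaranteed to be $a$'s, the vector to be realized is again $\langle|w_1|,\alpha,|w_3|+\beta\rangle$ with $\alpha+\beta=|w_2|$, and your pivot construction applies verbatim. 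With that one change the argument is correct.
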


\begin{proof}
    Let $c\in\{1,2,3\}$. If $x\in \mathit{Var}(p)$, then $d_c(x):=|[x]_R\cap \mathit{Var}(\vec{x}_c)|$ denotes the number of positions in $\vec{x}_c$ that correspond to variables in $[x]_R$. Likewise, for $x\in \mathit{Var}(p')$, we write $d'_c(x):=|[x]_{R'}\cap \mathit{Var}(\vec{x}'_c)|$.

    We only prove the first statement; the second one is proved analogously. So, let $(p,R)$ and $(p',R')$ be as in Statement 1. Note that we exclude the case where $n_1 = n_3 = 0$ and $n_2 = 1$, since in that case $p$ and $p'$ are congruous.
    Now suppose $\mathcal{S}_{\varepsilon,1}(p,R) \subseteq\varepsilon L_{len}(p',R')$. Without loss of generality, assume  $(p,R)$ is in equal-length normal form. First, we will establish the following claim.

    \emph{Claim.} Let $x\in \mathit{Var}(p)$, and let $\alpha,\beta\in\mathbb{N}$ satisfy $\alpha+\beta=d_2(x)$. 

    \begin{enumerate}
        \item If $n_1 \ge 1$ and $n_3 \geq 1$, then there is some $t>0$ and a family $(y_i)_{1\le i\le t}$ of variables in $p'$ such that the vector $\langle d_1(x),\alpha,d_3(x)+\beta\rangle$ is a linear combination of $\langle d'_1(y_i),d'_2(y_i),d'_3(y_i)\rangle$, $1\le i\le t$.
        
        %\item If $n_1=0$ or $n_2=0$, then there is some $t>0$ and a family $(y_i)_{1\le i\le t}$ of variables in $p'$ such that the vector $\langle d_1(x) + \alpha ,d_3(x)+ \beta \rangle$ is a linear combination of $\langle d'_1(y_i),d'_3(y_i)\rangle$, $1\le i\le t$.

        \item If $n_3=0$, then there is some $t>0$ and a family $(y_i)_{1\le i\le t}$ of variables in $p'$ such that the vector $\langle d_1(x), d_3(x) \rangle$ is a linear combination of $\langle d'_1(y_i),  d'_3(y_i)\rangle$, $1\le i\le t$, where $d'_2(y_i)=0$ for all $i$. (Note that in this case $p$ has only two variable blocks.)

        \item If $n_1=0$, $n_3 \geq 1$, and $n_2 \geq 2$, then there is some $t>0$, $\gamma \in \mathbb{N}$ with $\gamma \leq \alpha$, and a family $(y_i)_{1\le i\le t}$ of variables in $p'$ such that the vector $\langle d_1(x) + \gamma ,\alpha - \gamma,d_3(x)+\beta\rangle$ is a linear combination of $\langle d'_1(y_i),d'_2(y_i),d'_3(y_i)\rangle$, $1\le i\le t$.
        
        \item If $n_1=0$, $n_3 \geq 1$, and $n_2 = 1$, then there is some $t>0$ and a family $(y_i)_{1\le i\le t}$ of variables in $p'$ such that the vector $\langle d_1(x) + \alpha ,d_3(x)+ \beta \rangle$ is a linear combination of $\langle d'_1(y_i),d'_3(y_i)\rangle$, $1\le i\le t$, where $d'_2(y_i)=0$ for all $i$.
    \end{enumerate}

    \emph{Proof of Claim 1 (for $n_1,n_3 \geq 1$.)} Since $(p,R)$ is in equal-length normal form, $\vec{x}_2$ contains a (possibly empty) substring of the form $z_1\ldots z_{d_2(x)}$, where $[x]\cap \mathit{Var}(\vec{x}_2)=\{z_1,\ldots, z_{d_2(x)}\}$. Let $\phi$ be the substitution that erases every variable not contained in $[x]$, and replaces variables $y\in [x]$ as follows: if $y\in [x]\cap (\mathit{Var}(\vec{x}_1) \cup \mathit{Var}(\vec{x}_3))$ or if $y\in\{z_1,\ldots, z_\alpha\}$, then $\phi(y)=b$; if $y\in\{z_{\alpha+1},\ldots, z_{d_2(x)}\}$, then $\phi(y)=a$. This substitution is valid for $R$; applied to $p$ it generates the word
    \[
    \phi(p)=b^{d_1(x)}\ a^{n_1}\ b^{n_2 + \alpha}\ a^{\beta+ n_3}\ b^{d_3(x)}\,.
    \]
    By definition of $\phi$, we have $\phi(p)\in\mathcal{S}_{\varepsilon,1}(p,R)$ and thus $\phi(p)\in\varepsilon L_{len}(p',R')$. Thus, there exists an $R'$-substitution $\phi'$ such that $\phi'(p')=\phi(p)$. Due to the shape of $p'$, this implies $\phi'(\vec{x}'_1)=b^{d_1(x)}$, $\phi'(\vec{x}'_2)=b^{\alpha}$,  and $\phi'(\vec{x}'_3)=a^{\beta}b^{d_3(x)}$. For $\phi'$ to be valid for $R'$, this entails the existence of a family $(y_i)_{1\le i\le t}$ of variables in $p'$ such that the vector $\langle d_1(x),\alpha,d_3(x)+\beta\rangle$ is a linear combination of the vectors $\langle d'_1(y_i),d'_2(y_i),d'_3(y_i)\rangle$, $1\le i\le t$. This completes the proof for Claim 1.

    \emph{Proof of Claim 2 (for $n_3 = 0$.)}
    In this case, $p$ has two variable blocks, i.e., w.l.o.g., \( p = \vec{x}_1\ a^{n_1} b^{n_2}\ \vec{x}_3 \,, \) while \( p' = \vec{x}'_1\ a^{n_1} b^{{n_2} -1}\ \vec{x}'_2\ b\ \vec{x}'_3 \,. \)

       Let $\phi$ be such that $\phi(y)=a$ for every $y\in [x]$ and $\phi(y)=\varepsilon$ for every $y\notin[x]$. This substitution is valid for $R$; applied to $p$ it generates the word
        \[
            \phi(p) = a^{d_1(x)}\ a^{n_1}\ b^{n_2}\ a^{d_3(x)}\,.
        \]
        By definition of $\phi$, we have $\phi(p)\in\mathcal{S}_{\varepsilon,1}(p,R)$ and thus $\phi(p)\in\varepsilon L_{len}(p',R')$. Thus, there exists an $R'$-substitution $\phi'$ such that $\phi'(p')=\phi(p)$. Due to the shape of $p'$, this implies $\phi'(\vec{x}'_1)=a^{d_1(x)}$, $\phi'(\vec{x}'_2)=\varepsilon $, and $\phi'(\vec{x}'_3)=a^{d_3(x)}$. For $\phi'$ to be valid for $R'$, this entails the existence of a family $(y_i)_{1\le i\le t}$ of variables in $p'$ such that the vector $\langle d_1(x), d_3(x) \rangle$ is a linear combination of the vectors $\langle d'_1(y_i), d'_3(y_i)\rangle$, $1\le i\le t$, where $d'_2(y_i)=0$ for all $i$. This completes the proof of Claim 2.
  
       % \item \textcolor{olive}{$n_1 = 0$. Now, let $\phi$ be such that $\phi(y)=a$ for every $y\in [x]$ and $\phi(y)=\varepsilon$ for every $y\notin[x]$. %the substitution that erases every variable not contained in $[x]$, and replaces variables $y\in [x]$, with $``a"$, i.e; $\phi(y)=a$ for every $y\in [x]$. 
        %This substitution is valid for $R$; applied to $p$ it generates the word
       % \[
       %     \phi(p) = a^{d_1(x)}\ b^{n_2}\ a^{d_3(x)}\,.
       % \]
      %  Again by definition of $\phi$, we have $\phi(p)\in\mathcal{S}_{\varepsilon,1}(p,R)$ and thus $\phi(p) \in \varepsilon L_{len}(p',R')$. Hence, there exists an $R'$-substitution $\phi'$ such that $\phi'(p') = \phi(p)$. Due to the shape of $p'$, this implies $\phi'(\vec{x}'_1) = a^{d_1(x)}$, $\phi'(\vec{x}'_2) = \varepsilon$, and $\phi'(\vec{x}'_3)=a^{d_3(x)}$. %Note that
        %\begin{eqnarray*} \phi'(p') &=& \phi'(\vec{x}'_1)\ b^{n_3}a \phi(\vec{x}'_3) = a^{d_1(x)} b^{\alpha}\  b^{n_3}a\ a^{\beta}b^{d_3(x)} = a^{d_1(x)}\ b^{n_3+\alpha}\ a^{\beta+1}\ b^{d_3(x)} \\&=& \phi(p)\,. \end{eqnarray*}
     %   For $\phi'$ to be valid for $R'$, this entails the existence of a family $(y_i)_{1\le i\le t}$ of variables in $p'$ such that the vector $\langle d_1(x), d_3(x) \rangle$ is a linear combination of the vectors $\langle d'_1(y_i), d'_3(y_i) \rangle$, $1\le i\le t$. \textcolor{cyan}{Should we also say that $d'_2(y_i)=0$ for all $i$?}} \textcolor{red}{SM: Yes, that it true.}
         %\item $n_1 = 0,\ n_2 \neq 0$. In this case, the proof proceeds as in the case where $n_1, n_2 \geq 1$, except one defines $\phi$ such that
        % \end{itemize}

    \emph{Proof of Claim 3 (for $n_1=0$, $n_3 \geq 1$, and $n_2 \geq 2$.)}   
        In this case, $p$ and $p'$ are of the form \( p = \vec{x}_1\ b^{n_2}\ \vec{x}_2\ a^{n_3}\ \vec{x}_3 \,, \) and \( p' = \vec{x}'_1\ b^{{n_2}-1}\ \vec{x}'_2\ ba^{n_3}\ \vec{x}'_3 \,. \) 
        Now, let $\phi$ be the substitution that erases every variable not contained in $[x]$, and replaces variables $y\in [x]$ as follows: if $y\in [x]\cap \mathit{Var}(\vec{x}_3)$ or if $y\in\{z_1,\ldots, z_\alpha\}$, then $\phi(y)=b$; if $y\in ([x]\cap \mathit{Var}(\vec{x}_1))$ or if $y\in \{z_{\alpha+1},\ldots, z_{d_2(x)}\}$, then $\phi(y)=a$. This substitution is valid for $R$; applied to $p$ it generates the word
        \[
            \phi(p) = a^{d_1(x)}\ b^{n_2+\alpha}\ a^{\beta+n_3}\ b^{d_3(x)}\,.
        \]
        Again by definition of $\phi$, we have $\phi(p)\in\mathcal{S}_{\varepsilon,1}(p,R)$ and thus $\phi(p)\in\varepsilon L_{len}(p',R')$. Due to the shape of $p'$, this implies that there exists $\gamma \in \mathbb{N}$ such that $\gamma \leq \alpha$, $\phi'(\vec{x}'_1)=a^{d_1(x)} b^{\gamma}$, $\phi'(\vec{x}'_2)=b^{\alpha - \gamma}$, and $\phi'(\vec{x}'_3)=a^{\beta}b^{d_3(x)}$. %Note that
        %\begin{eqnarray*} \phi'(p') &=& \phi'(\vec{x}'_1)\ b^{n_3}a \phi(\vec{x}'_3) = a^{d_1(x)} b^{\alpha}\  b^{n_3}a\ a^{\beta}b^{d_3(x)} = a^{d_1(x)}\ b^{n_3+\alpha}\ a^{\beta+1}\ b^{d_3(x)} \\&=& \phi(p)\,. \end{eqnarray*}
        For $\phi'$ to be valid for $R'$, this entails the existence of a family $(y_i)_{1\le i\le t}$ of variables in $p'$ such that the vector $\langle d_1(x) + \gamma ,\alpha - \gamma,d_3(x)+\beta\rangle$ is a linear combination of $\langle d'_1(y_i),d'_2(y_i),d'_3(y_i)\rangle$, $1\le i\le t$.
This completes the proof of Claim 3.

          \emph{Proof of Claim 4 ($n_1=0$, $n_3 \geq 1$, and $n_2 = 1$.)} In this case, $p$ and $p'$ are of the form \( p = \vec{x}_1\ b\ \vec{x}_2\ a^{n_3}\ \vec{x}_3 \) and \( p' = \vec{x}'_1\ ba^{n_3}\ \vec{x}'_3 \,. \) 
          Now, let $\phi$ be the substitution that erases every variable not contained in $[x]$, and replaces variables $y\in [x]$ as follows: if $y\in [x]\cap \mathit{Var}(\vec{x}_3)$ or if $y\in\{z_1,\ldots, z_\alpha\}$, then $\phi(y)=b$; if $y\in ([x]\cap \mathit{Var}(\vec{x}_1))$ or if $y\in \{z_{\alpha+1},\ldots, z_{d_2(x)}\}$, then $\phi(y)=a$. This substitution is valid for $R$; applied to $p$ it generates the word
        \[
            \phi(p) = a^{d_1(x)}\  b^{1 + \alpha}\ a^{\beta + n_3}\ b^{d_3(x)}\,.
        \]
        Again by definition of $\phi$, we have $\phi(p) \in \mathcal{S}_{\varepsilon,1}(p,R)$ and thus $\phi(p)\in\varepsilon L_{len}(p',R')$. Hence, there exists an $R'$-substitution $\phi'$ such that $\phi'(p')=\phi(p)$. Due to the shape of $p'$, this implies $\phi'(\vec{x}'_1)=a^{d_1(x)} b^{\alpha}$,  and $\phi'(\vec{x}'_3)=a^{\beta}b^{d_3(x)}$. %Note that
        %\begin{eqnarray*} \phi'(p') &=& \phi'(\vec{x}'_1)\ b^{n_2} \phi'(\vec{x}'_2)\ b^{n_3}a \phi(\vec{x}'_3) \\&=& a^{d_1(x)} b^{\gamma}\ b^{n_2} b^{\alpha - \gamma} b^{n_3}a\ a^{\beta}b^{d_3(x)} \\&=& a^{d_1(x)}\ b^{n_2 + n_3+\alpha}\ a^{\beta+1}\ b^{d_3(x)} \\&=& \phi(p)\,. \end{eqnarray*}
        For $\phi'$ to be valid for $R'$, this entails the existence of a family $(y_i)_{1\le i\le t}$ of variables in $p'$ such that the vector $\langle d_1(x)+\alpha,d_3(x)+\beta\rangle$ is a linear combination of the vectors $\langle d'_1(y_i),d'_3(y_i)\rangle$, $1\le i\le t$, where $d'_2(y_i)=0$ for all $i$.  \hfill$\Box$\emph{(Claim.)}

\smallskip

    With this claim in hand, we can prove $\varepsilon L_{len}(p,R)\subseteq\varepsilon L_{len}(p',R')$. For the sake of brevity, let $n_1 \geq 1$ and $n_3 \geq 1$. The cases where $n_1 = 0$ or $n_3 = 0$ can be handled similarly.

    Let $w\in\varepsilon L_{len}(p,R)$. We need to show that $w\in\varepsilon L_{len}(p',R')$.
    Clearly, there are $w_1,w_2,w_3\in\Sigma^\ast$ and variables $z_1, \dots, z_u$ in $p$ such that $w=w_1\ a^{n_1}\ b^{n_2}\ w_2\ a^{n_3}\ w_3$ and the vector $\langle |w_1|,|w_2|,|w_3|\rangle$ is a linear combination of the vectors $\langle d_1(z_j),d_2(z_j),d_3(z_j)\rangle$, $1\le j\le u$.
    In particular, let $\lambda_1, \ldots, \lambda_u$ be natural numbers such that 
    \[ \langle |w_1|,|w_2|,|w_3|\rangle = \sum_{j=1}^{u} \lambda_j \langle d_1(z_j),d_2(z_j),d_3(z_j)\rangle\,. \]  
    Now, let $w_2 = s\ s'$, where $s' \in \{a\}^\ast$ and $s$ either is the empty string or ends with the letter $b$. Thus,
    \[ w=w_1\ a^{n_1}\ b^{n_2}\ w_2\ a^{n_3}\ w_3 = w_1\ a^{n_1}\ b^{n_2 -1}\ b\ s\ s'\ a^{n_3}\ w_3\,. \]    
    Clearly, since $s$ either is the empty string or ends with the letter $b$, we can fix $v \in \Sigma^\ast$ such that $|v|=|s|$ and
    \[  w = w_1\ a^{n_1}\ b^{n_2 -1}\ v\ b\ a^{n_3}\ s'\ w_3\,. \]
    Let $\alpha = |v| = |s|$ and $\beta = |w_2| - \alpha = |s'|$. Since $|w_2| = \sum_j \lambda_j d_2(z_j)$, there exists an index $\newj$ and $0 \leq \alpha' \leq d_2(z_\newj)$ such that 
    \[ 
    \alpha = \sum_{j=1}^{\newj-1} \lambda_j d_2(z_j) +  ((\lambda_\newj- k) d_2(z_\newj) + \alpha' )\,, %+ \sum_{j = t+1}^{u} \lambda_j d_2(z_j)
    \]
    \[
    \beta = \sum_{j=\newj+1}^{u} \lambda_j d_2(z_j) +  ( kd_2(z_\newj) - \alpha' ) \,,
    \]
    for some $k$ with $0 < k < \lambda_\newj $.  
    Let $\beta' = d_2(z_\newj)-\alpha'$. We now invoke Claim 1 with $(\alpha',\beta')$ in place of $(\alpha,\beta)$, so that the remainder of the proof is identical to the corresponding part of the proof of Lemma~\ref{lem:ex-|alphabet|=2-length-1}.
\end{proof}

The proof of Lemma~\ref{lem:ex-|alphabet|=2} is fairly simple.

\medskip

\begin{lemma}\label{lem:ex-|alphabet|=2}
Let $\Sigma = \{a,b\}$ and $a \neq b$. Let $p = \vec{x}_1 a^{n_1} b^{n_2} \vec{x}_2 a^{n_3} \vec{x}_3$ and $p' = \vec{x}'_1 a^{n_1} \vec{x}'_2 b^{n_2} a^{n_3} \vec{x}'_3$, where $n_1,n_2,n_3 \in \mathbb{N}$, $\vec{x}_1, \vec{x}'_1, \vec{x}_3, \vec{x}'_3 \in X^\ast$ and $\vec{x}_2, \vec{x}'_2 \in X^+$. Let $R$ and $R'$ be arbitrary relations over $X$.  Then:
\begin{enumerate}
    \item $\mathcal{S}_{\varepsilon, 2} (p, R) \subseteq \varepsilon L_{len}(p',R')$ implies $\varepsilon L_{len}(p,R) \subseteq \varepsilon L_{len}(p',R')$.
    \item $\mathcal{S}_{\varepsilon, 2} (p', R') \subseteq \varepsilon L_{len}(p,R)$ implies $\varepsilon L_{len}(p',R') \subseteq \varepsilon L_{len}(p,R)$.
\end{enumerate}
\end{lemma}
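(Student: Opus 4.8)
The plan is to follow the structure of the proof of Lemma~\ref{lem:ex-|alphabet|=2-length-1}: first establish a local \emph{block-matching claim} by testing a small family of substitutions in $\mathcal{S}_{\varepsilon,2}(p,R)$, and then run a global linear-algebra assembly that re-expresses an arbitrary word of $\varepsilon L_{len}(p,R)$ as a word of $\varepsilon L_{len}(p',R')$. I would prove Statement~1 in detail and obtain Statement~2 by the completely symmetric argument (interchanging the roles of $(p,R)$ and $(p',R')$ and of the two pattern shapes). By Corollary~\ref{cor:nf} I may assume without loss of generality that $(p,R)$ is in equal-length normal form, so that for every group $[x]$ the variables of $[x]$ occurring in a common variable block are contiguous. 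For $c\in\{1,2,3\}$ and $x\in\mathit{Var}(p)$ I write $d_c(x)=|[x]_R\cap\mathit{Var}(\vec{x}_c)|$, and analogously $d'_c(y)$ for $p'$. Recall that membership of a word in $\varepsilon L_{len}(p,R)$ is equivalent to the existence of a factorization matching the terminal blocks $a^{n_1}b^{n_2}$ and $a^{n_3}$ whose variable-block lengths form a non-negative integer combination of the vectors $\langle d_1(x),d_2(x),d_3(x)\rangle$, and similarly for $p'$.

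The block-matching claim I aim to prove is: for every group $[x]$ of $p$ and all $\alpha,\beta\in\mathbb{N}$ with $\alpha+\beta=d_2(x)$, the vector $\langle d_1(x),\alpha,d_3(x)+\beta\rangle$ is a non-negative integer linear combination of the decomposition vectors $\langle d'_1(y),d'_2(y),d'_3(y)\rangle$ of the groups of $p'$. Here $\alpha$ counts the copies of $[x]$ in $\vec{x}_2$ that must be absorbed \emph{before} the crossing block $b^{n_2}$ (hence land in $\vec{x}'_2$), while $\beta$ counts those absorbed \emph{after} it (landing in $\vec{x}'_3$). To force this, I would, for each chosen split, design a substitution $\phi$ valid for $R$ that erases all variables outside $[x]$ and assigns $a$'s and $b$'s, and crucially some length-$2$ words such as $ba$, to the variables of $[x]$, so that the resulting word $\phi(p)\in\mathcal{S}_{\varepsilon,2}(p,R)$ admits essentially only one factorization with respect to $p'$, namely the one certifying the desired combination. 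The new feature relative to Lemma~\ref{lem:ex-|alphabet|=2-length-1} is that here the \emph{entire} block $b^{n_2}$ must cross the variable region; a single length-$1$ probe cannot simultaneously place a $b^{n_2}$-block flush against the variable region on the side where $p'$ expects it, whereas a length-$2$ assignment spanning the $b/a$ boundary can. This is precisely why $\mathcal{S}_{\varepsilon,2}$, and not $\mathcal{S}_{\varepsilon,1}$, appears in the hypothesis.

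With the claim in hand, the global assembly proceeds as in Lemma~\ref{lem:ex-|alphabet|=2-length-1}. Given $w\in\varepsilon L_{len}(p,R)$, I write $w=u_1\,a^{n_1}b^{n_2}\,u_2\,a^{n_3}\,u_3$ with $\langle|u_1|,|u_2|,|u_3|\rangle=\sum_j\lambda_j\langle d_1(z_j),d_2(z_j),d_3(z_j)\rangle$ for suitable groups $z_j$ and $\lambda_j\in\mathbb{N}$. I then choose $\alpha$ so that $u_2=b^{\alpha}s$ where $s$ is empty or begins with $a$; this rewrites the middle of $w$ so that a block $b^{n_2}a^{n_3}$ can be exhibited, matching the terminal block of $p'$. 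Applying the claim to each group's split $\alpha_j+\beta_j=d_2(z_j)$ and summing then shows that $\langle|u_1|,\alpha,|u_3|+\beta\rangle$, with $\beta=|u_2|-\alpha$, is a non-negative integer combination of the decomposition vectors of $p'$, which is exactly the certificate that $w\in\varepsilon L_{len}(p',R')$.

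The main obstacle will be the proof of the block-matching claim for the full crossing of $b^{n_2}$: one has to argue that the $\mathcal{S}_{\varepsilon,2}$-hypothesis genuinely forces the stated $R'$-combination for every admissible split $\alpha+\beta=d_2(x)$, which requires a careful case analysis on $n_1,n_2,n_3$ (with the degenerate cases $n_1=0$ or $n_3=0$ treated separately, as in the three claims inside the proof of Lemma~\ref{lem:ex-|alphabet|=2-length-1}) and on which length-$1$ versus length-$2$ assignments make the $p'$-factorization unique. A secondary obstacle, already visible in Lemma~\ref{lem:ex-|alphabet|=2-length-1}, is justifying that the integer split $\alpha=\sum_j\lambda_j\alpha_j$ with $0\le\alpha_j\le d_2(z_j)$ exists and is consistent with the actual letters of the fixed word $w$; I would resolve this with the same boundary-group device used there, isolating one index $\jmath$ and splitting a single coefficient $\lambda_\jmath$ through an auxiliary $0\le\alpha'\le d_2(z_\jmath)$, rather than attempting a naive term-by-term split.
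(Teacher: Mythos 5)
There is a genuine gap. Your plan transplants the two-stage argument of Lemma~\ref{lem:ex-|alphabet|=2-length-1} (block-matching claim plus linear-combination assembly), but the assembly step breaks precisely because in Lemma~\ref{lem:ex-|alphabet|=2} the terminal block following $\vec{x}_2$ is $a^{n_3}$ rather than a single $a$. Given $w=u_1\,a^{n_1}b^{n_2}\,u_2\,a^{n_3}\,u_3$ with $u_2=b^{\alpha}s$ and $s$ beginning with $a$, you need $s\,a^{n_3}$ to \emph{begin with} $a^{n_3}$ in order to exhibit the block $b^{n_2}a^{n_3}$ of $p'$ contiguously; for $n_3\ge 2$ this fails whenever $s\notin\{a\}^{*}$. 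A concrete counterexample to your argument: take $n_1,n_2,n_3\ge 2$, $p=a^{n_1}b^{n_2}y\,a^{n_3}$ with $y$ free, and $p'=a^{n_1}z\,b^{n_2}a^{n_3}z'$ with $z,z'$ free. Then $\mathcal{S}_{\varepsilon,1}(p,R)\subseteq\varepsilon L_{len}(p',R')$ holds and your block-matching claim is satisfied (both $\langle 0,1,0\rangle$ and $\langle 0,0,1\rangle$ are decomposition vectors of $p'$), yet $a^{n_1}b^{n_2}ab\,a^{n_3}\in\varepsilon L_{len}(p,R)\setminus\varepsilon L_{len}(p',R')$, so the conclusion of your assembly is false. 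Since your derivation of the claim uses only $\ell_1$-probes, your argument would in fact ``prove'' the lemma with $\mathcal{S}_{\varepsilon,1}$ in the hypothesis, which this example refutes.

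Relatedly, you have misidentified why $\mathcal{S}_{\varepsilon,2}$ appears in the hypothesis. In the paper's proof the length-$2$ substitutions are not used to force additional factorizations of $p'$; they are used to \emph{falsify the premise} in the hard cases. The word $a^{n_1}b^{n_2}ab\,a^{n_3}$ above lies in $\mathcal{S}_{\varepsilon,2}(p,R)$, so when $n_1,n_2,n_3\ge 2$ the inclusion $\mathcal{S}_{\varepsilon,2}(p,R)\subseteq\varepsilon L_{len}(p',R')$ already fails and Statement~1 holds vacuously. The paper's proof is accordingly a case analysis on $n_1,n_2,n_3$: the cases $n_2=0$, $n_1=1$, $n_3=1$, $n_2=1$, and the degenerate $n_1=0$ or $n_3=0$ cases are either immediate or reduced to Lemmas~\ref{lem:ex-|alphabet|=2-length-1} and~\ref{lem:ex-|alphabet|=2-length-2}, while in all remaining cases one exhibits an explicit $\ell_2$-witness (such as a variable mapped to $ba$, with its group members mapped to $aa$) showing the two languages are incomparable, so that both implications are vacuous. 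To repair your proof you would need to add exactly this vacuousness analysis; a uniform block-matching argument cannot cover the cases where the conclusion is simply false under the $\ell_1$-version of the premise.
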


\begin{proof}
    If $n_2=0$, the statements are obvious. If $n_1 = 1$ or $n_3 = 1$, both statements follow from Lemma~\ref{lem:ex-|alphabet|=2-length-1}. Hence, suppose $n_1,n_3\ne 1$ and $n_2\ge 1$.

    If $n_3=0$, then wlog suppose $\vec{x}_3 = \varepsilon$, so that we have $p = \vec{x}_1 a^{n_1} b^{n_2} \vec{x}_2$ and $p' = \vec{x}'_1 a^{n_1} \vec{x}'_2 b^{n_2}  \vec{x}'_3$. Consider the following cases:
    \begin{itemize}
        \item $n_1\ge 2$ and $n_2\ge 2$. To verify Statement 1, suppose $\mathcal{S}_{\varepsilon, 2} (p, R) \subseteq \varepsilon L_{len}(p',R')$. Now consider any variable $x$ in $p$ and define  the $\ell_1$-substitution $\theta_x$ that erases all variables outside $[x]$, replaces all variables in $[x]$ that occur in $\vec{x}_1$ with $b$, and replaces all variables in $[x]$ that occur in $\vec{x}_2$ with $a$. Applied to $p$, it generates the string $\theta_x(p)=b^{d_1(x)}a^{n_1}b^{n_2}a^{d_2(x)}$, where $d_i(x)$ is the number of occurrences of variables in $[x]$ in the string $\vec{x}_i$, for $i\in\{1,2\}$. Obviously, \( \theta_x(p) \in \mathcal{S}_{\varepsilon, 1} (p, R) \subseteq \mathcal{S}_{\varepsilon, 2} (p, R) \), and thus $\theta_x(p) \in \varepsilon L_{len}(p',R')$ for each variable $x$ in $p$. %Now, the same construction as in the proof of Lemma~\ref{lem:ex-|alphabet|=2-length-1} yields Statement 1.

        Clearly, any $R'$-substitution $\theta'_x$ with $\theta'_x(p') = \theta_x(p)$ must satisfy $\theta'_x(\vec{x}'_2) = \varepsilon$. This implies that there are variables $y_1, \dots, y_m$ in $p'$ such that $d_1(x) = \lambda_1 d'_1(y_1) + \dots + \lambda_m d'_1(y_m)$, $d_2(x) = \lambda_1 d'_3(y_1) + \dots + \lambda_m d'_3(y_m)$, and $d'_2(y_1) = \dots = d'_2(y_m) = 0$, where $\lambda_1, \dots, \lambda_m \geq 1$ are suitably chosen coefficients and $d'_i(y)$ is the number of occurrences of variables in $[y]$ in the string $\vec{x}'_i$. Since such coefficients $\lambda_1, \dots, \lambda_m \geq 1$ exist for any variable $x$ in $p$, we obtain $\varepsilon\mathcal{L} (\vec{x}_1 a^{n_1} b^{n_2} \vec{x}_2,R)\subseteq\varepsilon\mathcal{L}(\vec{x}'_1 a^{n_1} \vec{x}'_2 b^{n_2}  \vec{x}'_3,R')$.
        
        Statement 2 follows, since \( \mathcal{S}_{\varepsilon, 2} (p', R') \not \subseteq \varepsilon L_{len}(p,R) \). To see the latter, we argue as follows. If there is a free variable $y\in \mathit{Var}(\vec{x}'_2)$, then $\varphi(y)=ba$ and $\varphi(z)=\varepsilon$ for $z\ne y$ yields an $R'$-substitution $\varphi$ such that $\varphi(p')\in\mathcal{S}_{\varepsilon, 2} (p', R')\setminus \varepsilon L_{len}(p,R)$. If no variable in $\mathit{Var}(\vec{x}'_2)$ is free, then let $y$ be the rightmost variable in $\mathit{Var}(\vec{x}'_2)$. Let $\varphi(y)=ba$, $\varphi(z)=aa$ for all $z\in[y]\cap(\mathit{Var}(\vec{x}'_1)\cup \mathit{Var}(\vec{x}'_2))$,  $\varphi(z)=aa$ for all $z\in[y]\cap \mathit{Var}(\vec{x}'_3)$, and $\varphi(z)=\varepsilon$ otherwise. It is not hard to see that $\varphi(p')\in\mathcal{S}_{\varepsilon, 2} (p', R')\setminus \varepsilon L_{len}(p,R)$.

        \item $n_1\ge 2$ and $n_2=1$. Now both statements follow from Lemma~\ref{lem:ex-|alphabet|=2-length-2}.
        
        \item $n_1=0$. In this case, we can assume wlog that $\vec{x}'_1=\varepsilon$, so that $p = \vec{x}_1 b^{n_2} \vec{x}_2$ and $p' =  \vec{x}'_2 b^{n_2}  \vec{x}'_3$. %To see statement 1 holds, suppose $\kappa$ is the number of groups in $p$ induced by $R$ and for each $i \in \{ 1, \dots, \kappa\}$ consider the substitution $\theta_i$ such that \( \theta_i(\vec{x}_1(i)) = a^{|\vec{x}_1(i)|} \), \( \theta_i(\vec{x}_2(i)) = a^{|\vec{x}_2(i)|} \) and for any $i' \neq i$ it holds \( \theta_i(\vec{x}_1(i')) = \theta_i(\vec{x}_2(i')) = \varepsilon \). Obviously, \( \theta_i(p) \in \mathcal{S}_{\varepsilon, 1} (p, R) \subseteq \mathcal{S}_{\varepsilon, 2} (p, R) \) for each $1 \leq i \leq n$. Now, the application of the proof of Lemma~\ref{lem:ex-|alphabet|=2-length-1} yields statement 1.
        Consequently, there is nothing to show.
    \end{itemize}
This covers all cases for $n_3=0$. By symmetry, one can handle the case $n_1=0$. Hence, from now on, we assume $n_1,n_3\ge 2$ and $n_2\ge 1$. We consider two cases:
\begin{itemize}
    \item $n_2=1$. Here both statements follow from Lemma~\ref{lem:ex-|alphabet|=2-length-2}.
    \item $n_2\ge 2$. Ten, to verify the claim, one can use similar substitutions as when verifying statement 2 in case $n_1\ge 2$, $n_2\ge 2$, and $n_3=0$.
\end{itemize}
\end{proof}

Finally, we present one more special case in which Theorem~\ref{thm:incongruous} remains valid even if the pattern $p$ contains terminal blocks of the form $a^n b$ or $ab^n$ where $a,b \in \Sigma$, $a \neq b$, and $n \geq 0$.

\medskip

\begin{lemma}\label{lem:ex-|alphabet|=2-length-4}
    Suppose 
    \begin{eqnarray*}
    p &=& \vec{x}_1 a^{n_1}b^{m_1} \vec{x}_2 a^{n_2}b^{m_2} \ldots \vec{x}_{t} a^{n_t}b^{m_t} \vec{x}_{t+1}\mbox{ and}\\
    p' &=& \vec{x}'_1 a^{n_1} \vec{x}'_2 b^{m_1}a^{n_2} \ldots \vec{x}'_{t} b^{m_{t-1}} a^{n_t} \vec{x}'_{t+1} b^{m_t} \vec{x}'_{t+2}\,,
    \end{eqnarray*}
    where $n_i, m_j \in \mathbb{N}$, $\vec{x}_k, \vec{x}'_{k'} \in X^+$ and $\vec{x}_1, \vec{x}'_1, \vec{x}_{t+1}, \vec{x}'_{t+2} \in X^\ast$, for all $i,j \in \{1, \ldots, t\}$, all $k\in \{2, \ldots, t\}$, and all $k'\in \{2, \ldots, t+1\}$. Suppose the two following conditions hold:
\begin{itemize}
    \item  for all $i\in\{2,\ldots,t\}$: $n_i=1$ or ($m_{i-1}=1$ and $m_i=1$).
    \item for all $i\in\{1,\ldots,t-1\}$: $m_i=1$ or ($n_{i}=1$ and $n_{i+1}=1$).
    \end{itemize}
   Then, for any two relations $R$ and $R'$ on $X$, $\mathcal{S}_{\varepsilon, 1} (p,R) \subseteq \varepsilon L_{len}(p',R')$ implies $\varepsilon L_{len}(p,R) \subseteq \varepsilon L_{len}(p',R')$.
\end{lemma}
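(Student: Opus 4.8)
The plan is to mirror the proof of Lemma~\ref{lem:ex-|alphabet|=2}, promoting the single-junction ``shift'' argument to $t$ simultaneous junctions. Since the hypothesis already supplies $\varepsilon L_{len}(p',R')\subseteq\varepsilon L_{len}(p,R)$, it suffices to derive the reverse inclusion $\varepsilon L_{len}(p,R)\subseteq\varepsilon L_{len}(p',R')$ from $\mathcal{S}_{\varepsilon,2}(p,R)\subseteq\varepsilon L_{len}(p',R')$. By Corollary~\ref{cor:nf} we may assume $(p,R)$ and $(p',R')$ are in equal-length normal form, which turns membership into a statement about block-length profiles: because $\Sigma=\{a,b\}$, a word lies in $\varepsilon L_{len}(p,R)$ exactly when the tuple recording the lengths of the maximal single-letter runs that fill the variable blocks $\vec{x}_1,\ldots,\vec{x}_{t+1}$ is a non-negative integer linear combination of the group decompositions $\langle d_1(x),\ldots,d_{t+1}(x)\rangle$ of $p$, where $d_j(x)=|[x]\cap\mathit{Var}(\vec{x}_j)|$; the analogous characterization holds for $p'$ over its $t+2$ variable blocks.

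The heart of the argument is the iterated analogue of the Claim established inside the proof of Lemma~\ref{lem:ex-|alphabet|=2-length-1}. For each group $[x]$ of $p$, and for every choice of how to split the contribution of $[x]$ at each of the $t$ junctions---that is, at junction $i$ a decomposition $\alpha_i+\beta_i$ of the $[x]$-mass in $\vec{x}_{i+1}$ into the part that stays to the left of the run $b^{m_i}$ after it is shifted rightward past $\vec{x}_{i+1}$ and the part that lands to its right---I claim that the resulting $(t+2)$-dimensional decomposition over the variable blocks of $p'$ is a non-negative integer linear combination of the group decompositions of $p'$. This is proved exactly as in the base lemmas: apply the $R$-substitution that erases every variable outside $[x]$ and sends the variables of $[x]$ to words of length at most $2$ realizing the prescribed split, obtaining a word in $\mathcal{S}_{\varepsilon,2}(p,R)\subseteq\varepsilon L_{len}(p',R')$, and then invert it through $p'$ to read off the linear-combination identity. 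The hypotheses on $n_i$ and $m_i$ are used here one junction at a time: the condition ``$m_i=1$ or ($n_i=n_{i+1}=1$)'' at a $b$-junction, and symmetrically ``$n_i=1$ or ($m_{i-1}=m_i=1$)'' at an $a$-junction, guarantees that the local terminal shape falls into one of the solvable configurations of Lemmas~\ref{lem:ex-|alphabet|=2-length-1} and \ref{lem:ex-|alphabet|=2-length-2}, so that the inverting substitution is forced to place each block boundary in a unique way. This is precisely what rules out the ``incomparable'' behaviour that arises otherwise (cf.\ the obstructing configurations in the proof of Lemma~\ref{lem:ex-|alphabet|=2}).

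With the Claim in hand, I would conclude as in Lemma~\ref{lem:ex-|alphabet|=2-length-1}: given $w\in\varepsilon L_{len}(p,R)$, write its $p$-block profile as $\sum_j\lambda_j\langle d_1(z_j),\ldots,d_{t+1}(z_j)\rangle$, then read off from $w$, at each junction $i$, how much of the filling sits immediately to the left versus the right of the shifted run $b^{m_i}$; this induces mutually compatible splits $\alpha_i,\beta_i$ of the contributions of the $z_j$. Replacing each $p$-decomposition by its $p'$-representation from the Claim and summing produces a non-negative integer combination of $p'$-group decompositions equal to the $p'$-block profile of $w$, so $w\in\varepsilon L_{len}(p',R')$. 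I expect the main obstacle to be the simultaneous bookkeeping of all $t$ shifts: one must choose the splits $\alpha_i$ at the several junctions so that they are both consistent with one another and realizable by the lengths occurring in a single decomposition of $w$ (the many-junction version of the index $\jmath$ and the coefficient split $\lambda_\jmath=(\lambda_\jmath-k)+1+(k-1)$ used in the proof of Lemma~\ref{lem:ex-|alphabet|=2-length-1}), together with checking that the per-junction case analysis behind the Claim really exhausts every configuration left open by the two hypotheses on the $n_i$ and $m_i$.
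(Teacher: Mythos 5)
The paper gives no proof of this lemma at all---it is explicitly omitted as ``analogous to that of Lemma~\ref{lem:ex-|alphabet|=2}''---and your proposal is exactly the natural iteration of that intended argument: reduce to one inclusion, promote the single-junction Claim from the proof of Lemma~\ref{lem:ex-|alphabet|=2-length-1} to all $t$ junctions using the hypotheses on the $n_i$ and $m_i$ to keep each junction in a solvable configuration, and finish with the same linear-combination bookkeeping. Your sketch is therefore consistent with the paper's approach; the simultaneous-split consistency issue you flag as the remaining obstacle is genuine, but it is precisely the detail the paper itself leaves unverified by omitting the proof.
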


%Note that the reverse of the above lemma also holds. \textcolor{red}{This is vague. What do you mean by ``reverse''? Usually, one would mean that if the consequences are true, so are the premises, i.e., you conclude the shape of p and p' from the last sentence of the lemma. But I don't think that this is what you mean. If this remark here is used later on, then we need to state it formally and explain how to prove it. If we do not need this remark later on, then please delete it.}
%Hence, For any $R$ and $R'$ on $X$, if $\mathcal{S}_{\varepsilon, 2} (p,R) \subseteq \varepsilon L_{len}(p',R') \subseteq \varepsilon L_{len}(p,R)$ implies $\varepsilon L_{len}(p,R) = \varepsilon L_{len}(p',R')$, then

%\begin{itemize}
    %\item $n_1 =1$ or $m_1 = 1$, and
    %\item For each $1 \leq  i \leq m$, $m_i = 1$ or  $n_{i+1}=1$.
%\end{itemize}

\begin{proof}

    %If $n_1 = n_2 = \dots = n_t = 1$ or $m_1 = m_2 = \dots = m_t = 1$ then iterative applications of Lemma~\ref{lem:ex-|alphabet|=2-length-1} completes the proof.

    Let $j\in\{1,\ldots,t+1\}$. If $x\in \mathit{Var}(p)$, then $d_j(x):=|[x]_R\cap \mathit{Var}(\vec{x}_j)|$ denotes the number of positions in $\vec{x}_j$ that correspond to variables in $[x]_R$. Now, for any $x\in \mathit{Var}(p)$, consider all $\ell_1$-substitutions $\theta$ valid for $R$ that erase variables outside $[x]$ and replace any occurrence of a variable from $[x]$ in $\vec{x}_j$: (i) with $b$, if $j=1$, (ii) with $a$, if $j=t+1$, and (iii) with a single letter so that $\theta(\vec{x}_j)$ is of the form $b^{\beta_j} a^{\alpha_j}$ for some $\beta_j,\alpha_j$ with $\beta_j+\alpha_j=d_j(x)$. These substitutions, applied to $p$, generate words of the form
    \begin{equation}\label{eqn:string}
        b^{d_1(x)} \ a^{n_1}b^{m_1}\ b^{\beta_2}a^{\alpha_2}\ a^{n_2}b^{m_2}\ b^{\beta_3}a^{\alpha_3}\ \dots\ b^{\beta_t}a^{\alpha_t}\ a^{n_t}b^{m_t}\ a^{d_{t+1}(x)}\,,
    \end{equation}
    with $\beta_j+\alpha_j=d_j(x)$. All of these strings belong to $\mathcal{S}_{\varepsilon, 2} (p,R)$, and thus, by the premise of the lemma, to $\varepsilon L_{len}(p',R')$. Therefore, for each word $w$ of the form (\ref{eqn:string}), there is an $R'$-substitution $\theta'$ such that $\theta'(p')=w$.

Now, we want to show that for every \( h \in \{ 1, \dots, t+2 \}\) it holds that
    \begin{equation}\label{eqn:str==}
        \begin{split}
            \theta'(\vec{x}'_1 a^{n_1} \vec{x}'_2 b^{m_1}a^{n_2} \vec{x}'_3 \dots a^{n_{h-2}}\vec{x}'_{h-1} b^{m_{h-2}})\\ 
            = b^{d_1(x)} \ a^{n_1}b^{m_1}\ b^{\beta_2}a^{\alpha_2}\ a^{n_2}b^{m_2}\ \dots\ b^{m_{h-2}} b^{\beta_{h-1}}
        \end{split}
    \end{equation}
    Assume that $\theta'(\vec{x}'_1)$ starts with $b^{d_1(x)} a$. Then, since $\theta'(\vec{x}'_1 a^{n_1} \vec{x}'_2) = \theta'(\vec{x}'_1) a^{n_1} \theta'(\vec{x}'_2)$, we obtain that $\theta'(\vec{x}'_1 a^{n_1} \vec{x}'_2)$ starts with $b^{d_1(x)}\ a^{n_1}b^{m_1}\ b^{\beta_2}a^{\alpha_2}\ a^{n_2}$. Iteratively, we can conclude that \( \theta'(\vec{x}'_1 a^{n_1} \vec{x}'_2 b^{m_1} a^{n_2} \vec{x}'_3 \dots \vec{x}'_{t} b^{m_{t-1}} a^{n_t} \vec{x}'_{t+1} ) \) will start with \( b^{d_1(x)}\ a^{n_1}b^{m_1}\ b^{\beta_2}a^{\alpha_2}\ a^{n_2} b^{m_2}\ b^{\beta_3}a^{\alpha_3}\ a^{n_3} b^{m_3}\ \dots\ b^{\beta_t}a^{\alpha_t}\ a^{n_t}b^{m_t}\ a^{d_{t+1}(x)} \). However, this yields 
    \[ \theta'(p') = \theta'(\vec{x}'_1 a^{n_1} \vec{x}'_2 b^{m_1} a^{n_2} \vec{x}'_3 \dots \vec{x}'_{t} b^{m_{t-1}} a^{n_t} \vec{x}'_{t+1} b^{m_t} \vec{x}'_{t+2} ) \neq \theta(p) \] which is a contradiction. Therefore, $\theta'(\vec{x}'_1)$ is a prefix of $b^{d_1(x)}$.  Iteratively, we conclude that for every \( h \in \{ 1, \dots, t+1 \} \)    
    %\begin{equation}\label{eqn:strCntnRight}
    %\theta'( b^{m_{h-2}} a^{n_{h-1}} \vec{x}'_h ) \sqsubseteq  b^{m_{h-2}}\ \underbrace{b^{\beta_{h-1}}a^{\alpha_{h-1}}}_{=\theta(\vec{x}_{h-1})}\ a^{n_{h-1}} b^{m_{h-1}}\ b^{\beta_h}
    %\end{equation}}
    \begin{equation}\label{eqn:strCntnRight}
    \begin{split}
    \theta'(\vec{x}'_1\ a^{n_1}\ \vec{x}'_2\ b^{m_1} a^{n_2}\ \vec{x_3} \dots b^{m_{h-2}}a^{n_{h-1}}\ \vec{x}'_h)
    \mbox{ is a prefix of } \\ b^{d_1(x)} \ a^{n_1}b^{m_1}\ b^{\beta_2}a^{\alpha_2}\ a^{n_2}b^{m_2}\ b^{\beta_3}a^{\alpha_3}\ \dots\ b^{\beta_{h-1}}a^{\alpha_{h-1}}\ a^{n_{h-1}}b^{m_{h-1}} b^{\beta_h}
    \end{split}
    \end{equation}
    %\textcolor{red}{This is not a useful statement, since the righthandside of the above formula is a string that might occur multiple times in $\theta(p)$. You have to make sure that it is clear you are talking about particular positions in $\theta(p)$. The same applies to all statements below.} \textcolor{green}{SM: Fixed}
    
   % \textcolor{violet}{SZ: This makes no sense at all to me. $\theta'( b^{m_{h-2}} a^{n_{h-1}} \vec{x}'_h )$ should clearly start with $b^{m_{h-2}} a^{n_{h-1}} $.}
    Similarly, we can conclude that for every \( h \in \{ 2, \dots, t+2 \} \) 
    \begin{equation}\label{eqn:strCntnLeft}
    \begin{split}
    \theta'(\vec{x}'_h b^{m_{h-1}}a^{n_h} \ldots \vec{x}'_{t} b^{m_{t-1}} a^{n_t} \vec{x}'_{t+1} b^{m_t} \vec{x}'_{t+2}) \mbox{ is a suffix of} \\
    a^{\alpha_{h-1}}\ a^{n_{h-1}} b^{m_{h-1}} b^{\beta_h}a^{\alpha_h} a^{n_h}  \dots\ b^{\beta_t}a^{\alpha_t}\ a^{n_t}b^{m_t}\ a^{d_{t+1}(x)}\,,
    %\theta'( \vec{x}'_h b^{m_{h-1}} a^{n_h}) \sqsubseteq  a^{\alpha_{h-1}}\ a^{n_{h-1}} b^{m_{h-1}}\ \underbrace{b^{\beta_h}a^{\alpha_h}}_{=\theta(\vec{x}_h)}\ a^{n_h} 
    \end{split}
    \end{equation}
    %for every \( h \in \{ 2, \dots, t+1 \} \): }
    %\textcolor{olive}{\begin{equation}\label{eqn:strCntn}
    %\theta'( \vec{x}'_h ) \sqsubseteq  \underbrace{a^{\alpha_{h-1}}}_{=\theta(\vec{x}_{h-1})[\beta_{h-1}+1:d_{h-1}(x)]}\ a^{n_{h-1}}b^{m_{h-1}}\ \underbrace{b^{\beta_h}}_{=\theta(\vec{x}_h)[1:\beta_h]} 
    %\end{equation}}
    %\textcolor{red}{The way it is written, the above formulas don't look like anything in the premises of Lemma 6. We have to be more clear. Do you want to talk about subpatterns of $p$ and $p'$?} \textcolor{green}{SM: Yes. How about it now $\rightarrow$} 
     %\textcolor{red}{Thus, for each $h \in \{ 1, \dots, t+1 \} $, $\theta'(\vec{x}'_h)$ is a substring of $\theta(\vec{x}_{h-1})\ a^{n_{h-1}}\ b^{m_{h-1}}\ \theta(\vec{x}_h)$}. 
    This implies, for every $h\in\{2,\ldots,t+1\}$,
    \begin{equation}\label{eqn:str=}
    a^{n_{h-1}}\theta'(\vec{x}'_h) b^{m_{h-1}} = \underbrace{\theta(\vec{x}_{h-1})[\beta_{h-1}+1:d_{h-1}(x)]}_{=a^{\alpha_{h-1}}}\ a^{n_{h-1}}b^{m_{h-1}}\ \underbrace{\theta(\vec{x}_h)[1:\beta_h]}_{=b^{\beta_h}}\,,
    \end{equation}
    where $\alpha_1=0$.
    Since $n_{h} = 1$ or $m_{h} = 1$ for $h \in \{ 1, \dots, t \}$, using Equation \ref{eqn:str=} and a similar argument as in the proof of Lemma~\ref{lem:ex-|alphabet|=2-length-1} and Lemma~\ref{lem:ex-|alphabet|=2-length-2}, one can conclude that $\mathcal{S}_{\varepsilon, 1} (p,R) \subseteq \varepsilon L_{len}(p',R')$ implies $\varepsilon L_{len}(p,R) \subseteq \varepsilon L_{len}(p',R')$. %Please note that as the premise of the lemma implies, the case where $n_j, n_{j+1}, m_j \geq 2$ or $m_j, m_{j+1}, n_{j+1} \geq 2$ never happens and is excluded from the lemma, so we do not need to use Lemma~\ref{lem:ex-|alphabet|=2} as subroutine. 
\end{proof}

\medskip

Lemma~\ref{lem:ex-|alphabet|=2-length-4} motivates the following definition, which will be crucial for the proof of Theorem~\ref{thm:incongruous}. 

\medskip

\begin{definition}\label{def:conjugate-pat}
  Let $t\ge 1$. Let
    \[ \pi = a^{n_1}b^{m_1} \vec{x}_1 a^{n_2}b^{m_2} \ldots \vec{x}_{t-1} a^{n_t}b^{m_t}\mbox{ and }\pi' = a^{n_1} \vec{x}'_1 b^{m_1}a^{n_2} \ldots \vec{x}'_{t-1} b^{m_{t-1}} a^{n_t} \vec{x}'_{t} b^{m_t} \,,\]
    or 
    \[ t>1 \mbox{ and }\pi = a^{n_1}b^{m_1} \vec{x}_1 a^{n_2}b^{m_2} \ldots \vec{x}_{t-1} a^{n_t}\mbox{ and }\pi' = a^{n_1} \vec{x}'_1 b^{m_1}a^{n_2} \ldots \vec{x}'_{t-1} b^{m_{t-1}} a^{n_t}   \,,\]
where $n_i, m_j \in \mathbb{N} \setminus \{0\}$,   $\vec{x}_k, \vec{x}'_{k'} \in X^+$, for all $i,j,k' \in \{1, \ldots, t\}$, and all $k\in \{1, \ldots, t-1\}$. 

    Then $\pi$ and $\pi'$ are called telltale conjugates
    provided that  %($n_1 =1$ or $m_1 = 1$), ($n_t =1$ or $m_t = 1$), and ($m_i = 1$ or  $n_{i+1}=1$) for each $1 \leq  i \leq (t-1)$. 
    the following conditions hold:
\begin{itemize}
    \item  for all $i\in\{2,\ldots,t\}$: $n_i=1$ or ($m_{i-1}=1$ and ($m_i=1$ if $m_i$ exists\footnote{Note that $m_t$ might not exist if $t>1$.})).
    \item for all $i\in\{1,\ldots,t-1\}$: $m_i=1$ or ($n_{i}=1$ and $n_{i+1}=1$).
    \item if $t=1$, then $n_1=1$ or $m_1=1$.
\end{itemize}
This definition is symmetric in the sense that $\pi$ and $\pi'$ are telltale conjugates if and only if $\pi'$ and $\pi$ are telltale conjugates; moreover, the roles of $a$ and $b$ are interchangeable.
\end{definition}

%\begin{definition}\label{def:conjugate-pat}
 % Let $\sigma,\bar{\sigma}\in\Sigma$, $\sigma\ne\bar{\sigma}$. Let $(n,m)\in(\mathbb{N}\setminus\{0\})^2$ such that $n=1$ or $m=1$. Let $\vec{x},\vec{x}'\in X^+$. Then the two strings $\sigma^n\bar{\sigma}^m$ and $\sigma^n\vec{x}\bar{\sigma}^m$ form a pair of telltale conjugates. Moreover, the two strings $\sigma^n\bar{\sigma}\vec{x}\sigma^m$ and $\sigma^n\vec{x}'\bar{\sigma}\sigma^m$ form a pair of telltale conjugates. No other pairs of strings are called telltale conjugates.

 % \textcolor{red}{SM: This makes more sense, but what about $\sigma \sigma^n \vec{x} \sigma$ and $\sigma \vec{x} \sigma^n \sigma$. This might be all. However, mayber it safer to keep it as the original definition as not missing anything. We can discuss tomorrow.}
%\end{definition}

\medskip

The main technical ingredient of the proof of Theorem~\ref{thm:incongruous} is the following lemma, which can be proven with a lengthy case analysis. The proof is provided in the appendix.%appendix

\medskip

\begin{lemma}\label{lem:ex-|alphabet|=2-length-5}
    Let $(p,R)$ and $(p',R')$ be arbitrary patterns over $\Sigma = \{a,b\}$, where $p$ and $p'$ are incongruous. Suppose there do not exist telltale conjugates $\pi$, $\pi'$ such that 
    \begin{itemize}
        \item $\pi$ is a substring of $p$ that is not preceded or succeeded by terminal symbols in $p$, and
        \item $\pi'$ is a substring of $ p'$  that is not preceded or succeeded by terminal symbols in $p'$.
    \end{itemize}  
    Let  $\mathcal{S}_{\varepsilon,2}(p,R) \subseteq \varepsilon L_{len}(p',R') \subseteq \varepsilon L_{len}(p,R)$. Then $\varepsilon L_{len}(p,R)= \varepsilon L_{len}(p',R')$. 
\end{lemma}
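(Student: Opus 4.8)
The plan is to reduce the general incongruous situation to the explicit inclusion lemmas already established, namely Lemmas~\ref{lem:ex-|alphabet|=2-length-1}, \ref{lem:ex-|alphabet|=2-length-2}, and \ref{lem:ex-|alphabet|=2}, by localizing the disagreement between $p$ and $p'$ and using the hypothesis that no telltale conjugate substrings occur to eliminate the one family of local shapes those lemmas do not cover. Write $p=\vec{x}_1\omega_1\cdots\omega_n\vec{x}_{n+1}$ and $p'=\vec{x}'_1\omega'_1\cdots\omega'_{n'}\vec{x}'_{n'+1}$, both in equal-length normal form via Corollary~\ref{cor:nf}. A first goal is to reconcile the terminal backbones, i.e.\ to show that the concatenations coincide, $\omega_1\cdots\omega_n=\omega'_1\cdots\omega'_{n'}=:W$: the unique shortest word of $\varepsilon L_{len}(p,R)$ is the all-erased word $\omega_1\cdots\omega_n$, and likewise for $p'$, so comparing minimal-length words across the given containments (and detecting any genuine mismatch with probe substitutions) forces the two backbones to agree as binary strings. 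Consequently both patterns are obtained by inserting variable blocks at ``cut points'' of the common backbone $W$, and incongruity means precisely that the two families of cut points differ.

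Next I would localize the leftmost disagreement. Scanning $W$ from the left, let the first cut point at which $p$ and $p'$ differ fall inside (or at the boundary of) a maximal run $\sigma^m$ of a single letter $\sigma\in\{a,b\}$. Using Lemma~\ref{swap-variables_length} together with the normal form, the local picture is that one pattern keeps this run intact (placing a variable block immediately before or after it) while the other splits it or moves the block across the adjacent run. The admissible local shapes are exactly: a single terminal shifted past a variable block (the shape of Lemma~\ref{lem:ex-|alphabet|=2-length-1}); a boundary letter of a run shifted (the shape of Lemma~\ref{lem:ex-|alphabet|=2-length-2}); and the ``$a^{n_1}b^{n_2}\cdots a^{n_3}$ versus $a^{n_1}\cdots b^{n_2}a^{n_3}$'' shift (the shape of Lemma~\ref{lem:ex-|alphabet|=2}). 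Any remaining local configuration in which an entire interior run is split while a variable block is inserted into it is, by Definition~\ref{def:conjugate-pat}, a telltale conjugate, and is therefore excluded by hypothesis.

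To confirm that the sandwich actually places us in one of these three lemma-shapes, I would use probing substitutions. Setting a single group to length $1$ (resp.\ length $2$) and erasing all other variables produces words in $\mathcal{S}_{\varepsilon,1}(p,R)$ (resp.\ $\mathcal{S}_{\varepsilon,2}(p,R)$); their membership in $\varepsilon L_{len}(p',R')$, together with the symmetric probes living in $\mathcal{S}_{\varepsilon,2}(p',R')\subseteq\varepsilon L_{len}(p',R')\subseteq\varepsilon L_{len}(p,R)$, translate into the linear-combination constraints on group decompositions that are precisely the hypotheses of Lemmas~\ref{lem:ex-|alphabet|=2-length-1}, \ref{lem:ex-|alphabet|=2-length-2}, and \ref{lem:ex-|alphabet|=2}. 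Length-$2$ probes are needed exactly to detect configurations in which a variable block is forced to sit inside a run. Applying the relevant lemma yields local inclusion, and propagating it through the normal form gives $\varepsilon L_{len}(p,R)\subseteq\varepsilon L_{len}(p',R')$; combined with the given inclusion $\varepsilon L_{len}(p',R')\subseteq\varepsilon L_{len}(p,R)$, this produces the desired equality, which is what Theorem~\ref{thm:incongruous} needs in this regime.

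The hard part will be the exhaustiveness and bookkeeping of this case analysis. Over a binary alphabet, terminal runs can merge and shift in many ways, and one must simultaneously track which runs of $a$ and of $b$ are split by variable blocks, the group-decomposition and linear-combination constraints that determine which lengths are realizable, and the boundary effects at the start and end of each pattern (the special, possibly empty, leading and trailing variable blocks $\vec{x}_1,\vec{x}_{n+1}$). The crux is to prove that \emph{every} local configuration compatible with both containments is either one of the three lemma-shapes or a genuine telltale conjugate; the latter being ruled out by hypothesis is exactly what renders the remaining cases reducible to the explicit lemmas. This is why the complete argument is lengthy and deferred to the appendix.
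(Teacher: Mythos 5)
There is a genuine gap, in two places. First, your reduction to Lemmas~\ref{lem:ex-|alphabet|=2-length-1}, \ref{lem:ex-|alphabet|=2-length-2} and \ref{lem:ex-|alphabet|=2} does not go through as described: those lemmas are statements about \emph{entire} patterns with two or three terminal blocks, not about local windows of longer patterns. Because a group $[x]$ may have members in variable blocks far from the disagreement window, and a valid $R$-substitution must assign the same length to all of them, the behaviour inside the window is coupled to the rest of the pattern; there is no mechanism by which a ``local inclusion'' of window-languages can be ``propagated through the normal form'' to an inclusion of the full languages, and you do not supply one. Second, the trichotomy on which your plan rests --- every local configuration compatible with the sandwich is either a lemma-shape or a telltale conjugate --- is asserted rather than proved, and it is the entire content of the lemma. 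It is also not accurate as stated: most disagreement configurations (for instance $\omega'_i=a^nb^m$ with $n,m\ge 2$ while $\omega_i=a^n$) are neither of lemma-shape nor telltale conjugates, and each must be eliminated by an explicit probe substitution; moreover, in some configurations the probe does not eliminate the disagreement but merely pushes it one terminal block to the right, so the argument must iterate. Your sketch contains no such iteration step.

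For contrast, the paper's proof is not a reduction to the earlier lemmas at all. It fixes the leftmost index $i$ with $\omega_i\ne\omega'_i$ and runs a self-contained case analysis on the shapes of $\omega_i,\omega_{i+1},\omega'_i,\omega'_{i+1}$, in each case exhibiting an $\ell_1$- or $\ell_2$-substitution (concentrated on a single group, all other variables erased) whose image lies in $\mathcal{S}_{\varepsilon,2}(p,R)\setminus\varepsilon L_{len}(p',R')$ or in $\varepsilon L_{len}(p',R')\setminus\varepsilon L_{len}(p,R)$, unless the configuration is a telltale conjugate (excluded by hypothesis) or the disagreement shifts rightward (in which case the argument recurses on truncated patterns). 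The upshot is that no disagreement in the terminal structure can survive the sandwich, which is a structurally different conclusion from your ``classify and apply a local lemma'' route. To salvage your plan you would need to (i) actually prove the exhaustive classification, (ii) add the rightward-iteration mechanism, and (iii) either reprove the three lemmas for windows embedded in arbitrary patterns or abandon the reduction in favour of direct probes on the full patterns --- at which point you would essentially have reconstructed the paper's argument.
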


\medskip

Now, we can put everything together to prove Theorem \ref{thm:incongruous}, which states the following:

\begin{quote}
    Let $|\Sigma|=2$ and let $(p,R)$, $(p', R')$ be any two relational patterns, where $p$ and $p'$ are incongruous. Suppose $p$ has no terminal block of the form $a^n b$ or $ab^n$ where $a,b \in \Sigma$, $a \neq b$, and $n \geq 0$.
     If $\mathcal{S}_{\varepsilon,2}(p,R)\subseteq\varepsilon L_{len}(p',R')\subseteq\varepsilon L_{len}(p,R)$, then $\varepsilon L_{len}(p,R)=\varepsilon L_{len}(p',R')$.
\end{quote}

%Let $|\Sigma| \geq 2$ and let $(p,R)$, $(p',R')$ be any two relational patterns, where \( (p, R) \in \mathcal{P}_{2,3}\). In addition, there do not exists telltale conjugates $\pi$, $\pi'$ such that $\pi$ or $\pi'$ is a substring of $p$ that is not preceded or succeeded by terminal symbols in $p$. Then, $\mathcal{S}_{\varepsilon,2}(p,R) \subseteq \varepsilon L_{len}(p',R') \subseteq \varepsilon L_{len}(p,R)$ implies $L_{len}(p,R)=L_{len}(p',R')$. 

\begin{proof}
    By Lemma~\ref{lem:ex-|alphabet|=2-length-5}, it suffices to prove that $p$ and $p'$ do not have telltale conjugate subpatterns that are not preceded or succeeded by terminal symbols.

    If $p$ and $p'$ did have such subpatterns, then at least one of the following two conditions would have to hold, by Definition~\ref{def:conjugate-pat}. 
    \begin{enumerate}
        \item $p$ has a terminal block of the form $\sigma^n\overline{\sigma}$ or $\sigma\overline{\sigma}^n$, where $n\ge 1$ and $\{\sigma,\overline{\sigma}\}=\{a,b\}=\Sigma$.
        \item $p$ is of the form $\vec{x_1}\sigma^{n_1}\vec{x_2}\overline{\sigma}^{m_1}\vec{x_3}$, where $n_1=1$ or $m_1=1$, and $\{\sigma,\overline{\sigma}\}=\{a,b\}=\Sigma$.
    \end{enumerate}
    Each of these conditions is impossible, due to the premise of the theorem. Note that the second condition violates the premise with $n=0$.

    This completes the proof.
\end{proof}

\subsubsection{The Case of Congruous Patterns}

This subsection deals with the proof of Theorem~\ref{thm:congruous}, that is, its goal is to establish the following.
\begin{quote}
    Let $|\Sigma|=2$ and let $(p,R)$, $(p',R')$ be any two relational patterns, where $p$ and $p'$ are congruous and $(p,R)\in\mathcal{P}_{2,3}$. If $\mathcal{S}_{\varepsilon,1}(p,R)\subseteq\varepsilon L_{len}(p',R')$, then $\varepsilon L_{len}(p,R)\subseteq\varepsilon L_{len}(p',R')$.
\end{quote}

Note that this statement is stronger than the claim that linear-size positive characteristic sets exist for a specific class of relational patterns. In particular, it implies that one can efficiently (in polynomial time) test the inclusion of two equal-length relational pattern languages, if the underlying patterns are congruous and have variable blocks intersecting each group at least twice, as well as terminal blocks of length at least 3. Only $\ell_1$-substitutions need to be invoked for this test. The downside of obtaining this stronger statement is that it is limited to a special sub-class of patterns.

\begin{proof}
Suppose that $\kappa$ (resp.\ $\kappa'$) is the number of groups in $p$ (resp.\ $p'$), so that in normal-form representation of $(p,R)$ and $(p',R')$ (see Definition~\ref{equal-length normal form}), we have  
    \[p = \vec{x}_1(1) \dots \vec{x}_1(\kappa) \omega_1\vec{x}_2(1) \dots \vec{x}_2(\kappa) \omega_2  \dots \omega_{n}\vec{x}_{n+1}(1) \dots \vec{x}_{n+1}(\kappa)\,,\]
    \[p' = \vec{x}'_1(1) \dots \vec{x}'_1(\kappa') \omega_1\vec{x}'_2(1) \dots \vec{x}'_2(\kappa') \omega_2  \dots \omega_{n}\vec{x}'_{n+1}(1) \dots \vec{x}'_{n+1}(\kappa')\,.\]
    For $1\le j\le n+1$, the notation $\vec{x}_j$ ($\vec{x}'_j$, resp.) refers to the variable block $\vec{x}_j(1) \dots \vec{x}_j(\kappa)$ ($\vec{x}'_j(1) \dots \vec{x}'_j(\kappa')$, resp.). By premise, we know that $|\vec{x}_j(i)|\ge 2$ for each $j\in\{2,\ldots,n\}$ and each $i\in\{1,\ldots,\kappa\}$.
    
    Let $y_1$, \dots, $y_\kappa$ be variables in $p$ such that $[y_1]$, \dots, $[y_\kappa]$ are the distinct variable groups in $(p,R)$.

We first define a property of substitutions that will be crucial for the purpose of this proof. For $i\in\{1,\ldots,\kappa\}$, an $R$-substitution $\theta_i$ is called \emph{$(p,i)$-unambiguous}\/ if the following conditions are satisfied.
\begin{enumerate}
\item $\theta_i(x)=\varepsilon$ if $x\notin[y_i]$, and $|\theta_i(x)|=1$ if $x\in [y_i]$. In particular, $\theta_i$ is an $\ell_1$-substitution with 
\[
\theta_i(p)=\ s_{i,1}\,\omega_1 s_{i,2}\,\omega_2\ldots s_{i,n}\,\omega_n s_{i,n+1}\,,
\]
for some words $s_{i,j}=\theta_i(\vec{x}_j)=\theta_i(\vec{x}_j(i))$, where $|s_{i,j}|=|\vec{x}_j(i)|$.
\item For $j\in \{1,\ldots,n\}$, the words $s_{i,j}$ obey the following constraints:
\begin{enumerate}
\item $\omega_j$ is not a substring of $s_{i,j}\circ \omega_j[1:|\omega_j|-1]$.
\item $\omega_j$ is not a substring of $\omega_j[2:|\omega_j|]\circ s_{i,j+1}$.
%  \item $\omega_j\not\sqsubseteq s_{i,j}$ and $\omega_j\not\sqsubseteq s_{i,j+1}$.
%   \item $\omega_j$ is not \emph{both}\/ a substring of $s_{j,i}\circ \omega_j[1:|\omega_j|-1]$ \emph{and}\/ a substring of $\omega_j[2:|\omega_j|]\circ s_{j+1,i}$.
\end{enumerate}
\end{enumerate}

The proof is organized along the following three claims.

\noindent\emph{Claim 1.} If $\theta_i$ is $(p,i)$-unambiguous, then there is an $R'$-substitution $\theta'_i$ such that $\theta_i(p)=\theta'_i(p')$. Moreover, \emph{every}\/ such substitution $\theta'_i$ satisfies
 \begin{equation}\label{eqn:theta'theta}
\theta'_i(\vec{x}'_j)=\theta_i(\vec{x}_j)=\theta_i(\vec{x}_j(i))\mbox{ for all }j\in\{1,\ldots,n+1\}\,.
   \end{equation}

   \medskip
   \noindent\emph{Claim 2.} Let $\theta_i$ be $(p,i)$-unambiguous, and let $\theta'_i$ be an $R'$-substitution with $\theta_i(p)=\theta'_i(p')$. Then the validity of Equation~(\ref{eqn:theta'theta}) for all $i\in\{1,\ldots,\kappa\}$ implies that $\varepsilon L_{len}(p,R) \subseteq \varepsilon L_{len}(p',R')$.

   \medskip

   \noindent\emph{Claim 3.} For each $i\in\{1,\ldots,\kappa\}$, a $(p,i)$-unambiguous $R$-substitution exists.

   \medskip

   Clearly, proving these three claims completes the proof of Theorem~\ref{thm:congruous}. In the proof of these claims, we use $\sqsubseteq$ to denote the substring relation, and $\sqsubset$ to denote the proper substring relation.

   \medskip

   \noindent\emph{Proof of Claim 1.} 
Since $\mathcal{S}_{\varepsilon,1}(p,R) \subseteq \varepsilon L_{len}(p',R')$, we obtain $\theta_i(p)\in \varepsilon L_{len}(p',R')$. In particular, there exists an $R'$-substitution $\theta'_i$ with  $\theta'_i(p')=\theta_i(p)$. Now let $\theta'_i$ be any such substitution. Fix $i\in\{1,\ldots,\kappa\}$. By definition, $\theta_i(\vec{x}_j)=\theta_i(\vec{x}_j(i))$. To verify that Equation~(\ref{eqn:theta'theta}) holds, suppose the opposite. Since $\theta'_i(p')=\theta_i(p)$, this is equivalent to saying that there is some $j\in\{1,\ldots,n\}$ with $\theta'_i(\vec{x}'_1\omega_1\ldots\omega_{j-1}\vec{x}'_j)\ne\theta_i(\vec{x}_1\omega_1\ldots\omega_{j-1}\vec{x}_j)$.

Suppose 
\begin{equation}\label{eqn:induction}
\theta'_i(\vec{x}'_1\omega_1\ldots\omega_{j-1}\vec{x}'_j)\sqsupset\theta_i(\vec{x}_1\omega_1\ldots\omega_{j-1}\vec{x}_j)\,.
\end{equation}
We will show that Inequality~(\ref{eqn:induction}) implies 
\begin{equation}\label{eqn:induction2}
\theta'_i(\vec{x}'_1\omega_1\ldots\omega_{j}\vec{x}'_{j+1})\sqsupset\theta_i(\vec{x}_1\omega_1\ldots\omega_{j}\vec{x}_{j+1})\,.
\end{equation}
Consider two cases:

\medskip

\noindent\emph{Case 1.} $\theta'_i(\vec{x}'_1\omega_1\ldots\omega_{j-1}\vec{x}'_j)\sqsupset\theta_i(\vec{x}_1\omega_1\ldots\omega_{j-1}\vec{x}_j\omega_j)$.

Since $\theta_i$ is $(p,i)$-unambiguous, we know that $\omega_j\not\sqsubseteq s_{i,j+1}=\theta_i(\vec{x}_{j+1})$. This implies 
\[\theta'_i(\vec{x}'_1\omega_1\ldots\omega_{j-1}\vec{x}'_j\omega_j)\sqsupset\theta_i(\vec{x}_1\omega_1\ldots\omega_{j-1}\vec{x}_j\omega_j\vec{x}_{j+1})\,,\]
and hence 
\[\theta'_i(\vec{x}'_1\omega_1\ldots\omega_{j-1}\vec{x}'_j\omega_j\vec{x}'_{j+1})\sqsupset\theta_i(\vec{x}_1\omega_1\ldots\omega_{j-1}\vec{x}_j\omega_j\vec{x}_{j+1})\,,\]
as desired, i.e., we obtain Inequality~(\ref{eqn:induction2}).

\medskip

\noindent\emph{Case 2.} Not Case 1. Thus we get 
\begin{eqnarray*}
    \theta'_i(\vec{x}'_1\omega_1\ldots\omega_{j-1}\vec{x}'_j)&\sqsubseteq&\theta_i(\vec{x}_1\omega_1\ldots\omega_{j-1}\vec{x}_j\omega_j)\mbox{ and}\\
    \theta'_i(\vec{x}'_1\omega_1\ldots\omega_{j-1}\vec{x}'_j)&\sqsupset&\theta_i(\vec{x}_1\omega_1\ldots\omega_{j-1}\vec{x}_j)\,.
\end{eqnarray*}
The latter implies
\[
\theta'_i(\vec{x}'_1\omega_1\ldots\omega_{j-1}\vec{x}'_j\omega_j)\sqsupset\theta_i(\vec{x}_1\omega_1\ldots\omega_{j-1}\vec{x}_j\omega_j)\,.
\]
%This gives us $\omega_j\sqsubseteq\theta'_i(\vec{x}'_j)\circ\omega_j[1:|\omega_j|-1]$.

If Inequality~(\ref{eqn:induction2}) is not satisfied, we get 
\[\theta'_i(\vec{x}'_1\omega_1\ldots\omega_{j-1}\vec{x}'_j\omega_j\vec{x}'_{j+1})\sqsubseteq\theta_i(\vec{x}_1\omega_1\ldots\omega_{j-1}\vec{x}_j\omega_j\vec{x}_{j+1})\,,\]
which implies $\theta'_i(\vec{x}'_{j+1})\sqsubset\theta_i(\vec{x}_{j+1})=s_{i,j+1}$ and thus $\omega_j\sqsubseteq\omega_j[2:|\omega_j|]\circ s_{i,j+1}$. This is impossible since $\theta_i$ is $(p,i)$-unambiguous. Hence Inequality~(\ref{eqn:induction2}) is satisfied. 

\medskip

Inductively, we thus obtain 
\begin{equation*}
\theta'_i(\vec{x}'_1\omega_1\ldots\omega_{k-1}\vec{x}'_{k})\sqsupset\theta_i(\vec{x}_1\omega_1\ldots\omega_{k-1}\vec{x}_{k})\mbox{ for all }k\ge j\,,
\end{equation*}
which contradicts $\theta'_i(p')=\theta_i(p)$.   

By a completely symmetric argument, if we reverse the substring relation in Inequality~(\ref{eqn:induction}) and assume 
\begin{equation*}
\theta'_i(\vec{x}'_1\omega_1\ldots\omega_{j-1}\vec{x}'_j)\sqsubset\theta_i(\vec{x}_1\omega_1\ldots\omega_{j-1}\vec{x}_j)\,,
\end{equation*}
we will obtain 
\begin{equation*}
\theta'_i(\vec{x}'_1\omega_1\ldots\omega_{k-1}\vec{x}'_{k})\sqsubset\theta_i(\vec{x}_1\omega_1\ldots\omega_{k-1}\vec{x}_{k})\mbox{ for all }k\ge j\,,
\end{equation*}
which again contradicts $\theta'_i(p')=\theta_i(p)$.

Thus, for all $j$,
\begin{equation*}
\theta'_i(\vec{x}'_1\omega_1\ldots\omega_{j-1}\vec{x}'_j)=\theta_i(\vec{x}_1\omega_1\ldots\omega_{j-1}\vec{x}_j)\,,
\end{equation*}
which finally proves Claim 1.
\hfill $\Box$ \emph{(Claim 1.)}

   \medskip

   \noindent\emph{Proof of Claim 2.} 
 Equation~(\ref{eqn:theta'theta}) implies that there exists a vector $(t^i_1,\ldots,t^i_{\kappa'})\in\mathbb{N}^{\kappa'}$ such that  
   \begin{equation}\label{eqn:lincomb}
    |\vec{x}_j(i)| = \sum_{\eta =1}^{\kappa'} t^i_\eta |\vec{x}'_j(\eta)|\mbox{ for all }j\in\{1,\ldots,n+1\}\,,    
   \end{equation}
    where $(t^i_1,\ldots,t^i_{\kappa'})\in\mathbb{N}^{\kappa'}$ is independent of $j$. This is true because $|\theta_i(x)|=1$ iff $x\in [y_i]$.

    To obtain $\varepsilon L_{len}(p,R)\subseteq \varepsilon L_{len}(p',R')$, consider any word $s\in \varepsilon L_{len}(p,R)$, generated by an $R$-substitution $\theta$ with $\theta(p)=s$. Our goal is to define an $R'$-substitution $\theta'$ that generates $s$ from $p'$. We will define $\theta'$ such that 
    \[
\theta'(\vec{x}'_j)=\theta'(\vec{x}'_j(1)\ldots\vec{x}'_j(\kappa'))=\theta(\vec{x}_j(1)\ldots\vec{x}_j(\kappa))=\theta(\vec{x}_j)\mbox{ for all }j\,,
\]
    which immediately yields $\theta'(p')=\theta(p)=s$.     We only need to verify that $\theta'$ can be defined this way without violating the relations in $R'$. The latter can be seen as follows.
    
    First note that 
    \[
|\theta(\vec{x}_j)|=\sum_{i=1}^{\kappa}|\theta(\vec{x}_j(i))|=\sum_{i=1}^{\kappa}r^i|\vec{x}_j(i)|=\sum_{i=1}^{\kappa} r^i \sum_{\eta =1}^{\kappa'}  t^i_\eta |\vec{x}'_j(\eta)|=\sum_{\eta =1}^{\kappa'} \sum_{i=1}^{\kappa} r^i  t^i_\eta |\vec{x}'_j(\eta)|\,,
    \]
    where $r^i=|\theta(y_i)|$.

    Let $y'_1$, \dots, $y'_{\kappa'}$ be variables in $p'$ such that $[y'_1]$, \dots, $[y'_{\kappa'}]$ are the distinct variable groups in $(p',R')$. Fix $i\le\kappa$. For any $\eta\le\kappa'$ and all $y\in [y'_\eta]$, we define the length of the desired word $\theta'(y)$ to be equal to $\sum_{i=1}^{\kappa} r^i  t^i_\eta$. Thus
    \[
|\theta'(\vec{x}'_j)|=\sum_{\eta=1}^{\kappa'}|\theta'(\vec{x}'_j(\eta))|=\sum_{\eta =1}^{\kappa'} \sum_{i=1}^{\kappa} r^i  t^i_\eta |\vec{x}'_j(\eta)|\,.
    \]
    Any substitution $\theta'$ satisfying all these length constraints is valid for $R'$. Moreover, since $|\theta'(\vec{x}'_j)|=|\theta(\vec{x}_j)|$ for all $j$, one can obviously define $\theta'$ in a way that $\theta'(\vec{x}'_j)=\theta(\vec{x}_j)$ for all $j$, without violating the relations in $R'$.
   \hfill $\Box$ \emph{(Claim 2.)}

   \medskip

   \noindent\emph{Proof of Claim 3.} Fix $i\in\{1,\ldots,\kappa\}$ and consider an $\ell_1$-substitution $\theta_i$ valid for $(p,R)$ defined as follows. 
\begin{align*}
    \theta_i (x)& =\varepsilon\mbox{ if } x \not\in [x_i]\,,\\
    \theta_i (\vec{x}_j(i)) &=s_{i,j}\,,
\end{align*}
where $s_{i,1}=\sigma^{|\vec{x}_{1}(i)|}$ for $\sigma\ne\omega_1[1]$, $s_{i,n+1}=\sigma^{|\vec{x}_{n+1}(i)|}$ for $\sigma\ne\omega_n[|\omega_n|]$, and, for $j\in\{2,\ldots,n\}$, the word $s_{i,j}$ is a string of length $|\vec{x}_j(i)|$ defined by
    \[
        s_{i,j} =\begin{cases}
       % \sigma^{|\vec{x}_{j}(i)|}, & a, b \sqsubset \omega_{j-1}, \text{ and } \sigma \neq \omega_j[1];\\
                    
        (\sigma \sigma')^{\frac{|\vec{x}_{j}(i)|}{2}}, &  \omega_{j-1} \in \{\sigma'\}^+,\ \omega_{j} \in \{\sigma\}^+,\text{ and } |\Vec{x}_{j}(i)| \mbox{ is even}\,,\\

        (\sigma \sigma')^{\frac{|\vec{x}_{j}(i)|-1}{2}} \sigma', &  \omega_{j-1} \in \{\sigma'\}^+,\  \omega_{j} \in \{\sigma\}^+, \text{ and } |\Vec{x}_{j}(i)|\mbox{ is odd}\,,\\

       \sigma^{|\vec{x}_{j}(i)|}, & \omega_{j-1},\omega_{j} \in \{\sigma'\}^+\,,\\

        \sigma^{|\vec{x}_{j}(i)|}, & \omega_{j-1} \in \{\sigma'\}^+, \mbox{ and } \omega_{j}\mbox{ contains both }a\mbox{ and }b\,,\\

        \sigma^{|\vec{x}_{j}(i)|}, & \omega_{j} \in \{\sigma'\}^+, \mbox{ and }  \omega_{j-1}\mbox{ contains both }a\mbox{ and }b\,,\\
        
        \sigma^{|\vec{x}_{j}(i)|}, & \omega_{j-1},\omega_{j}\mbox{ both contain both }a\mbox{ and }b, \text{ and } \sigma \in\Sigma\mbox{ arbitrary}\,,
		\end{cases}
    \]
        where in each of the first five cases of the case distinction, $\sigma$ and $\sigma'$ are chosen such that $\Sigma=\{a,b\}=\{\sigma,\sigma'\}$; in particular, in every such case, $\sigma \neq \sigma'$. 

Clearly, Condition 1 of $(p,i)$-unambiguity is fulfilled. We only need to verify Condition 2. Since $s_{i,1}$  only has the letter different from the first letter in $\omega_1$, we know that $\omega_1$ is not a substring of $s_{i,1}\circ \omega_1[1:|\omega_1|-1]$. Likewise, $s_{i,n+1}$ only having the letter different from the last letter in $\omega_n$ implies that $\omega_n$ is not a substring of $\omega_n[2:|\omega_n|]\circ s_{i,n+1}$. Fix $j\in\{2,\ldots,n\}$ and consider two cases for $s_{i,j}$. 

First, suppose $s_{i,j}\notin\{\sigma\}^+$ for any $\sigma\in\Sigma$. This means there are $\sigma,\sigma'\in\Sigma$, $\sigma\ne\sigma'$ such that  $\omega_{j-1}\in\{\sigma'\}^+$ and $\omega_j\in\{\sigma\}^+$. Since $\omega_{j-1},\omega_j$ are of length at least 3, the shape of the word $s_{i,j}$ ensures that  $\omega_j$ is not a substring of $s_{i,j}\circ \omega_j[1:|\omega_j|-1]$, and $\omega_{j-1}$ is not a substring of $\omega_{j-1}[2:|\omega_{j-1}|]\circ s_{i,j}$. 

Second, suppose $s_{i,j}\in\{\sigma\}^+$ for some $\sigma\in\Sigma$. Consider $\omega_j$. By definition of $s_{i,j}$, either $\omega_j\in\{\sigma'\}^+$ for $\sigma'\ne\sigma$, or $\omega_j$ contains both $a$ and $b$. For either option, it is obvious that $\omega_j \not\sqsubseteq s_{i,j}$. Now suppose $\omega_j$ is a substring of $s_{i,j}\circ \omega_j[1:|\omega_j|-1]$. Then there is some $t$, $1 \leq t \leq \min \{ |\omega_j|-1, |s_{i,j}| \}$, such that $\omega_j=s_{i,j}[|s_{i,j}|-t+1: |s_{i,j}|] \circ \omega_j[1:|\omega_j|-t]=\sigma^t\circ \omega_j[1:|\omega_j|-t]$. Hence $\omega_j\notin\{\sigma'\}^+$, where $\sigma'\ne\sigma$, so that $\omega_j$ contains both $\sigma$ and $\sigma'$. Let $\tau\ge 0$ so that $\omega_j$ starts with $\sigma^{\tau}\sigma'$. %Since $\omega_j$ starts with $\sigma^t$, we have $\tau\ge t$. 
Now the first position in $\omega_j$ at which $\sigma'$ occurs is $\tau+1$. So the first position in $\sigma^t \circ \omega_j[1:|\omega_j|-t]$ at which $\sigma'$ occurs is $t+\tau+1$. But then, since $\omega_j=\sigma^t\circ \omega_j[1:|\omega_j|-t]$, the first position in $\omega_j$ at which $\sigma'$ occurs is $t+\tau+1$. This implies $\tau+1=t+\tau+1$, contradicting $t>0$.
%\textcolor{OliveGreen}{SM: For $\omega_j$ to be a suffix of \( s_{i,j} \circ \omega_j[1:t] \) where $1 \leq t \leq min \{ |\omega_j|-1, |s_{i,j}| \}$, we must have \( \omega_j[1: t] = s_{i,j}[|s_{i,j}|-t, |s_{i,j}|] = \sigma^{t} \) and \( \omega_j[1: t] = \omega_j[t+1, 2t] = \dots = \omega_j[mt+1: (m+1)t] = \sigma^{t} \) in which $m$ is such that $(m+2)t \geq |\omega_j|$. In addition, this implies that $\omega_j[(m+1)t: |\omega_j|] \in {\sigma^+}$. Thus, it implies that $\omega_j \in \{ \sigma^\ast \}$ which is a contradition. Therefore,}
Therefore, $\omega_j$ is not a substring of $s_{i,j}\circ \omega_j[1:|\omega_j|-1]$. By symmetric reasoning, $\omega_{j-1}$ is not a substring of $\omega_{j-1}[2:|\omega_{j-1}|]\circ s_{i,j}$.
   \hfill $\Box$ \emph{(Claim 3.)}

\end{proof}

\section{Conclusions}
This paper provided insights into the connections between learning with characteristic sets and learning with  telltales, via the notion of positive characteristic set. Its main contribution is to provide new (non-)learnability results for erasing relational pattern languages, again with a focus on positive characteristic sets.

To show that erasing pattern languages under the reversal relation have no positive characteristic sets (for binary alphabets), we used a construction that Reidenbach~\cite{Reidenbach02} devised to show the corresponding statement for the equality relation. An interesting open question from this construction is the following: Is it true that a relational pattern $(p,R)$ has a telltale with respect to $\varepsilon\mathcal{L}_{rev}$ iff it has a telltale with respect to $\varepsilon\mathcal{L}_{eq}$? If yes, is there a constructive proof that shows how to transform telltales for either relation into telltales for the other?

Our main result implies that, for binary alphabets, linear-size positive characteristic sets exist for erasing pattern languages under the equal length relation, at least when limiting the patterns to those whose terminal blocks have length at least 3 and are not of the shape $a^nb$ or $ab^n$, and whose relations satisfy an additional structural constraint. We leave open whether or not these (admittedly rather limiting) conditions can be removed. A related open question is whether positive characteristic sets also exist in the non-erasing case when the alphabet is binary. Our proof makes use of the erasing property, and might need substantial adjustments for the non-erasing case.

Different real-world applications might give rise to different relations to study in the context of learning relational pattern languages. This paper analyzed only two of a multitude of possibilities; much future work is needed to gain a better understanding of the effects of various kinds of relations on learnability.

\bmhead{Acknowledgements} We are deeply grateful to an anonymous reviewer for their exceptionally thorough review and their very helpful comments, which helped correct multiple technical problems in an earlier version of the paper; in particular, we thank the reviewer for their suggestion of presenting the proof of Lemma~\ref{lem:ex-|alphabet|=2-length-5} in its current form. We also thank R.\ Holte for valuable feedback. %Their comments have helped improve the paper substantially. 

S.\ Zilles was supported through a Canada CIFAR AI Chair at the Alberta Machine Intelligence Institute (Amii), through a Natural Sciences and Engineering Research Council (NSERC) Canada Research Chair, through the New Frontiers in Research Fund (NFRF) under grant  no.\ NFRFE-2023-00109 and through the NSERC Discovery Grants program under grant no.\ RGPIN-2017-05336.

%
% ---- Bibliography ----
%
%
%
%\bibliography{sn-bibliography}
 \bibliography{bib}

\begin{appendices}

\section{Proof of Lemma \ref{lem:ex-|alphabet|=2-length-5}}

\setcounter{lemma}{9}
\begin{lemma}
  Let $(p,R)$ and $(p',R')$ be arbitrary patterns over $\Sigma = \{a,b\}$, where $p$ and $p'$ are incongruous. Suppose there do not exist telltale conjugates $\pi$, $\pi'$ such that 
    \begin{itemize}
        \item $\pi$ is a substring of $p$ that is not preceded or succeeded by terminal symbols in $p$, and
        \item $\pi'$ is a substring of $ p'$  that is not preceded or succeeded by terminal symbols in $p'$.
    \end{itemize}  
    Let  $\mathcal{S}_{\varepsilon,2}(p,R) \subseteq \varepsilon L_{len}(p',R') \subseteq \varepsilon L_{len}(p,R)$. Then $L_{len}(p,R)=L_{len}(p',R')$. 
\end{lemma}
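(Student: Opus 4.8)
The plan is to locate the leftmost position at which the two patterns disagree and to show that, away from the excluded telltale-conjugate configurations, every such disagreement is detected by a short substitution, so that the hypothesised chain of inclusions forces the two languages to coincide.

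First I would normalise. By Corollary~\ref{cor:nf} we may assume $(p,R)$ and $(p',R')$ are in equal-length normal form, so write $p=\vec{x}_1\omega_1\cdots\vec{x}_n\omega_n\vec{x}_{n+1}$ and $p'=\vec{x}'_1\omega'_1\cdots\vec{x}'_{n'}\omega'_{n'}\vec{x}'_{n'+1}$ with all internal variable blocks non-empty. Exactly as in the first paragraph of the proof of Lemma~\ref{prop:Sigma3_constant_block_matching}, the two inclusions force the \emph{terminal skeleton} to agree: $W:=\omega_1\cdots\omega_n=\omega'_1\cdots\omega'_{n'}$ as words over $\Sigma$. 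I would also record that the hypothesised chain yields both $\mathcal{S}_{\varepsilon,2}(p,R)\subseteq\varepsilon L_{len}(p',R')$ and, since $\varepsilon L_{len}(p',R')\subseteq\varepsilon L_{len}(p,R)$, also $\mathcal{S}_{\varepsilon,2}(p',R')\subseteq\varepsilon L_{len}(p,R)$; thus the roles of the two patterns are symmetric and I may pick whichever side is convenient.

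Since the patterns are incongruous but share $W$, the sets of positions of $W$ at which they place variable blocks differ. Let $\rho$ be the leftmost such position and, using the symmetry just noted, assume $p$ has a variable gap at $\rho$ while in $p'$ the position $\rho$ lies strictly inside a terminal block. Minimality of $\rho$ shows this block of $p'$ has the form $\omega_i u$ with $u\in\Sigma^+$, where $\omega_i$ is the terminal block of $p$ ending at $\rho$ and $\vec{x}_{i+1}$ is the non-empty variable block of $p$ immediately to its right. The core is then a case analysis on the local letter pattern around $\rho$: the final run of $\omega_i$, the leading run of $u$ (equivalently of $\omega_{i+1}$, as both begin with $W[\rho+1]$), and whether these runs have length $1$ or $\ge 2$. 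For each configuration I would pick a variable $z\in\mathit{Var}(\vec{x}_{i+1})$ and map every variable of $[z]$ to a single carefully chosen letter $\sigma$ (an $\ell_1$-substitution), or in the delicate cases to a two-letter word $\sigma\bar\sigma$ or $\bar\sigma\sigma$ (an $\ell_2$-substitution), erasing all other variables; this produces a word in $\mathcal{S}_{\varepsilon,1}(p,R)\subseteq\mathcal{S}_{\varepsilon,2}(p,R)$. Choosing $\sigma$ to disagree with the first letter of $u$ or the last letter of $\omega_i$ makes the rigid block $\omega_i u$ of $p'$ impossible to place over the inserted letters, so the resulting word lies outside $\varepsilon L_{len}(p',R')$, contradicting the inclusion chain. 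The configurations in which no such clash can be forced — those in which the gap slides freely through a run of equal letters without changing the language — are precisely the telltale-conjugate shapes of Definition~\ref{def:conjugate-pat} (with the length-$1$ conditions on the surrounding runs), which are excluded by hypothesis; where a configuration matches one of the two-block shapes of Lemmas~\ref{lem:ex-|alphabet|=2-length-1}, \ref{lem:ex-|alphabet|=2-length-2} or \ref{lem:ex-|alphabet|=2}, I would simply invoke that lemma. Hence, under the hypotheses, either a short witness refutes the inclusion chain or the two languages already coincide, and in both cases $\varepsilon L_{len}(p,R)=\varepsilon L_{len}(p',R')$.

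The main obstacle I anticipate is verifying that each constructed witness genuinely escapes the other language. Because $[z]$ may meet several variable blocks, the substitution inserts copies of $\sigma$ (or $\sigma\bar\sigma$) at more than one place, and one must rule out \emph{every} alignment of the rigid block $\omega_i u$ inside the produced word, not only the alignment at $\rho$. This is exactly what forces the distinction between $\ell_1$- and $\ell_2$-substitutions, since a lone $\sigma$ can sometimes be absorbed into an adjacent run of $p'$ whereas the alternating pattern $\sigma\bar\sigma$ cannot. Pinning down the precise boundary between the ``witnessable'' configurations and the telltale-conjugate configurations, and checking that the run-length conditions $n_i=1$ and $m_i=1$ of Definition~\ref{def:conjugate-pat} are exactly the surviving cases, is the delicate bookkeeping that makes the argument lengthy.
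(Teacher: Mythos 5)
Your overall strategy is the same as the paper's: both proofs fix the leftmost index $i$ at which the terminal-block decompositions of $p$ and $p'$ disagree (using $\omega_1\cdots\omega_n=\omega'_1\cdots\omega'_{n'}$, which follows from the inclusion chain), and then construct words in $\mathcal{S}_{\varepsilon,1}$ or $\mathcal{S}_{\varepsilon,2}$ by substituting a single group with one letter, or with $\sigma\bar\sigma$ versus $\sigma\sigma$, erasing everything else, so that the resulting word violates one of the two inclusions unless the local configuration is a telltale conjugate. So the skeleton is right.

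However, there is a genuine gap: your analysis is purely local around the single position $\rho$, and the central claim --- that the configurations admitting no short witness are \emph{precisely} the telltale-conjugate shapes of Definition~\ref{def:conjugate-pat} --- cannot be established by inspecting only the runs adjacent to $\rho$. Telltale conjugates are multi-block objects (the parameter $t$ in Definition~\ref{def:conjugate-pat} is unbounded), and in several configurations the disagreement neither produces a witness nor closes off locally; it \emph{propagates} to the next pair of blocks. This is why the paper's proof contains explicit iterative steps (e.g.\ the cases where ``the proof of the lemma proceeds iteratively, replacing $p$ and $p'$'' by truncated patterns): after matching $\omega_i=a^{k_1}$ against $\omega'_i=a^{k_1}b$, the residual comparison of $\overline{\omega}_{i+1}$ with $\overline{\omega}'_{i+1}$ can again be undecided, and one must descend through $\omega_{i+2},\omega_{i+3},\dots$ with a terminating induction whose base case is either a witness, a telltale conjugate spanning all the traversed blocks, or a contradiction with $\omega_1\cdots\omega_n=\omega'_1\cdots\omega'_{n'}$ at the right end. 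Your outline has no mechanism for this cascade, so even granting all the local witness constructions, the residual ``unwitnessable'' set you would be left with is not yet identified with the excluded telltale conjugates. Relatedly, the reductions to Lemmas~\ref{lem:ex-|alphabet|=2-length-1}, \ref{lem:ex-|alphabet|=2-length-2} and \ref{lem:ex-|alphabet|=2} only cover two-block shapes and do not close the multi-block cases either; that is the role Lemma~\ref{lem:ex-|alphabet|=2-length-4} and the iterative descent play in the paper.
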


\medskip

In the proof of Lemma~\ref{lem:ex-|alphabet|=2-length-5}, every $\ell_z$-substitution replaces only the variables in a single group with a word of length $z$; all other variables are erased (replaced by $\varepsilon$). Hence, the $\ell_z$-substitutions that we consider generate words in $\mathcal{S}_{\varepsilon,z}(p,R)$.

Note that we use $\sqsubseteq$ to denote the substring relation, and $\sqsubset$ to denote the proper substring relation.

\begin{proof}
    Let $n, n'\in\mathbb{N}$ such that 
    \[p = \vec{x}_1 \omega_1 \vec{x}_2 \omega_2 \ldots \vec{x}_n \omega_n \vec{x}_{n+1} \]
    and 
    \[ p' = \vec{x}'_1 \omega'_1 \vec{x}'_2 \omega'_2 \ldots \vec{x}'_{n'} \omega'_{n'} \vec{x}'_{n'+1}\,, \]
    where $\vec{x}_1, \vec{x}'_1, \vec{x}'_{n'+1}, \vec{x}_{n+1} \in X^\ast$ and $\vec{x}_1, , \ldots, \vec{x}_n, \vec{x}'_1, \ldots, \vec{x}'_n \in X^+$.
    
    Since $\mathcal{S}_{\varepsilon,2}(p,R) \subseteq \varepsilon L_{len}(p',R') \subseteq \varepsilon L_{len}(p,R)$, we have $\omega_1 \ldots \omega_n = \omega'_1 \ldots \omega'_{n'}$. If $n=n'$ and $\omega_i=\omega'_i$ for all $i\le n$, there is nothing to prove. Hence assume, by way of contradiction, that $i<\min\{n,n'\}$ is smallest  such that $\omega'_i \neq \omega_i$. We have either $\omega_i \sqsubset \omega'_i$ or $\omega'_i \sqsubset \omega_i$. Suppose that $\omega'_i \sqsubset \omega_i$. (The proof is completely analogous if $\omega_i \sqsubset \omega'_i$.)
    This means that $|\omega_i| > |\omega'_i|$ and $\omega'_i$ is a proper prefix of $\omega_i$. %Further, there is a non-empty prefix $\gamma$ of $\omega_{i+1}$ such that $\omega_i\gamma$ is a prefix of $\omega'_i$. 
    Our goal is now to derive a contradiction; in particular we will deduce that 
    \[\mathcal{S}_{\varepsilon, 2}(p,R) \not\subseteq \varepsilon L_{len}(p',R')\mbox{  or } \varepsilon L_{len}(p',R') \not\subseteq \varepsilon L_{len}(p,R)\,, \hspace*{2cm} (\star)\] 
    which contradicts the premises of the lemma.
    
    %Let $j\ge i$ be minimal such that $\omega_i\omega_{i+1}\ldots\omega_j=\omega'_i\omega'_{i+1}\ldots\omega'_{j'}$ for some $j'> i$. Such $j$ must exist since $\omega_1\ldots \omega_n=\omega'_1\ldots\omega'_{n'}$, $\omega_1\ldots \omega_{i-1}=\omega'_1\ldots\omega'_{i-1}$, and $\omega'_i$ is a proper prefix of $\omega_i$. By the premise of the lemma, %there are no patterns $\pi\sqsubseteq p$ and $\pi'\sqsubseteq p'$ (not preceded or succeeded by terminal symbols in $p$, resp.\ $p'$) that are telltale conjugates. Thus, the strings $\omega_i\vec{x}_{i+1}\ldots\vec{x}_j\omega_j$ and $\omega'_i\vec{x}'_{i+1}\ldots\vec{x}'_{j'}\omega'_{j'}$ are \emph{not}\/ telltale conjugates. By Definition~\ref{def:conjugate-pat}, 
    Now we assume, without loss of generality, that $\omega_i$ ends in $b$. Consider the following case distinction.
   % \textcolor{blue}{Note that if $\omega_i$ is going to be part of a subpattern of $p$ making up a telltale conjugate pair with a subpattern of $p'$, then they are of the form $\omega_i = a^{n_i}$, $\omega'_i = a^{n_i}b^{m_i}$, and for $k > i$, $\omega_k = b^{m_{k-1}} a^{n_k}$, $\omega'_k = a^{n_k} b^{m_k}$, where $n_i, n_k, m_i , m_{k-1}, m_k$ satisfy the properties as in Definition~\ref{def:conjugate-pat} (in particular, $n_j , m_j \neq 0$).}\textcolor{green}{But it could also happen that $\omega_i=b^{n_i}a^{m_i}$. Why do you think we can simply ignore this case?}

   \emph{Case 1.} $\omega'_i$ contains both $a$ and $b$.

   Note that $\omega'_i$ is a proper prefix of $\omega_i$. Let $\{\sigma,\overline{\sigma}\}=\{a,b\}$ and let $\omega'_i\sigma$ be a prefix of $\omega_i$. Then consider the $R$-substitution $\theta'$ that replaces one variable $x$ in $\vec{x}'_{i+1}$ (as well as all variables in $[x]$) with $\overline{\sigma}$ and erases all other variables. We obtain $\theta'(p')\in\varepsilon L_{len}(p',R')\setminus \varepsilon L_{len}(p,R)$, yielding $(\star)$.  

   To see that $\theta'(p')\notin \varepsilon L_{len}(p,R)$, note that any $R$-substitution generating $\theta'(p')$ from $p$ must replace each variable in $p$ with a word from $\{ \overline{\sigma} \}^\ast$. Let $k$ be such that the rightmost occurrence of $\sigma$ in $\omega'_i$ corresponds to the $k$th occurrence of the letter $\sigma$ in $\theta'(p')$, and thus the first position of $\omega'_{i+1}$ corresponds to the $(k+1)$st occurrence of the letter $\sigma$ in $\theta'(p')$.
   Let $e$ be the number of occurrences of the letter $\overline{\sigma}$ between the $k$th and $(k+1)$st occurrence of $\sigma$ in $\theta'(p')$. Note that $e \ge 1$.
   Since any $R$-substitution $\theta$ generating $\theta'(p')$ from $p$ must replace each variable in $p$ with a word from $\{ \overline{\sigma} \}^\ast$, 
   the $k$th and $(k+1)$st occurrences of the letter $\sigma$ in $\theta(p)$ are part of $\omega_i$. Note that $\omega_i \in \omega'_i \sigma \Sigma^\ast$. However, this implies that $\theta(p)$ has fewer than $e$ occurrences of the letter $\overline{\sigma}$ between the $k$th and $(k+1)$st occurrences of the letter $\sigma$. Therefore, no $R$-substitution can generate $\theta'(p')$ from $p$. Thus $\theta'(p')\in\varepsilon L_{len}(p',R')\setminus \varepsilon L_{len}(p,R)$ and $(\star)$ holds.

   \emph{Case 2.} $\omega'_i=b^n$ for some $n$, and $\omega_i$ starts with $b^k$ for some $k>n$.

    Now consider an $R'$-substitution $\theta'$ that replaces one variable $x$ in $\vec{x}'_{i+1}$ (as well as all variables in $[x]$) with $a$ and erases all other variables. Since $\theta'(\omega'_i\vec{x}'_{i+1}\omega'_{i+1})$ has a prefix of the form $b^na^tb$ for some $t>0$, we obtain $\theta'(p')\in\varepsilon L_{len}(p',R')\setminus \varepsilon L_{len}(p,R)$, yielding $(\star)$. 

    %\textcolor{cyan}{\emph{Case 3.} $\omega'_i=a^n$ for some $n$, and $\omega_i$ is of the form $a^n\Sigma^*b$.}

     \emph{Case 3.} $\omega'_i=b^n$ for some $n$, and $\omega_i$ is of the form $b^na\Sigma^*b$.
    
    Consider an $R'$-substitution $\theta'$ that replaces one variable $x$ in $\vec{x}'_{i+1}$ (as well as all variables in $[x]$) with $a$ and erases all other variables. 
    
    Let $k$ be such that the last occurrence of $b$ in $\omega'_i$ corresponds to the $k$th occurrence of the letter $b$ in $\theta'(p')$.
    Let $m$ be the number of occurrences of the letter $a$ in $\omega_i$ that are to the left of the $(n+1)$st occurrence of $b$ in $\omega_i$.  Obviously, there are more than $m$ occurrences of $a$ between the $k$th and $(k+1)$st occurrences of $b$ in $\theta'(p')$. On the other hand, since any $R$-substitution $\theta$ generating $\theta'(p')$ from $p$ must replace each variable in $p$ with a word from $\{ a \}^\ast$, the number of occurrences of $a$ between the $k$th and $(k+1)$st occurrences of $b$ in $\theta(p)$ remains $m$. Thus $\theta'(p')\in\varepsilon L_{len}(p',R')\setminus \varepsilon L_{len}(p,R)$ and $(\star)$ holds.
    %\textcolor{Blue}{SM: I am starting from your Case 3 by rewriting it as follows. I just broke it up into more cases.}

    \emph{Case 4.} $\omega'_i=a^n$ for some $n$, and $\omega_i$ is of the form $a^{n+1} \Sigma^\ast b$.
    
    Consider an $R'$-substitution $\theta'$ that replaces one variable $x$ in $\vec{x}'_{i+1}$ (as well as all variables in $[x]$) with $b$ and erases all other variables. 
    
    Let $k$ be such that the rightmost occurrence of $a$ in $\omega'_i$ corresponds to the $k$th occurrence of the letter $a$ in $\theta'(p')$.
    Since any $R$-substitution $\theta$ generating $\theta'(p')$ from $p$ must replace each variable in $p$ with a word from $\{ b \}^\ast$, and $\omega_i$ starts with $a^{n+1}$, the letter $a$ occurs immediately to the right of the $k$th occurrence of the letter $a$ in $\theta(p)$, while the letter $b$ occurs immediately to the right of the $k$th occurrence of the letter $a$ in $\theta'(p')$. Therefore, no $R$-substitution can generate $\theta'(p')$ from $p$. Thus $\theta'(p')\in\varepsilon L_{len}(p',R')\setminus \varepsilon L_{len}(p,R)$ and $(\star)$ holds. %Hence, we obtain $\theta'(p')\in\varepsilon L_{len}(p',R')\setminus \varepsilon L_{len}(p,R)$, yielding $(\star)$.}
    
    \emph{Case 5.} $\omega'_i=a^n$ for some $n$, and $\omega_i$ is of the form $a^n b^m a \Sigma^\ast b$ for some $m > 0$. 

    Consider an $R'$-substitution $\theta'$ that replaces one variable $x$ in $\vec{x}'_{i+1}$ (as well as all variables in $[x]$) with $b$ and erases all other variables.

    Let $k$ be such that the last occurrence of $a$ in $\omega'_i$ corresponds to the $k$th occurrence of the letter $a$ in $\theta'(p')$. 
    Since any $R$-substitution $\theta$ generating $\theta'(p')$ from $p$ must replace each variable in $p$ with a word from $\{ a \}^\ast$, the number of occurrences of $a$ between the $k$th and $(k+1)$st occurrences of $b$ in $\theta(p)$ remains a fixed number. 
    Obviously, there are more than $m$ occurrences of $b$ between the $k$th and $(k+1)$st occurrences of $a$ in $\theta'(p')$ than in $\theta(p)$. Thus $\theta'(p')\in\varepsilon L_{len}(p',R')\setminus \varepsilon L_{len}(p,R)$ and $(\star)$ holds.

    \emph{Case 6.} $\omega'_i=a^n$ and $\omega_i$ is of the form $a^n b^m$ for some $n, m \geq 2$. 

     Consider an $R'$-substitution $\theta'$ that replaces the first variable $x$ of $\vec{x}'_{i+1}$ by $ba$, every other variable in $[x]$ by $bb$, while erasing all remaining variables.

    Let $k,k^\ast$ be such that $\omega'_1\cdots\omega'_{i}$ contains the letter $a$ exactly $k$ times and $\omega'_1\cdots\omega'_{n'}$ contains the letter $a$ exactly $k^\ast$ times. 
    Then, the word $\theta'(p')$ contains the substring $ba$ immediately after its $k$th occurrence of $a$, and it contains $k^\ast+1$ occurrences of $a$ in total. 
    In addition, $\theta'(p')$ contains the substring $bab$ immediately after its $k$th occurrence of $a$. Now suppose $\theta$ is an $R$-substitution with $\theta(p)=\theta'(p')$. Clearly, $\theta$ replaces exactly one variable with a word containing $a$. 
    Thus the $(k+1)$st occurrence of $a$ in $\theta(p)$ lies to the right of the susbtring corresponding to the substitution for $\vec{x}_1 \omega_1 \cdots \omega_{i-1} \vec{x}_i a^n$. 
    Hence this substitution has to place $bab$ immediately to the right of its replacement for $\vec{x}_1\omega_1\cdots\vec{x}_{i}a^{n}$. Since this position in $p$ contains the string $b^{m}$ with $m \geq 2$, we get a contradiction. Thus $\theta'(p')\in\varepsilon L_{len}(p',R')\setminus \varepsilon L_{len}(p,R)$ and $(\star)$ holds.

    \emph{Case 7.} $\omega'_i=a^{n_i}$ and $\omega_i$ is of the form $a^{n_i} b^{m_i}$ for some $n_i = 1$ or $m_i = 1$. 
    
    Let $j\ge i$ be maximal such that, for all $k\in\{i,\ldots,j\}$, we have 
    $\omega'_k = b^{m_{k-1}} a^{n_k}$ and $\omega_k = a^{n_k} b^{m_k}$ where $n_k, m_k$ are as in Definition~\ref{def:conjugate-pat}, except that $m_{i-1}:=0$.

Now consider the following sub-cases:
    \begin{itemize}
        %\item \textcolor{olive}{$m_j = 0$. would give telltale conjugates which is not allowed by premise of them Lemma}

        \item[7.1.] $\omega'_{j+1} = b^{m_j+1} \Sigma^\ast $.

        Consider an $R$-substitution $\theta$ that replaces one variable $x$ in $\vec{x}_{j+1}$ (as well as all other variables in $[x]$) with $a$ and erases all other variables. 

        Let $k$ be such that the last occurrence of $b$ in $\omega_j$ corresponds to the $k$th occurrence of the letter $b$ in $\theta(p)$.
        Suppose there is an $R'$-substitution $\theta'$ that generates $\theta(p)$ from $p'$. Since such $\theta'$ must replace each variable in $p'$ with a word from $\{ a \}^\ast$ and $\omega'_{j+1}$ starts with $b^{m_j+1}$, the letter $b$ occurs immediately to the right of the $k$th occurrence of the letter $b$ in $\theta'(p')$, while the letter $a$ occurs immediately to the right of the $k$th occurrence of the letter $b$ in $\theta(p)$. This is a contradiction to $\theta(p)=\theta'(p')$.  Hence, we obtain $\theta(p)\in\mathcal{S}_{\varepsilon, 2}(p,R)\setminus \varepsilon L_{len}(p', R')$, yielding $(\star)$. 

        \item[7.2.] $\omega'_{j+1} = b^m$ for $m < m_j$.

        Consider an $R'$-substitution $\theta'$ that replaces one variable $x$ in $\vec{x}'_{j+2}$ (as well as all other variables in $[x]$) with $a$ and erases all other variables.
        
        Let $k$ be such that the last occurrence of $b$ in $\omega'_{j+1}$ corresponds to the $k$th occurrence of the letter $b$ in $\theta'(p')$.
        Suppose there is an $R$-substitution $\theta$ that generates $\theta'(p')$ from $p$. Since such $\theta$ must replace each variable in $p$ with a word from $\{ a \}^\ast$ and $\omega_{j}$ ends with $b^{m_j}$ for $m_j > m$, the letter $b$ occurs immediately to the right of the $k$th occurrence of the letter $b$ in $\theta(p)$, while the letter $a$ occurs immediately to the right of the $k$th occurrence of the letter $b$ in $\theta'(p')$. This is a contradiction to $\theta(p)=\theta'(p')$.  Hence, we obtain $\theta(p)\in \varepsilon L_{len}(p', R') \setminus \varepsilon L_{len} (p, R)$, yielding $(\star)$.

        \item[7.3.] $\omega'_{j+1} = b^{m_j}$. 
        
        Then $\omega_i \vec{x}_{i+1} \omega_{i+1} \ldots \vec{x}_j \omega_j$ and $\omega'_i \vec{x}'_{i+1} \omega'_{i+1} \ldots \vec{x}'_{j+1} \omega'_{j+1}$ are telltale conjugate subpatterns of $p$, resp.\ $p'$. This contradicts the premise of the Lemma.

        \item[7.4.] $\omega'_{j+1} = b^{m_j} a^{n} b \Sigma^\ast$ for some $n >0$.

        Consider an $R$-substitution $\theta$ that replaces one variable $x$ in $\vec{x}_{j+1}$ (as well as all variables in $[x]$) with $a$ and erases all other variables.
        
        In particular, let $k$ be such that the last occurrence of $b$ in $\omega_j$ corresponds to the $k$th occurrence of $b$ in $\theta(p)$. 
        Obviously, there are more than $n$ occurrences of $a$ between the $k$th and $(k+1)$st occurrences of $b$ in $\theta(p)$. On the other hand, since any $R$-substitution $\theta'$ generating $\theta(p)$ from $p'$ must replace each variable in $p'$ with a word from $\{ a \}^\ast$, the number of occurrences of $a$ between the $k$th and $(k+1)$st occurrences of $b$ in $\theta'(p')$ remains $n$. Thus $\theta(p)\in\mathcal{S}_{\varepsilon, 2}(p,R)\setminus \varepsilon L_{len}(p',R')$ and $(\star)$ holds.

        \item[7.5.] $\omega'_{j+1} = b^{m_j} a^{n}$ for some $n > 0$ and $\omega_{j+1}$ starts with $a^t$ for some $t > n$.

        Consider an $R'$-substitution $\theta'$ that replaces one variable $x$ in $\vec{x}'_{j+2}$ (as well as all variables in $[x]$) with $b$ and erases all other variables. 

        Let $k$ be such that the last occurrence of $a$ in $\omega'_{j+1}$ corresponds to the $k$th occurrence of the letter $a$ in $\theta'(p')$.
        Suppose there is an $R$-substitution $\theta$ that generates $\theta'(p')$ from $p$. Since such $\theta$ must replace each variable in $p$ with a word from $\{ b \}^\ast$ and $\omega_{j+1}$ starts with $a^t$ for $t > n$, the letter $a$ occurs immediately to the right of the $k$th occurrence of the letter $a$ in $\theta(p)$, while the letter $b$ occurs immediately to the right of the $k$th occurrence of the letter $a$ in $\theta'(p')$. This is a contradiction to $\theta(p)=\theta'(p')$.  Hence, we obtain $\theta'(p')\in \varepsilon L_{len}(p',R') \setminus \varepsilon L_{len}(p, R)$, yielding $(\star)$.

        \item[7.6.] $\omega'_{j+1} = b^{m_j} a^{n}$ for some $n$ and $\omega_{j+1} = a^t$ for some $t < n$.

        Consider an $R$-substitution $\theta$ that replaces one variable $x$ in $\vec{x}_{j+2}$ (as well as all variables in $[x]$) with $b$ and erases all other variables.

        Let $k$ be such that the last occurrence of $a$ in $\omega_{j+1}$ corresponds to the $k$th occurrence of the letter $a$ in $\theta(p)$.
        Suppose there is an $R'$-substitution $\theta'$ that generates $\theta(p)$ from $p'$. Since such $\theta'$ must replace each variable in $p'$ with a word from $\{ b \}^\ast$ and $\omega'_{j+1} = b^{m_j+1} a^n$ for $n >  t$, the letter $a$ occurs immediately to the right of the $k$th occurrence of the letter $a$ in $\theta'(p')$, while the letter $b$ occurs immediately to the right of the $k$th occurrence of the letter $a$ in $\theta(p)$. This is a contradiction to $\theta(p)=\theta'(p')$.  Hence, we obtain $\theta(p)\in\mathcal{S}_{\varepsilon, 2}(p,R)\setminus \varepsilon L_{len}(p', R')$, yielding $(\star)$. 
        
        \item[7.7.] $\omega'_{j+1} = b^{m_j} a^{n}$ and $\omega_{j+1} = a^n$ where $n, m_j \geq 2$.

        Consider an $R$-substitution $\theta$ that replaces the first variable $x$ in $\vec{x}_{j+1}$ with $ab$, all other variables in $[x]$ with $aa$, and erases all other variables.

        Let $k,k^\ast$ be such that $\omega_1 \cdots \omega_{j}$ contains the letter $b$ exactly $k$ times and $\omega_1 \cdots \omega_{n}$ contains the letter $b$ exactly $k^\ast$ times. 
        Then, the word $\theta(p)$ contains the substring $aba$ immediately after its $k$th occurrence of $b$, and it contains $k^\ast+1$ occurrences of $b$ in total. 
       % In addition, $\theta(p)$ contains the substring $aba$ immediately after its $k$th occurrence of $b$.
       Suppose $\theta'$ is an $R'$-substitution that generates $\theta(p)$ from $p'$. Then $\theta'$ has to replace exactly one variable with a word containing $b$. 
        Thus, the $(k+1)$st occurrence of $b$ must be to the right of the substitution for $\vec{x}'_1 \omega'_1 \cdots \omega'_{j} \vec{x}'_{j+1}$. 
        Hence, this substitution must place $aba$ immediately to the right of its replacement for $\vec{x}'_1 \omega'_1 \cdots \omega'_{j} \vec{x}'_{j+1} b^{m_j}$. Since this position in $p'$ contains the string $a^{n}$ with $n \geq 2$, we get a contradiction. Thus, we obtain $\theta(p)\in\mathcal{S}_{\varepsilon, 2}(p,R)\setminus \varepsilon L_{len}(p', R')$, yielding $(\star)$.

        \item[7.8.] $\omega'_{j+1} = b^{m_j} a^{n}$ and $\omega_{j+1} = a^n$ where $n = 1$ or $m_j = 1$. 
        
        Then $\omega_i \vec{x}_{i+1} \omega_{i+1} \ldots \vec{x}_{j+1} \omega_{j+1}$ and $\omega'_i \vec{x}'_{i+1} \omega'_{i+1} \ldots \vec{x}'_{j+1} \omega'_{j+1}$ are telltale conjugate subpatterns of $p$, resp.\ $p'$. This contradicts the premise of the Lemma.

         \item[7.9.] $\omega'_{j+1} = b^{m_j} a^{n}$ and $\omega_{j+1} = a^n b^m$  for some $n, m$ with $m, n \geq 2$.

         Consider an $R'$-substitution $\theta'$ that replaces the first variable $x$ in $\vec{x}'_{j+2}$ with $ba$, all other variables in $[x]$ with $bb$, and erases all other variables.

        Let $k,k^\ast$ be such that $\omega'_1 \cdots \omega'_{j+1}$ contains the letter $a$ exactly $k$ times and $\omega'_1 \cdots \omega'_{n'}$ contains the letter $a$ exactly $k^\ast$ times. 
        Then, the word $\theta'(p')$ contains the substring $bab$ immediately after its $k$th occurrence of $a$, and it contains $k^\ast+1$ occurrences of $a$ in total. 
        %In addition, $\theta'(p')$ contains the substring $bab$ immediately after its $k$th occurrence of $a$. 
        Suppose $\theta$ is an $R$-substitution that generates $\theta'(p')$ from $p$. Then $\theta$ has to replace exactly one variable with a word containing $a$. 
        Thus, the $(k+1)$st occurrence of $a$ must be to the right of the substitution for $\vec{x}_1 \omega_1 \cdots \vec{x}_{j+1} \omega_{j+1}$. 
        Hence, this substitution must place $bab$ immediately to the right of its replacement for $\vec{x}_1 \omega_1 \cdots \vec{x}_{j+1} a^n$. Since this position in $p$ contains the string $b^{m}$ with $m \geq 2$, we get a contradiction. Thus, we obtain $\theta'(p')\in \varepsilon L_{len}(p',R') \setminus \varepsilon L_{len}(p, R)$, yielding $(\star)$.

         \item[7.10.] $\omega'_{j+1} = b^{m_j} a^{n}$ and $\omega_{j+1} = a^n b^m$ for some $n, m$ with $m = 1$ or  $n = 1$. 
         
         This is in contradiction with the premise of Case 7. In particular, it contradicts $j$ being the maximal index such that $\omega_j = a^{n_j} b^{m_j}$ and $\omega'_j = b^{m_{j-1}} a^{n_j}$ where $n_j$, $m_j$, and $m_{j-1}$ are as in Definition \ref{def:conjugate-pat}. Hence, this case cannot occur.

         \item[7.11.] $\omega'_{j+1} = b^{m_j} a^{n}$ for some $n$ and $\omega_{j+1} \in a^n b^m a \Sigma^\ast$ for some $m\ge 1$.

         Consider an $R'$-substitution $\theta'$ that replaces one variable $x$ in $\vec{x}'_{j+2}$  (as well as all variables in $[x]$) with $b$ and erases all other variables.

        Let $k$ be such that the last occurrence of $a$ in $\omega'_{j+1}$ corresponds to the $k$th occurrence of the letter $a$ in $\theta'(p')$. 
        Obviously, there are more than $m$ occurrences of $b$ between the $k$th and $(k+1)$st occurrences of $a$ in $\theta'(p')$. On the other hand, since any $R$-substitution $\theta$ generating $\theta'(p')$ from $p$ must replace each variable in $p$ with a word from $\{ b \}^\ast$, the number of occurrences of $b$ between the $k$th and $(k+1)$st occurrences of $a$ in $\theta(p)$ remains $m$ (note $\omega_{j+1} \in a^n b^m a \Sigma^\ast$). Thus $\theta'(p')\in\varepsilon L_{len}(p',R')\setminus \varepsilon L_{len}(p,R)$ and $(\star)$ holds.
    \end{itemize}
    \end{proof}

\end{appendices}

\end{document}